\documentclass[12pt]{article}

\usepackage{amsmath,amsfonts, latexsym, amssymb, amsthm, mathtools}
\usepackage{natbib, setspace, lscape, graphicx, mathabx}

\usepackage[inner=2.5cm,outer=2.5cm,bottom=2.5cm]{geometry}

\usepackage{caption} 
\usepackage{subcaption}
\usepackage{psfrag,epsf,epstopdf}
\usepackage{enumerate}
\usepackage{enumitem}
\usepackage{tikz}
\usepackage{siunitx}
\usepackage{algorithm}
\usepackage{bm, bbm}    
\usepackage{threeparttable}   
\usepackage{tcolorbox}    
\usepackage{color,float}     
\usepackage{multirow}   
\usepackage[toc,page]{appendix}
\usepackage{hyperref}
\usepackage{hyperxmp}
\usepackage{orcidlink}

\DeclareMathOperator{\var}{var}               

\newcommand\independent{\protect\mathpalette{\protect\independenT}{\perp}}
\def\independenT#1#2{\mathrel{\rlap{$#1#2$}\mkern2mu{#1#2}}}

\usepackage{url}

\newtheorem{theorem}{Theorem}

\newtheorem{proposition}{Proposition} 
\newtheorem{assumption}{Assumption}

\renewcommand{\theassumption}{A\arabic{assumption}}
              
\def\1{1\!{\rm l}}

\def \R {\mathbb{R}}
\def \Rex {\overline{\mathbb{R}}}
\def \E {\mathbb{E}}

\def \X {\mathcal{X}}

\definecolor{darkblue}{rgb}{0.0, 0.0, 0.55}

\hypersetup{
  unicode     = true,
  pdftitle    = {Rectified Linear Unit Regression},
  pdfauthor   = {Tatsushi Oka},
  pdfsubject  = {ReLU regression and distributional treatment effects},
  pdfkeywords = {ReLU regression, 
                  distributional regression,
                  quantile treatment effects, 
                  convex duality},
  pdfcontacturl = {https://orcid.org/0000-0002-8868-6136},
  colorlinks   = true, 
  urlcolor     = darkblue, 
  linkcolor    = darkblue, 
  citecolor    = darkblue 
}

\graphicspath{{./figure/}}

\title{
  \textbf{Rectified Linear Unit Regression}\footnote{
    First version: June 25, 2024. This version: \today. 
    The author is grateful to 
    St\'ephane Bonhomme,
    Xiaohong Chen, 
    Iv\'an Fern\'andez-Val, 
    Bruce Hansen, 
    Chih-Sheng Hsieh, 
    Hiroyuki Kasahara, 
    Tetsuya Kaji, 
    Tong Li, 
    Chu-An Liu, 
    Ryo Okui,
    Taisuke Otsu, 
    Artem Prokhorov, 
    Shosei Sakaguchi,  
    Yinchu Zhu 
    for their valuable comments, which greatly improved this manuscript. 
    The author also thanks participants at
    SWET2024, 
    The World Congress of the Econometric Society 2025, 
    CRETA Seminar at National Taiwan University, Kansai Econometric Workshop 2026,
    the IAAE Annual Conference 2026, 
    the 2026 Joint Statistical Meeting for their suggestions.  
    The author gratefully acknowledges financial support from
    KDDI Research, Inc. 
  }
}
\author{Tatsushi Oka\footnote{ 
  Department of Economics, Keio University. 
  Email: \href{mailto:tatsushi.oka@keio.jp}{tatsushi.oka@keio.jp} 
  \orcidlink{0000-0002-8868-6136}
  }  
  }

\date{}

\begin{document}
 
\maketitle

\begin{abstract}

This paper develops a regression framework for analyzing integrated conditional distribution and quantile functions. 
The proposed method, termed rectified linear unit (ReLU) regression, projects the ReLU-transformed outcome onto covariates and admits a closed-form estimator. 
Its population regression function is the best linear approximation to the integrated conditional distribution function of the outcome,
and  
the corresponding convex conjugate, obtained via the Legendre-Fenchel transform, approximates the integrated conditional quantile function.
Both the regression and its conjugate require only mild distributional assumptions and accommodate non-continuous outcomes. We establish the asymptotic distribution of the estimator and develop inference for the conjugate functional via the delta method for Hadamard directionally differentiable maps.
Building on these results, we establish identification and inference for average quantile treatment effects over arbitrary subintervals of probability levels.
This broadens the set of distributional parameters available to empirical work.

\end{abstract}

\vspace{0.5cm}
\noindent KEYWORDS: Rectified linear unit, Distributional treatment
effect, Integrated quantile function, Legendre-Fenchel transform,
Convex duality.

\noindent JEL classification codes: C12, C14, C21, I13  

\setcounter{section}{0}

\section{Introduction}

Regression is a cornerstone of data analysis. While mean regression has long been the standard, modern approaches such as quantile regression \citep{koenker1978regression}, expectile regression \citep{newey1987asymmetric}, and distributional regression \citep{foresi1995conditional, chernozhukov2013inference} offer a more comprehensive view of the underlying process. 
These methods capture distributional heterogeneity that the conditional mean alone overlooks.
Recovering this heterogeneity is often what matters most for causal inference and policy evaluation.

This paper introduces an approach termed rectified linear unit (ReLU) regression, which regresses the ReLU transform $\max\{0, y - Y\}$ of the outcome $Y$ on covariates at each threshold $y$.
The ReLU transformation was first explored in neural computation research \citep{hahnloser2000digital} and later gained prominence in deep learning through the work of \cite{nair2010rectified} on restricted Boltzmann machines. 
A primary advantage of this approach is that ReLU regression captures distributional features under mild conditions while admitting a closed-form estimator. 
Through the Legendre-Fenchel transform, the regression yields 
a convex-duality representation that approximates the integrated conditional quantile function.
This convex structure allows for a direct characterization of distributional treatment effects.

The ReLU regression model offers three distinct advantages in analyzing distributional features. 
First, the model admits a closed-form estimator by formulating the estimation as an $L_2$ minimum distance problem with the ReLU-transformed dependent variable.
Quantile regression \citep{koenker1978regression}, expectile regression \citep{newey1987asymmetric}, and distributional regression \citep{foresi1995conditional}, by contrast, typically require iterative optimization. The closed-form structure simplifies both computation and the derivation of asymptotic properties.

Next, the ReLU regression model imposes weaker conditions on the data-generating process, in that only finite moments and standard rank conditions are required.
Conversely, standard asymptotic theory in quantile regression hinges on a strictly positive conditional density at the quantile of interest together with additional smoothness conditions \citep[see, e.g.,][]{koenker2005quantile,
angrist2006quantile}, and the limiting distribution may become non-standard in the absence of these regularity requirements \citep{smirnov1952limit, knight2002limiting}.
The distinction matters most for non-continuous outcomes, including discrete, count, and mixed discrete-continuous variables.
In such settings, \cite{machado2005quantiles} introduce a jittering-based estimation method, and \cite{chernozhukov2020generic} develop generic distributional inference. ReLU regression instead targets an integrated functional, which smooths over the jumps and flat regions of non-continuous distributions.

Finally, our methodology supports a unified treatment of
distributional treatment effects through convex duality. 
By approximating the integrated conditional distribution function across arbitrary locations, the ReLU regression model provides a natural
complement to the conditional value-at-risk paradigm of
\cite{rockafellar2000optimization, rockafellar2002conditional}.
Applying the Legendre-Fenchel transform to the fitted regression functions approximates the integrated conditional quantile function, which connects ReLU regression to the stochastic dominance principles of \cite{ogryczak2002dual} and the coherent risk theory of
\cite{acerbi2002coherence}.
Under a saturated specification, the projection and its conjugate coincide with the integrated conditional distribution and quantile functions.
The resulting framework quantifies average quantile treatment effects (AQTE)  over arbitrary subintervals of quantile levels
and includes both the average treatment effect (ATE) and the
quantile treatment effect (QTE) as special cases, building on
\cite{firpo2007efficient} and \cite{chernozhukov2013inference}.

\subsection{Related Literature}

Our framework connects to several strands of literature beyond
those already mentioned.

First, our paper contributes to a growing literature on
heterogeneous treatment effects in econometrics and statistics.
The quantile treatment effect (QTE) was first introduced by
\cite{doksum1974empirical} and \cite{lehmann1975nonparametrics} as
a measure of treatment effects under unobserved heterogeneity.
Since then, a large body of work has developed identification,
estimation, and inference methods for distributional and quantile
treatment effects, including \cite{heckman1997making},
\cite{athey2006identification}, \cite{bitler2006mean},
\cite{djebbari2008heterogeneous}, \cite{donald2014estimation},
\cite{callaway2018quantile, callaway2019quantile}, and
\cite{firpo2007efficient, firpo2009unconditional}, among others.
For discrete outcomes, \cite{chernozhukov2020generic} develop
generic distributional inference for treatment effects. Our
framework instead routes inference through the convex-duality
structure of the integrated conditional distribution function and
addresses non-continuous outcomes without smoothing or jittering
the underlying distribution. To our knowledge, however, the
average quantile treatment effect over an arbitrary subinterval
has not been studied as a unified parameter that nests both the
average treatment effect and the quantile treatment effect,
and remains point-identified for non-continuous outcomes.

The second strand of literature studies the intersection of convex
analysis and risk measurement, in which expected shortfall, also
known as conditional value-at-risk (CVaR) or superquantiles, links
risk measure theory and stochastic optimization. The foundations
of coherent risk measures were established by
\cite{artzner1999coherent}, followed by the mathematical framework
of \cite{rockafellar2000optimization, rockafellar2002conditional}.
\cite{pflug2000some} developed convexity properties and
computational approaches, while \cite{acerbi2002coherence} 
proved the coherence of expected shortfall.
\cite{follmer2016stochastic} provide  
a comprehensive treatment of risk measures, including convex and monetary risk measures. The
duality between conjugate functions and integrated quantile
functions for unconditional distributions, established by
\cite{ogryczak2002dual} and \cite{rockafellar2000optimization},
motivates our extension to the conditional case. 
The standard two-step approach for estimating the integrated quantile function first estimates the quantile function pointwise and then integrates it \citep[see, e.g.,][]{chen2025estimation},
which typically requires the outcome to admit a conditional density.
\citet{kaji2019asymptotic} studies an extension of $L$-statistics in which transformed empirical quantile functions are
integrated against random weight functions.
We instead estimate the best linear approximation to the integrated conditional distribution function directly by ReLU regression and apply the Legendre-Fenchel transform, thereby accommodating non-continuous outcomes.

The third strand of literature concerns inference for functionals
that are not fully Hadamard differentiable. The standard delta
method requires full Hadamard differentiability of the functional,
a condition that fails for the Legendre-Fenchel transform.
Foundational results on Hadamard directional differentiability are
due to \cite{shapiro1990} and \cite{dumbgen1993}.
\cite{fang2019inference} develop the delta method for Hadamard
directionally differentiable maps and characterize the
inconsistency of the standard nonparametric bootstrap in this
setting. Inference for shape-constrained or
directionally differentiable functionals also appears in
\cite{delgado2012testing}, \cite{beare2015transforming},
\cite{chernozhukov2010estimation}, and \cite{chen2021shape}, among
others. We apply this machinery to the conjugate functional in the
ReLU regression framework.

\subsection{Outline}

Section \ref{sec:relu} introduces the ReLU regression model and
its population interpretation through convex duality. Section
\ref{sec:estimation} develops estimation and the asymptotic
distribution of the proposed estimator, with inference for the conjugate
functional. Section \ref{sec:treatment} applies the framework to causal inference.
Section \ref{sec:application} reports an empirical application to
the Oregon Health Insurance Experiment, and Section
\ref{sec:conclusion} concludes. 

\section{ReLU Regression}
\label{sec:relu}

In this section, we introduce the ReLU regression model and provide its distributional interpretation through convex duality.

\textit{Notation}.
Throughout the paper, 
we use $\|\cdot\|$ to denote the Euclidean norm for vectors and the spectral norm for matrices, and the superscript $\top$ denotes transposition. For an arbitrary index set $T$, $\ell^{\infty}(T)$ denotes the space of uniformly bounded real-valued functions on $T$, and we write $X_{n}\rightsquigarrow X$ for weak convergence of $X_{n}$ to $X$ in $\ell^{\infty}(T)$. We also write $\1\{A\}$ for the indicator function, which equals $1$ if the event $A$ occurs and $0$ otherwise. 
We denote the real line by $\R$
and define the half-line
$\R_{+}:=[0,\infty)$ and
the extended real line
$\Rex:=[-\infty,+\infty]$.
For a proper convex function $f:\R\to\R\cup\{+\infty\}$, the subdifferential of $f$ at $x_{0}$ is defined by 
$\partial f(x_{0}):=
  \{\delta\in\R: 
  f(x)\geq f(x_{0})+\delta(x-x_{0}) \text{ for all } x \in\R \}
  $.

\subsection{Data and ReLU Regression Model}

Let $Y$ be a scalar outcome taking values in $\R$, and let
$X \in \X \subseteq \R^{p}$ be a vector of covariates
containing a constant term. 
The conditional distribution function of $Y$ given $X=x$
is defined as 
$F_{Y | X}(y | x) := \Pr\{Y \le y | X=x\}$
for
$y \in \R$,
and 
the corresponding conditional quantile function is  
$F_{Y | X}^{-1}(u | x) := \inf\{y \in \R : F_{Y | X}(y | x) \ge u\}$
for $u \in (0,1)$.

We introduce the ReLU regression model, 
whose key building block
is the ReLU transformation 
$(a)_{+} := \max\{0,a\}$ for $a \in \R$.
For each fixed $y \in \R$, 
we treat the ReLU-transformed outcome  
$(y-Y)_{+}$ 
as the dependent variable in the regression  
\begin{equation}
  \label{eq:relu-reg}
  (y - Y)_{+} = X^{\top} \beta_{0}(y) + \epsilon(y),
\end{equation}
where the coefficient vector $\beta_{0}(y) \in \R^{p}$ and the error $\epsilon(y)$ are
both indexed by $y$.
We define $\beta_{0}(y)$ as a minimizer over $\beta\in\R^{p}$ of the population least-squares criterion
\begin{equation*}
  Q(\beta; y) := 
  \E\Big[ \big( (y - Y)_{+} - X^{\top} \beta \big)^{2} \Big].
\end{equation*}
Throughout the paper, $X$ denotes a
finite-dimensional regressor vector, which may include transformations
such as polynomials, splines, and interactions.

We impose the following regularity conditions on the joint
distribution of $(X, Y)$.

\vspace{0.2cm}
\begin{assumption}
  \label{assump:dgp}
  The data-generating process satisfies:
  \begin{enumerate}[label=(\alph*), noitemsep, topsep=0pt]
    \item[\textnormal{(a)}]
    $\E[Y^2] < \infty$ and $\E[\|X\|^{2}] < \infty$.
    \item[\textnormal{(b)}]
    The matrix $\E[X X^{\top}]$ is positive definite
    and 
    the first element of $X$ equals one.
  \end{enumerate}
\end{assumption}
\vspace{0.2cm}

Assumption \ref{assump:dgp} is the standard regularity condition for
the population $L^{2}$ problem. 
Assumption \ref{assump:dgp}(a) ensures that the
ReLU-transformed outcome $(y-Y)_{+}$ and any
linear combination $X^{\top}\beta$ are square integrable for each fixed $y \in \R$.
Assumption \ref{assump:dgp}(b) imposes the full-rank condition on the Gram matrix of $X$
and 
requires the regressor to include a constant.

Under Assumption \ref{assump:dgp}, the criterion
$Q(\beta;y)$ is finite and strictly convex in $\beta$ for each
$y\in\R$. Hence, $\beta_{0}(y)$ is uniquely defined and admits the
closed-form expression
\begin{equation*}
  \beta_{0}(y)
  =
  \big(\E[XX^{\top}]\big)^{-1}
  \E[X(y-Y)_{+}].
\end{equation*}
The corresponding first-order condition is
$\E[X\epsilon(y)]=0$ for every $y\in\R$.
The linear predictor $X^{\top}\beta_{0}(y)$ is the best linear
predictor of $(y - Y)_{+}$ given $X$ in the $L^{2}$ sense. 
This interpretation
does not require the conditional expectation
$\E[(y-Y)_{+}| X]$ to be linear in $X$. When it is nonlinear,
$\beta_{0}(y)$ has the usual pseudo-true interpretation as a population
projection coefficient. This interpretation is analogous to those of
linear mean regression \citep{white1980using} and linear quantile
regression \citep{angrist2006quantile} under misspecification.
Proposition A1 in Appendix A states this result
formally.

\subsection{Convex Duality and Distributional Interpretation}

We next develop the distributional interpretation of ReLU regression
through convex duality. We first establish the duality between the
integrated conditional distribution and quantile functions and then
relate these functions to ReLU regression.

For any $y \in \R$, the integrated conditional distribution function
of $Y$
given $X=x$
is defined as
\begin{equation*}
  G_{Y | X}(y | x) := \int_{-\infty}^{y} F_{Y | X}(s | x)\, ds.
\end{equation*}
The unconditional analogue has been studied in the risk-measurement
literature \citep{follmer2016stochastic}. The map
$y \mapsto G_{Y | X}(y | x)$ is convex for each fixed $x \in \X$
since the integrand $F_{Y | X}(\cdot | x)$ is nondecreasing.
Convexity allows us to apply the Legendre-Fenchel transform. For
each $\tau \in [0,1]$ and $x \in \X$, the conjugate function is
defined as
\begin{equation*}
  G_{Y | X}^{\ast}(\tau | x)
  :=
  \sup_{y \in \R}\big\{ \tau y - G_{Y | X}(y | x) \big\}.
\end{equation*}

The next proposition characterizes the convexity and the
subdifferential structure of $G_{Y | X}(\cdot | x)$ and its conjugate
$G_{Y | X}^{\ast}(\cdot | x)$.
We write 
$F_{Y | X}(y {\scalebox{0.7}{$-$}} | x) := \lim_{s \nearrow y}
F_{Y | X}(s | x)$ for the left limit of the conditional distribution
function and $F_{Y | X}^{-1}(\tau{\scalebox{0.7}{$+$}}
| x) := \lim_{u \searrow \tau} F_{Y | X}^{-1}(u | x)$ for the
right limit of the conditional quantile function.

\vspace{0.5cm}
\begin{proposition}
  \label{proposition:L2-min-1}
  Suppose that $\E[|Y|] < \infty$. 
  Then, for almost every $x \in \X$,
  the following statements hold.
  \begin{enumerate}[label=(\alph*), itemsep=0.2em]
    \item The map $y \mapsto G_{Y | X}(y | x)$ is convex, and for
      any $y \in \R$,
      \begin{equation*}
        G_{Y | X}(y | x) = \E[(y-Y)_{+} | X=x]
      \end{equation*}
      Moreover, $\partial G_{Y | X}(y | x) =
      [F_{Y | X}(y {\scalebox{0.7}{$-$}} | x),\, F_{Y | X}(y | x)]$ for any
      $y \in \R$.
    \item The conjugate $\tau \mapsto G_{Y | X}^{\ast}(\tau | x)$
      is convex, and for any $\tau \in [0,1]$,
      \begin{equation*}
        G_{Y | X}^{\ast}(\tau | x)
        =
        \int_{0}^{\tau} F_{Y | X}^{-1}(u | x)\, du.
      \end{equation*}
      Correspondingly, $\partial G_{Y | X}^{\ast}(\tau | x) =
      [F_{Y | X}^{-1}(\tau | x),\,
      F_{Y | X}^{-1}(\tau{\scalebox{0.7}{$+$}} | x)]$
      for any $\tau \in (0,1)$.
  \end{enumerate}
\end{proposition}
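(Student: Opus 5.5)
We fix a version of the regular conditional distribution of $Y$ given $X$. Since $\E[|Y|]<\infty$, we have $\E[|Y|\mid X=x]<\infty$ for almost every $x$. All statements are pointwise in such an $x$, so we suppress $x$ and work with a single distribution function $F$ with finite first moment and quantile function $F^{-1}$. The plan is to establish (a) directly, then derive (b) from (a) by a conjugate computation that uses the subdifferential characterization.

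For (a), we write $(y-Y)_{+}=\int_{-\infty}^{y}\1\{Y\le s\}\,ds$ and apply Tonelli, which gives $\E[(y-Y)_{+}]=\int_{-\infty}^{y}F(s)\,ds=G(y)$. The right side is finite because $\E[(y-Y)_+]\le |y|+\E|Y|$. Convexity follows either from the integrand $F$ being nondecreasing or from $G$ being an expectation of convex functions $y\mapsto (y-Y)_+$. For the subdifferential of a finite convex function on $\R$, we use $\partial G(y)=[G'_{-}(y),G'_{+}(y)]$. Since $F$ is right-continuous with left limits, the one-sided derivatives of $\int_{-\infty}^{y}F$ are $F(y-)$ and $F(y)$, obtained by bounding the difference quotients $h^{-1}\int_{y}^{y+h}F$ between the values of $F$ at the endpoints.

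For (b), convexity of $G^{\ast}$ is automatic as a supremum of affine functions of $\tau$. For the formula, fix $\tau\in(0,1)$. The concave function $y\mapsto \tau y-G(y)$ is maximized exactly where $\tau\in\partial G(y)=[F(y-),F(y)]$. The point $q=F^{-1}(\tau)$ satisfies $F(q-)\le\tau\le F(q)$ by the definition of the left-continuous inverse, so $G^{\ast}(\tau)=\tau q-\E[(q-Y)_{+}]$. It remains to verify the identity $\tau q-\E[(q-Y)_{+}]=\int_0^{\tau}F^{-1}(u)\,du$. We do this by the quantile representation $Y\overset{d}{=}F^{-1}(U)$ with $U$ uniform on $(0,1)$, which gives $\E[(q-Y)_+]=\int_0^1(q-F^{-1}(u))_+\,du$. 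Since $F^{-1}(u)\le q$ iff $u\le F(q)$, this equals $\int_0^{F(q)}(q-F^{-1}(u))\,du$. On $(\tau,F(q)]$ we have $F^{-1}(u)=q$, because $u>\tau$ forces $F^{-1}(u)\ge q$ and $u\le F(q)$ forces $F^{-1}(u)\le q$. So the integral reduces to $\tau q-\int_0^{\tau}F^{-1}(u)\,du$, and the identity follows.

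At the endpoints, for $\tau=0$ we have $\sup_y\{-G(y)\}=0$, attained as $y\to-\infty$, since $G(y)\le\E[(y-Y)_+]\to0$ by dominated convergence. For $\tau=1$, the function $y-G(y)=\E[\min(y,Y)]$ increases to $\E[Y]=\int_0^1F^{-1}$ by monotone convergence. Finally, for the subdifferential of $G^{\ast}$ on $(0,1)$, we differentiate $\int_0^{\tau}F^{-1}$ one-sidedly. Because $F^{-1}$ is nondecreasing and left-continuous, the left derivative is $F^{-1}(\tau)$ and the right derivative is $F^{-1}(\tau+)$. Alternatively, Fenchel–Young gives $y\in\partial G^{\ast}(\tau)\iff\tau\in\partial G(y)$, and the set of such $y$ is $[F^{-1}(\tau),F^{-1}(\tau+)]$. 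I expect the main care point to be the quantile-representation step with atoms and flat regions, namely handling the set where $F^{-1}(u)=q$ correctly. The a.e.-$x$ measurability bookkeeping is routine.
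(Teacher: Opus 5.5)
Your proof is correct. Part (a) matches the paper: Tonelli applied to $(y-Y)_{+}=\int_{-\infty}^{y}\1\{Y\le s\}\,ds$, the one-sided derivatives $F(y-)$ and $F(y)$, and the interval characterization of the subdifferential. Part (b), however, takes a genuinely different route.

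The paper works with subdifferentials first. It invokes the inversion $\tau\in\partial G(y)\iff y\in\partial G^{\ast}(\tau)$ (Theorem 23.5 of Rockafellar) to obtain $\partial G^{\ast}(\tau)=[F^{-1}(\tau),F^{-1}(\tau+)]$. It then concludes that $G^{\ast}$ is a primitive on $(0,1)$ of its nondecreasing subgradient selection $F^{-1}$, fixes the additive constant through $G^{\ast}(0)=0$, and treats $\tau=1$ separately via $\E[\min\{y,Y\}]$ and dominated convergence.

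You instead evaluate $G^{\ast}(\tau)$ pointwise. By the definition of the subdifferential, $q=F^{-1}(\tau)$ is an exact maximizer of $y\mapsto\tau y-G(y)$. The quantile representation, the equivalence $F^{-1}(u)\le q\iff u\le F(q)$, and the fact that $F^{-1}\equiv q$ on $(\tau,F(q)]$ then turn $\tau q-\E[(q-Y)_{+}]$ into $\int_{0}^{\tau}F^{-1}(u)\,du$. You read off the subdifferential of $G^{\ast}$ afterwards from the one-sided derivatives of this explicit integral.

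Your route is more elementary and self-contained. It does not need the fact that a convex function is the integral of any subgradient selection. Nor does it need continuity of the closed convex function $G^{\ast}$ at $\tau=0$, which the paper uses only implicitly when it transfers the constant from the endpoint to $(0,1)$. It also shows that the supremum is attained at $F^{-1}(\tau)$ for interior $\tau$, which handles atoms and flat regions transparently. The paper's route is more structural: the subdifferential identity comes directly from conjugacy theory and the integral formula follows from it, at the cost of leaving those two small steps implicit.
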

\vspace{0.5cm}

Proposition \ref{proposition:L2-min-1} establishes the relationship
between the ReLU transformation and the integrated quantile function
through convex duality. The unconditional counterpart of this
relationship is well known in the risk measurement literature
\citep{ogryczak2002dual, rockafellar2000optimization,
rockafellar2002conditional}. The proof of the proposition is given
in the appendix. The duality structure is summarized in Figure
\ref{fig:duality_structure}.

\begin{figure}[htbp]
\centering
\caption{Duality of the Integrated Distribution and Quantile Functions}
\label{fig:duality_structure}
\vspace{0.1cm}

\begin{tikzpicture}[
  node distance = 3.6cm,
  box/.style = {draw=black!75, fill=gray!1, rounded corners=3pt,
    rectangle, minimum width=3.2cm, minimum height=1.2cm,
    align=center},
  arr/.style = {->, >=stealth, thick, draw=black!80,
    shorten >=2pt, shorten <=2pt},
  darr/.style = {<->, >=stealth, thick, draw=black!80,
    shorten >=2pt, shorten <=2pt}
]

\node[box] (G)  at (0, 0)    {$G_{Y | X}$};
\node[box] (Gs) at (9.5, 0)  {$G_{Y | X}^{*}$};
\node[box] (F)  at (0, -3)   {$F_{Y | X}$};
\node[box] (Fi) at (9.5, -3) {$F_{Y | X}^{-1}$};

\draw[darr] (G) -- (Gs)
  node[midway, above, font=\small] {Convex Duality};

\draw[darr] (F) -- (Fi)
  node[midway, above, font=\small] {Generalized Inversion};

\draw[arr] ([xshift=-0.4cm]F.north) -- ([xshift=-0.4cm]G.south)
  node[midway, left, font=\small] {Integration};
\draw[arr] ([xshift=0.4cm]G.south) -- ([xshift=0.4cm]F.north)
  node[midway, right, font=\small] {Subdifferential};

\draw[arr] ([xshift=0.4cm]Fi.north) -- ([xshift=0.4cm]Gs.south)
  node[midway, right, font=\small] {Integration};
\draw[arr] ([xshift=-0.4cm]Gs.south) -- ([xshift=-0.4cm]Fi.north)
  node[midway, left, font=\small] {Subdifferential};

\end{tikzpicture}

\vspace{0.5cm}
\begin{minipage}{0.93\textwidth}
\small
\textit{Note}: The top row illustrates the convex duality between the
integrated functions, while the bottom row reflects the relationship
between the distribution and quantile functions via generalized
inversion.
\end{minipage}
\end{figure}

Proposition \ref{proposition:L2-min-1} has three implications. First,
the result requires only the finite first-moment condition $\E[|Y|]
< \infty$ and is agnostic about the existence or smoothness of
conditional densities. Second, the integrated functions $G_{Y | X}(y |
x)$ and $G_{Y | X}^{\ast}(\tau | x)$ are well-defined for arbitrary
outcome distributions, including discrete and mixed
discrete-continuous cases. 
Although the generalized inverse uniquely defines a  conditional quantile function, it can be an ill-posed target of inference at probability indices where the distribution function is flat.
At such indices, the inverse map from the distribution function
to its quantile is discontinuous. 
By contrast, the integrated quantile function is invariant to the selection from the quantile correspondence. Any two
  selections differ only on a Lebesgue-null set of probability indices
  and therefore yield the same integral.
Third, the integrated functions fully characterize the
underlying distribution. The conditional distribution function and
the conditional quantile function are recovered as elements of the
subdifferentials, $F_{Y | X}(y | x) \in \partial G_{Y | X}(y | x)$ and
$F_{Y | X}^{-1}(\tau | x) \in \partial G_{Y | X}^{\ast}(\tau | x)$. 
For distributions with positive density, these subdifferentials reduce to singletons, whereas for non-smooth distributions they remain well-defined as set-valued maps.

To connect the preceding duality result with ReLU regression, define
the ReLU projection function
and its associated Legendre-Fenchel conjugate as 
\begin{equation*}
  \widetilde{G}_{Y | X}(y | x)
  :=
  x^{\top}\beta_{0}(y)
  \qquad \text{and} \qquad 
  \widetilde{G}_{Y | X}^{\ast}(\tau | x)
  :=
  \sup_{y\in\R}
  \big\{\tau y- \widetilde{G}_{Y | X}(y | x)\big\}.
\end{equation*}
Both functions are determined by 
the pseudo-true coefficient function 
$\beta_{0}(\cdot)$.
  The ReLU projection has a direct distributional interpretation when
  the conditional ReLU mean is exactly represented by the linear
  specification. 
Specifically, if 
  $
    \E[(y-Y)_{+} | X]
    =
    X^{\top}\beta_{0}(y)
  $
  a.s.~for every $y \in\R$,
then Proposition \ref{proposition:L2-min-1}(a) implies that
  $\widetilde{G}_{Y| X}(\cdot |  x)
  =G_{Y| X}(\cdot | x)$, and thus
  $\widetilde{G}_{Y| X}^{\ast}(\cdot | x)
  =G_{Y| X}^{\ast}(\cdot | x)$. 
The condition holds when the conditioning variables have finite
support and a saturated specification is used.
This includes the treatment-group specification
  in Section \ref{sec:treatment} and the fully interacted
  saturated-stratum specification used in Section
  \ref{sec:application}. In more parsimonious
  specifications, $\widetilde{G}_{Y | X}$ retains its
  best-linear-approximation interpretation, with
  $\widetilde{G}_{Y| X}^{\ast}$ as its associated convex-dual
  functional.

\section{Estimation and Asymptotic Properties}
\label{sec:estimation}

In this section, we present the ReLU regression estimator and derive its asymptotic distribution. 
We then develop inference for the convex conjugate of the ReLU projection.

\subsection{Estimation Method}

We observe 
an independent and identically distributed (i.i.d.) sample 
$\{(X_{i}, Y_{i})\}_{i=1}^{n}$ drawn from the joint distribution of
$(X, Y)$.
For each fixed $y \in \R$, we define the sample objective function
as
\begin{equation*}
  \widehat{Q}(\beta; y)
  :=
  \frac{1}{n} \sum_{i=1}^{n}
  \big( (y - Y_{i})_{+} - X_{i}^{\top} \beta \big)^{2}.
\end{equation*}
We define the estimator $\hat{\beta}(y)$ as the minimizer of
$\widehat{Q}(\beta; y)$ over $\beta \in \R^{p}$. 
Under Assumption \ref{assump:dgp}, the sample second-moment matrix
$n^{-1} \sum_{i=1}^{n} X_{i} X_{i}^{\top}$ is invertible with
probability approaching one. The estimator then admits the
closed-form expression
\begin{equation*}
  \hat{\beta}(y)
  =
  \bigg( \sum_{i=1}^{n} X_{i} X_{i}^{\top} \bigg)^{-1}
  \sum_{i=1}^{n} X_{i} (y - Y_{i})_{+}.
\end{equation*}

Given the estimator $\hat{\beta}(y)$, we define an estimator
of the ReLU projection function
$\widetilde{G}_{Y | X}(\cdot|x)$ for each $x \in \X$  by
\begin{equation*}
  \widehat{G}_{Y | X}(y | x)
  :=
  x^{\top} \hat{\beta}(y).
\end{equation*}
Evaluating this estimator across $y \in \R$, we obtain the collection
$\{\widehat{G}_{Y | X}(y | x) : y \in \R\}$.
We estimate the convex-dual functional associated with the ReLU projection  by applying
the Legendre-Fenchel transform to $\widehat{G}_{Y | X}(\cdot | x)$.
For each fixed $x \in \X$ and $\tau \in [0, 1]$, the conjugate
estimator is
\begin{equation*}
  \widehat{G}_{Y | X}^{\ast}(\tau | x)
  :=
  \sup_{y \in \R}
  \big\{ \tau y - \widehat{G}_{Y | X}(y | x) \big\}.
\end{equation*}
The estimated ReLU projection map  
$y \mapsto \widehat{G}_{Y | X}(y | x)$ need not be convex.
Applying the Legendre-Fenchel transform once more to
  $\widehat{G}_{Y | X}^{\ast}(\cdot | x)$,
we can obtain the greatest convex minorant of $\widehat{G}_{Y | X}(\cdot | x)$.

\subsection{Asymptotic Properties}

We now derive the asymptotic distribution of our proposed estimator.
To this end, we impose the following regularity conditions.

\vspace{0.3cm}
\begin{assumption}
  \label{assump:regularity}
  The following conditions hold:
  \begin{enumerate}[noitemsep, topsep=0pt]
    \item[\textnormal{(a)}]
    $\{(X_{i}, Y_{i})\}_{i=1}^{n}$ is an i.i.d.\ sample from the distribution of
    $(X, Y)$.
    \item[\textnormal{(b)}]
    $\E[ \| X Y \|^{2}] < \infty$ and $\E[ \|X\|^{4} ] < \infty$.
    \end{enumerate}
\end{assumption}
\vspace{0.3cm}

Assumption \ref{assump:regularity}(a) requires random
sampling.
Assumption \ref{assump:regularity}(b) imposes mild moment conditions,
which 
hold if $\E[Y^{4}] < \infty$ and $\E[\|X\|^{4}] < \infty$, and
which imply Assumption \ref{assump:dgp}(a) because $X$ contains a constant.

The next theorem establishes weak convergence of
the centered coefficient process
$\sqrt{n} \big ( \hat{\beta}(\cdot) -  \beta_{0}(\cdot) \big )$ 
to a Gaussian process in $\ell^{\infty}(\Rex)^p$. 
We extend the process from $\R$ to $\Rex$ by adjoining its limits
as $y \to \pm\infty$, whose existence is established in Appendix A.
Also, we endow $\Rex$ with the metric
$\rho(y_{1},y_{2}):=|\arctan(y_{1})-\arctan(y_{2})|$,
where 
$\arctan(\pm \infty)=\pm\pi/2$.
This metric agrees with the usual topology on $\R$ and makes
$(\Rex,\rho)$ a compact metric space.

\vspace{0.2cm}
\begin{theorem}
  \label{theorem:aym}
  Suppose that Assumptions \ref{assump:dgp} and
  \ref{assump:regularity} hold. Then
  \begin{equation*}
    \sqrt{n}\big( \hat{\beta}(\cdot) - \beta_{0}(\cdot) \big)
    \rightsquigarrow
    \mathbb{B}(\cdot)
    \quad \text{in } \ell^{\infty}(\Rex)^{p}, 
  \end{equation*}
  where $\mathbb{B}(\cdot)$ is a zero-mean Gaussian process with
  uniformly continuous sample paths with
  respect to $\rho(\cdot, \cdot)$ and covariance function
  $Q_{X}^{-1}\, \Sigma(y_{1}, y_{2})\, Q_{X}^{-1}$ for
  $(y_{1}, y_{2}) \in \Rex^{2}$, with $Q_{X} := \E[X X^{\top}]$ 
  and
  $\Sigma(y_{1}, y_{2}) := 
  \mathrm{Cov} \big( X \epsilon(y_{1}), X \epsilon(y_{2}) \big)$. For each fixed $x \in \X$,
  \begin{equation*}
    \sqrt{n}\big(
      \widehat{G}_{Y | X}(\cdot | x) 
      - 
      \widetilde{G}_{Y | X}(\cdot | x)
    \big)
    \rightsquigarrow
    x^{\top} \mathbb{B}(\cdot)
    \quad \text{in } \ell^{\infty}(\Rex). 
  \end{equation*}
 
\end{theorem}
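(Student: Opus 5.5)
We decompose the centered process as
\begin{equation*}
  \sqrt{n}\big(\hat\beta(y)-\beta_0(y)\big)
  = \widehat{Q}_X^{-1}\,\frac{1}{\sqrt{n}}\sum_{i=1}^{n} X_i\epsilon_i(y),
  \qquad \widehat{Q}_X := \frac{1}{n}\sum_{i=1}^{n} X_iX_i^\top ,
\end{equation*}
where $\epsilon_i(y)=(y-Y_i)_+-X_i^\top\beta_0(y)$. This identity holds on the event that $\widehat Q_X$ is invertible, which has probability approaching one. By the law of large numbers and Assumption \ref{assump:dgp}(b), $\widehat Q_X^{-1}\to Q_X^{-1}$ in probability. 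The result then follows from Slutsky's lemma and the continuous mapping theorem, provided the empirical process $\mathbb{G}_n(y):=n^{-1/2}\sum_i X_i\epsilon_i(y)$ converges weakly in $\ell^\infty(\Rex)^p$ to a Gaussian process with covariance $\Sigma(y_1,y_2)$ and $\rho$-uniformly continuous paths. Multiplying a tight limit by a convergent matrix preserves weak convergence uniformly in $y$. The statement for $\widehat G_{Y|X}(\cdot|x)$ is then immediate, since $b\mapsto x^\top b$ is a continuous linear map from $\ell^\infty(\Rex)^p$ to $\ell^\infty(\Rex)$.

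The core step is a Donsker property for the class $\mathcal{F}=\{(x,y')\mapsto x_j\epsilon(y;x,y') : y\in\Rex,\ j\le p\}$ on $\Rex$. The behaviour at $\pm\infty$ needs care.

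\textbf{Limits at $\pm\infty$.} As $y\to-\infty$, $(y-Y)_+\to 0$, so $\beta_0(y)\to 0$ and $\epsilon(y)\to 0$. As $y\to+\infty$, $(y-Y)_+ = y-Y$ eventually. Since $X$ contains a constant, $\beta_0(y)=y e_1-\beta_Y$, where $\beta_Y$ is the OLS coefficient of $Y$ on $X$, whenever $Y\le y$. In general,
\begin{equation*}
  \epsilon(y)=(y-Y)_+-(y-Y)+X^\top\big(ye_1-\beta_Y-\beta_0(y)\big)+\big(Y - X^\top\beta_Y\big)\cdot(-1),
\end{equation*}
and $(y-Y)_+-(y-Y)=(Y-y)_+\to0$ in $L^2$. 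Hence $\epsilon(+\infty)$ is the negative of the OLS residual of $Y$ on $X$. This defines $\epsilon$ on $\Rex$.

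\textbf{Donsker property.} I would write $X_j\epsilon(y)=X_j(y-Y)_+ - X_jX^\top\beta_0(y)$. The second piece is a finite-dimensional parametric part: its envelope $\|X\|^2\sup_y\|\beta_0(y)-ye_1\mathbb{1}\{y>0\}\|$ is square integrable by $\E\|X\|^4<\infty$. The linear drift in $y$ must be handled jointly with the first piece. To avoid unbounded envelopes, I would rewrite everything via the bounded-in-$y$ representation
\begin{equation*}
  (y-Y)_+ - y\,\mathbb{1}\{y>0\}
  = -Y + (Y-y)_+ \ \text{ for } y>0, \qquad (y-Y)_+ \ \text{ for } y\le 0 .
\end{equation*}
Since the constant regressor absorbs $y e_1$, $\epsilon(y)$ equals a function of $(Y-y)_+$ or $(y-Y)_+$ minus a linear projection, with envelope bounded by $C(|Y|+\|X\|\,\|X\|\cdot\text{const})$. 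Multiplying by $X_j$ gives an envelope $F$ with $\E F^2<\infty$ by Assumption \ref{assump:regularity}(b). The maps $y\mapsto (y-Y)_+$ are monotone and $1$-Lipschitz in $y$, so the class $\{X_j(y-Y)_+\}$ is a VC-type (Lipschitz-in-parameter times envelope, or monotone-times-fixed-function) class. Bracketing numbers in $L^2$ are controlled by partitioning $\Rex$ via $\rho$-grids, using the $L^2$ continuity $y\mapsto X_j\epsilon(y)$ including at the endpoints established above. Then a uniform-entropy or bracketing Donsker theorem (e.g.\ van der Vaart--Wellner 2.5.6 or 2.10.6 for sums of Donsker classes) gives $\mathbb{G}_n\rightsquigarrow\mathbb{G}$. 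The limit is tight with paths uniformly continuous for the $L^2$ semimetric, hence for $\rho$, because $(\Rex,\rho)$ is compact and $y\mapsto X\epsilon(y)$ is $\rho$-to-$L^2$ continuous.

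The main obstacle is this last step. It requires showing that $y\mapsto X\epsilon(y)$ is $L^2$-continuous at $\pm\infty$ and that brackets near infinity are small. I would bound $\E[\|X\|^2(Y-y)_+^2]\le \E[\|X\|^2Y^2\mathbb{1}\{Y>y\}]\to0$ using $\E\|XY\|^2<\infty$ and dominated convergence, and argue symmetrically at $-\infty$. This tail control lets a finite $\rho$-grid with $O(\varepsilon^{-1})$ points in the middle and two tail brackets cover $\mathcal{F}$.
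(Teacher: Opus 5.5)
Your overall route is the paper's. You use the representation $\sqrt{n}(\hat\beta(y)-\beta_0(y))=\widehat Q_X^{-1}n^{-1/2}\sum_i X_i\epsilon_i(y)$ on the invertibility event and the endpoint values $\epsilon(-\infty)=0$ and $\epsilon(+\infty)=-(Y-X^\top Q_X^{-1}\E[XY])$. You then need a Donsker property of the score class over $\Rex$, which is the paper's Lemmas A2--A3, whose proofs are not shown. The remaining steps are Slutsky with $\widehat Q_X^{-1}\to Q_X^{-1}$, transfer of uniform continuity to $\rho$ via compactness, and continuous mapping for $b\mapsto x^\top b$. Recentring through the constant regressor is the right device for the envelope. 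It gives $|Y|+C\|X\|$ for $\epsilon(y)$ and $|X_j||Y|+C\|X\|^2$ for $X_j\epsilon(y)$. The bound $C(|Y|+\|X\|^2)$ you wrote for $\epsilon(y)$ would, after multiplying by $X_j$, need $\E\|X\|^6<\infty$.

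The step you flag as the main obstacle does not go through as described. Put $h_y:=(y-Y)_+$ for $y\le0$ and $h_y:=-\min\{Y,y\}$ for $y>0$. The only modulus you use is $|X_jh_y-X_jh_{y'}|\le|X_j||y-y'|$.
\begin{itemize}
\item A $\rho$-mesh $\varepsilon$ has Euclidean spacing of order $\varepsilon(1+y^2)$, so $O(\varepsilon^{-1})$ $\rho$-grid points do not give $\varepsilon$-brackets.
\item A Euclidean grid on $[-M,M]$ needs of order $M/\varepsilon$ points. The cutoff $M(\varepsilon)$ that makes the tail brackets $\varepsilon$-small depends on how fast $\E[\|X\|^2Y^2\1\{|Y|>M\}]\to0$, which under $\E\|XY\|^2<\infty$ alone can be arbitrarily slow.
\item For example, with $X\equiv1$ and $\E[(Y-M)_+^2]\asymp1/\log M$, your construction only yields $\log N_{[\,]}(\varepsilon)\lesssim\varepsilon^{-2}$, and the bracketing integral diverges.
\item $L^2$-continuity on the compact $(\Rex,\rho)$ gives total boundedness, not an entropy bound.
\end{itemize}

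The repair is the monotonicity you mention but do not use. On each of $y\le0$ and $y>0$, the map $y\mapsto h_y$ is monotone with $|h_y|\le|Y|$. Hence $(h_y-h_{y'})^2\le2|Y|\,|h_y-h_{y'}|$ and $\E[X_j^2|Y|\,|h_y-h_{y'}|]=|\E[X_j^2|Y|h_y]-\E[X_j^2|Y|h_{y'}]|$. Place grid points at level sets of the continuous monotone map $y\mapsto\E[X_j^2|Y|h_y]$, whose range on each piece has length at most $2\E[X_j^2Y^2]$. Let $X_j^{\pm}$ be the positive and negative parts of $X_j$, and $l,u$ the pointwise minimum and maximum of consecutive $h$'s. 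This gives brackets $[X_j^+l-X_j^-u,\,X_j^+u-X_j^-l]$ with $N_{[\,]}(\varepsilon)\lesssim\varepsilon^{-2}$. Alternatively, $\{h_y\}$ is a union of two VC-subgraph classes with envelope $|Y|$, so the uniform-entropy Donsker theorem applies with envelope $|X_j||Y|$. Adding the parametric part $\{X_jX^\top c:\|c\|\le C\}$ (envelope $C|X_j|\|X\|$) and the two endpoint functions preserves the Donsker property. After that, the rest of your argument goes through.
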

\vspace{0.3cm}

Uniform continuity of the sample paths of $\mathbb{B}(\cdot)$ supports
the construction of uniform confidence bands for
$\widetilde{G}_{Y | X}(\cdot | x)$ over $\R$. 
The covariance kernel has the heteroskedasticity-robust sandwich form.

The limiting distribution in Theorem \ref{theorem:aym} depends on
unknown nuisance parameters and is not pivotal. To obtain practical
inference for $\widehat{G}_{Y | X}(y | x)$, we apply the exchangeable
bootstrap \citep{praestgaard1993exchangeably}, which consistently
estimates the limit law of the linear functional
$x^{\top} \mathbb{B}(\cdot)$. When the data are organized into independent clusters, the same
bootstrap applies with weights drawn once per cluster and assigned
to every observation in that cluster. Appendix C states the cluster
extension and provides references for its validity.

We next study the asymptotic behavior of the conjugate estimator
$\widehat{G}_{Y | X}^{\ast}(\tau | x)$. The Legendre-Fenchel
transformation is not fully Hadamard differentiable. Thus, the standard
functional delta method of \cite{van1996weak} does not apply. We
work instead with Hadamard directional differentiability, which
extends the delta method to maps that are differentiable only in
selected directions \citep{shapiro1990, dumbgen1993, fang2019inference}.

Theorem 2.1 of \cite{carcamo2020} establishes Hadamard directional
differentiability of the supremum operator on the bounded function space $\ell^\infty(\R)$. 
In our case, however, the map
$y \mapsto\tau y-\widetilde{G}_{Y| X}(y | x)$ 
inside the Legendre-Fenchel transform may be unbounded below
and does not belong to $\ell^\infty(\R)$.
Moreover, the supremum may not be attained at a finite value of $y$ because maximizing sequences can diverge to $-\infty$ at $\tau=0$ or to $+\infty$ at $\tau=1$.
To address these issues, we define the set of approximate maximizers for each $\eta > 0$ by 
\begin{equation*}
  S_{\eta}(\tau,x)
  :=
  \big \{
    y\in\mathbb R:
  \tau y-\widetilde G_{Y| X}(y | x)
  \ge
  \widetilde G_{Y| X}^{*}(\tau | x)-\eta
  \big \}.
\end{equation*}
Given that $\widetilde{G}_{Y|X}^*(\tau|x)$ is finite, 
the set $S_{\eta}(\tau,x)$ is nonempty for every $\eta>0$, which ensures that the directional derivative depends only on $y \in S_{\eta}(\tau,x)$ and is well-defined.
Even when sequences of approximate maximizers in $S_{\eta}(\tau,x)$ diverge to $\pm \infty$, continuous extension to $\Rex$ yields well-defined boundary values. The boundedness of these extended functions and the nested structure of the collection $\{S_{\eta}(\tau,x)\}_{\eta > 0}$ then ensure that the relevant supremum converges to a finite limit as $\eta \downarrow 0$.

The following theorem characterizes the joint asymptotic
distribution of the conjugate estimator at any finite collection of
probability indices through the limit of
the suprema over these sets.
We also consider the special case in which 
the map 
$\tau \mapsto \widetilde{G}_{Y| X}^{\ast}(\tau | x)$ 
is differentiable, with the derivative 
$\nabla\widetilde{G}_{Y| X}^{\ast}(\tau | x)$,
and 
obtain weak convergence to a Gaussian limit.
Here, 
the derivatives at $\tau =0$ and $\tau =1$ are understood as the right and left derivatives, respectively.

\vspace{0.3cm}
\begin{theorem}
    \label{theorem:L-weak}
    Suppose that Assumptions 
    \ref{assump:dgp}-\ref{assump:regularity} hold.  
    Then the following statements hold.
  \begin{enumerate}[label=(\alph*), itemsep=0.2em]
    \item For every $x \in \X$
    and 
    for every finite collection
    $\{\tau_{j}\}_{j=1}^{J}\subset [0,1]$,
    \begin{equation*}
      \big \{ 
        \sqrt{n}
        \big ( 
          \widehat{G}_{Y | X}^{\ast}(\tau_j | x)
          -
          \widetilde{G}_{Y | X}^{\ast}(\tau_j | x)
        \big )
      \big \}_{j=1}^{J}
      \rightsquigarrow
      \big \{ 
      \mathbb{Z}(\tau_{j} | x)
      \big \}_{j=1}^{J},
    \end{equation*}
    where 
    $\mathbb{Z}(\tau|x)
    :=
    \lim_{\eta \downarrow0}
    \sup_{y \in S_{\eta}(\tau,x)}
    \{-x^{\top}\mathbb{B}(y)\}$
    for $\tau \in [0,1]$
    with 
    $\mathbb{B}(\cdot)$ being the Gaussian process defined in
    Theorem \ref{theorem:aym}.

    \item 
    Additionally, if 
    there exists a closed set $T\subseteq[0,1]$
    such that 
    $\widetilde{G}_{Y| X}^{\ast}(\cdot | x)$ 
    is differentiable on $T$
    and 
    $
    \lim_{\eta\downarrow0}
    \sup_{\tau\in T}
    \sup_{y\in S_{\eta}(\tau,x)}
    \rho \big(
    y,
    \nabla\widetilde{G}_{Y| X}^{\ast}(\tau | x)
    \big)
    =0,
    $
    then 
    the result in (a) strengthens to weak convergence in
    $\ell^{\infty}(T)$, with the Gaussian limit  
    $-x^{\top}\mathbb{B} 
    \big ( 
      \nabla \widetilde{G}_{Y| X}^{\ast}(\cdot|x) 
    \big )$. 

  \end{enumerate}
  
\end{theorem}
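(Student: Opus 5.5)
The plan is to apply the functional delta method for Hadamard directionally differentiable maps \citep{fang2019inference} to the conjugate map, with Theorem \ref{theorem:aym} supplying the input process. Fix $x\in\X$ and write $g:=\widetilde G_{Y|X}(\cdot|x)$, $\hat g:=\widehat G_{Y|X}(\cdot|x)$. The conjugate is evaluated on the unbounded function $y\mapsto \tau y-g(y)$, but the perturbation $\hat g-g$ lies in $\ell^\infty(\Rex)$ by Theorem \ref{theorem:aym}. I would therefore define, for each $\tau$, the map $\phi_\tau(h):=\sup_{y\in\R}\{\tau y-g(y)-h(y)\}$ on $h\in\ell^\infty(\Rex)$, viewed as a perturbation of the fixed base point $g$. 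Then $\widehat G^\ast_{Y|X}(\tau|x)-\widetilde G^\ast_{Y|X}(\tau|x)=\phi_\tau(\hat g-g)-\phi_\tau(0)$. Working this way sidesteps the fact that $\tau y-g$ itself is not bounded.

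The key step is to show that $\phi_\tau$ is Hadamard directionally differentiable at $0$, tangentially to the space of $\rho$-uniformly continuous functions on $\Rex$, with derivative
\[
\phi'_\tau(h)=\lim_{\eta\downarrow0}\sup_{y\in S_\eta(\tau,x)}\{-h(y)\}.
\]
Take $t_n\downarrow0$ and $h_n\to h$ uniformly, with $h$ continuous on $(\Rex,\rho)$. Write $D_n:=t_n^{-1}[\phi_\tau(t_nh_n)-\phi_\tau(0)]$.

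For the lower bound, fix $\eta>0$. For any $y\in S_\eta(\tau,x)$ we have $\phi_\tau(t_nh_n)\ge \tau y-g(y)-t_nh_n(y)\ge \phi_\tau(0)-\eta-t_nh_n(y)$. Choosing $\eta=\eta_n=o(t_n)$ then gives $\liminf D_n\ge \lim_\eta\sup_{S_\eta}(-h)$. Here the monotone (nested) limit over $\eta$ exists because $h$ is bounded.

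For the upper bound, any $y$ that is within $t_n$ of the supremum for $\tau y-g-t_nh_n$ satisfies $\tau y-g(y)\ge\phi_\tau(0)-t_n(1+2\|h_n\|_\infty)$. Hence $y\in S_{\eta_n}(\tau,x)$ with $\eta_n=O(t_n)\to0$. It follows that $D_n\le \sup_{S_{\eta_n}}(-h_n)+1/\text{(scale)}$, and uniform convergence of $h_n$ closes the argument. To get the vanishing error term I would use a tolerance $\epsilon_n t_n$ with $\epsilon_n\to0$ rather than $t_n$. Finiteness of $\widetilde G^\ast$ for $\tau\in[0,1]$ must also be checked, e.g.\ at $\tau=1$ via the limits of $\beta_0(y)-(y,0,\dots,0)$ as $y\to\infty$ from Appendix A.

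Part (a) then follows by applying the delta method jointly to the vector map $h\mapsto(\phi_{\tau_j}(h))_{j\le J}$ into $\R^J$. Directional differentiability holds componentwise, and the limit $x^\top\mathbb B$ is concentrated on $\rho$-uniformly continuous paths by Theorem \ref{theorem:aym}. This yields the joint limit $\{\mathbb Z(\tau_j|x)\}$.

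For (b), I would view $h\mapsto(\phi_\tau(h))_{\tau\in T}$ as a map into $\ell^\infty(T)$ and show full Hadamard differentiability with the linear derivative $h\mapsto -h(\nabla\widetilde G^\ast(\tau|x))$. The upper and lower bounds above hold uniformly in $\tau\in T$, because the $\eta_n$ bounds do not depend on $\tau$. The assumed condition $\sup_{\tau\in T}\sup_{y\in S_\eta}\rho(y,\nabla\widetilde G^\ast(\tau|x))\to0$, together with uniform $\rho$-continuity of $h$ on the compact $\Rex$, gives
\[
\sup_\tau\Big|\sup_{S_{\eta_n}}(-h)+h(\nabla\widetilde G^\ast(\tau|x))\Big|\to0.
\]
Here $h$ is evaluated at an extended point when the derivative is $\pm\infty$ at $\tau\in\{0,1\}$, using the limits from Appendix A. Since the derivative is linear in $h$, the standard delta method gives the Gaussian limit in $\ell^\infty(T)$.

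I expect the main obstacle to be the careful bookkeeping in the upper bound when near-maximizers escape to $\pm\infty$. That step needs both the boundedness of $h_n$ on all of $\R$ and the continuity of $h$ at $\pm\infty$ in the $\rho$ metric.
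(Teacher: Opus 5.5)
Your proposal is correct and follows essentially the same route as the paper. The paper also perturbs the fixed base point $\widetilde G_{Y|X}(\cdot|x)$ by the bounded process $h_n$, and its Fenchel--Young-gap rewriting $\sqrt{n}(\widehat G^\ast-\widetilde G^\ast)=\sup_{\eta>0}\{\Psi_\eta(h_n)-\sqrt{n}\,\eta\}$, combined with the extended continuous-mapping theorem, is a packaged form of your lower/upper sandwich establishing the directional derivative $\lim_{\eta\downarrow0}\sup_{S_\eta(\tau,x)}\{-h\}$, while part (b) is handled the same way in both, via uniform $\rho$-continuity of $h$ on the compact $(\Rex,\rho)$ yielding the linear derivative $h\mapsto -h(\nabla\widetilde G^\ast_{Y|X}(\cdot|x))$ in $\ell^\infty(T)$.
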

\vspace{0.3cm}

Under the differentiability condition in the theorem, the
Legendre-Fenchel transform is Hadamard differentiable, the limit is
a Gaussian process, and the standard nonparametric bootstrap is
consistent. In general, however, the theorem establishes only
finite-dimensional convergence to a non-pivotal limit that depends
on the unknown approximate-maximizer sets.
Moreover, the supremum over these sets generally yields a nonlinear
directional derivative, which makes the standard nonparametric
bootstrap inconsistent
\citep{dumbgen1993, fang2019inference}.
We thus adopt the resampling approach of \cite{fang2019inference} for
  Hadamard directionally differentiable maps, while
  \cite{delgado2012testing} and \cite{beare2015transforming}, among
  others, provide alternative approaches to shape-constrained inference.

To estimate the approximate-maximizer sets appearing in 
Theorem \ref{theorem:L-weak}, we follow the fattened-argmax approach of
\cite{chernozhukov2013intersection} and
\cite{cattaneo2020bootstrap}. 
Specifically, we use the empirical level
set
\begin{equation*}
  \widehat{S}_{\eta_{n}}(\tau,x)
  :=
  \big \{
    y\in\mathbb R:
  \tau y-\widehat{G}_{Y| X}(y | x)
  \ge
  \widehat{G}_{Y| X}^{*}(\tau | x)-\eta_{n}
  \big \},
\end{equation*}
where $\{\eta_n\}$ is a non-random sequence satisfying
$\eta_n \downarrow 0$ and
$\sqrt{n}\eta_n \to \infty$
as $n \to \infty$.

\section{Treatment Effect}
\label{sec:treatment}

\subsection{Setup and the Treatment Effects}

We extend the ReLU regression to evaluate distributional treatment
effects in a binary treatment setting under the potential
outcomes framework \citep{neyman1923applications,
rubin1974estimating}. Let $W \in \{0, 1\}$ denote the binary
treatment indicator, with $W = 1$ for treated units and $W = 0$ for control units, and let $Y(0)$ and $Y(1)$ denote the corresponding
potential outcomes. Under the stable unit treatment value
assumption \citep{rubin1980randomization}, the observed outcome is
$Y = Y(W)$.
For each $w \in \{0, 1\}$, 
let 
$F_{Y(w)}(y) := \Pr\{Y(w) \le y\}$ 
denote 
the distribution function
of $Y(w)$, and 
let 
$F_{Y(w)}^{-1}(u) := \inf\{y \in \R :
F_{Y(w)}(y) \ge u\}$ 
denote 
the corresponding 
quantile function for $u \in (0, 1)$.

For probability levels $\tau_{\ell}, \tau_{u} \in [0, 1]$ with
$\tau_{\ell} < \tau_{u}$, 
we define the average quantile treatment effect
(AQTE) over $[\tau_{\ell}, \tau_{u}]$ as
\begin{equation*}
  \theta(\tau_{\ell}, \tau_{u})
  :=
  \frac{1}{\tau_{u} - \tau_{\ell}}
  \bigg\{
  \int_{\tau_{\ell}}^{\tau_{u}} F_{Y(1)}^{-1}(u)\, du
  -
  \int_{\tau_{\ell}}^{\tau_{u}} F_{Y(0)}^{-1}(u)\, du
  \bigg\}.
\end{equation*}
The AQTE measures the difference between the quantile functions of the potential outcomes, averaged over the interval $[\tau_{\ell}, \tau_{u}]$,
and thus characterizes the distributional treatment effect over the specified quantile range.
Moreover, the integral formulation renders the AQTE robust to the choice of selection from the quantile correspondence, since such irregular points form a set of Lebesgue measure zero and do not alter the integral.

The AQTE encompasses well-known treatment effect parameters as
special cases, as established in the following proposition.

\vspace{0.3cm}
\begin{proposition}
  \label{proposition:AQTE}
  Suppose $\E[|Y(w)|] < \infty$. Then
  \begin{enumerate}[label=\textnormal{(\alph*)}, topsep=3pt, itemsep=3pt, align=left, leftmargin=1em, labelsep=0.5em]
    \item $\theta(0, 1) = \E[Y(1) - Y(0)]$;
    \item for any $\tau \in (0, 1)$, 
      the limit point of
      $\theta(\tau_{\ell}, \tau_{u})$ as $(\tau_{\ell}, \tau_{u})
      \to (\tau, \tau)$ with $\tau_{\ell} < \tau_{u}$
      equals $F_{Y(1)}^{-1}(\tau) -
      F_{Y(0)}^{-1}(\tau)$
      if $F_{Y(w)}^{-1}(\cdot)$ is
      continuous at $\tau$ for each $w \in \{0, 1\}$, and
      in general,  
      every limit point of
      $\theta(\tau_{\ell}, \tau_{u} )$ belongs to
      the interval $[F_{Y(1)}^{-1}(\tau) -
      F_{Y(0)}^{-1}(\tau{\scalebox{0.7}{$+$}}),\,
      F_{Y(1)}^{-1}(\tau{\scalebox{0.7}{$+$}}) -
      F_{Y(0)}^{-1}(\tau)]$. 
  \end{enumerate}
\end{proposition}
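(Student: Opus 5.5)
Part (a) follows from the quantile transform. If $U$ is uniform on $(0,1)$, then $F_{Y(w)}^{-1}(U)$ has the same law as $Y(w)$. Hence
\[
\int_0^1 F_{Y(w)}^{-1}(u)\,du = \E[Y(w)],
\]
and this integral is finite because $\E[|Y(w)|]<\infty$. Alternatively, one can invoke Proposition \ref{proposition:L2-min-1}(b) with a trivial conditioning set, which gives $G^{\ast}_{Y(w)}(1)=\int_0^1 F_{Y(w)}^{-1}(u)\,du$. Since $\int_0^1 |F^{-1}_{Y(w)}(u)|\,du=\E|Y(w)|<\infty$, both integrals in $\theta(0,1)$ are finite, and subtracting them gives $\theta(0,1)=\E[Y(1)]-\E[Y(0)]=\E[Y(1)-Y(0)]$.

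For part (b), I will write $\theta(\tau_\ell,\tau_u)=A_1-A_0$, where
\[
A_w:=\frac{1}{\tau_u-\tau_\ell}\int_{\tau_\ell}^{\tau_u}F_{Y(w)}^{-1}(u)\,du
\]
is the average of the quantile function of $Y(w)$ over $[\tau_\ell,\tau_u]$. The key tool is monotonicity: $F_{Y(w)}^{-1}$ is nondecreasing and left-continuous on $(0,1)$, so it has one-sided limits at $\tau$. Left-continuity means $F_{Y(w)}^{-1}(\tau-)=F_{Y(w)}^{-1}(\tau)$. Take $\tau_\ell<\tau_u$ within $\delta$ of $\tau$, with $\delta$ small enough that both lie in $(0,1)$. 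Monotonicity then gives
\[
F_{Y(w)}^{-1}(\tau-\delta)\le A_w\le F_{Y(w)}^{-1}(\tau+\delta).
\]
Letting $\delta\downarrow0$ and using left-continuity and the right limit, every limit point of $A_w$ lies in $[F_{Y(w)}^{-1}(\tau),F_{Y(w)}^{-1}(\tau+)]$.

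It remains to combine the two groups. Along any subsequence I will extract a further subsequence on which both $A_1$ and $A_0$ converge; this is possible because both are bounded near $\tau$ by the sandwich above. The corresponding limit point of $\theta$ is then the difference of the two limits. It therefore lies in
\[
\big[F_{Y(1)}^{-1}(\tau)-F_{Y(0)}^{-1}(\tau+),\;F_{Y(1)}^{-1}(\tau+)-F_{Y(0)}^{-1}(\tau)\big],
\]
which is the general claim. If $F_{Y(w)}^{-1}$ is continuous at $\tau$ for both $w$, then $F_{Y(w)}^{-1}(\tau+)=F_{Y(w)}^{-1}(\tau)$ and each interval collapses to a point. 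In that case $\theta(\tau_\ell,\tau_u)$ actually converges, and its limit is $F_{Y(1)}^{-1}(\tau)-F_{Y(0)}^{-1}(\tau)$.

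The only mildly delicate step is making sure the sandwich holds for every admissible pair $(\tau_\ell,\tau_u)$, including pairs with $\tau_\ell<\tau<\tau_u$ and pairs lying entirely on one side of $\tau$. Monotonicity settles all of these at once, so I expect no real obstacle.
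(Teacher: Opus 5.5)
Your proof is correct. For part (b) it takes a more elementary route than the paper.

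The paper writes $\theta(\tau_\ell,\tau_u)$ as a difference of two difference quotients of the conjugates $G^{\ast}_{Y(w)}$. It then invokes Theorem 24.1 of Rockafellar to conclude that every limit point of each quotient lies in $\partial G^{\ast}_{Y(w)}(\tau)$. Finally it uses the identification $\partial G^{\ast}_{Y(w)}(\tau)=[F^{-1}_{Y(w)}(\tau),F^{-1}_{Y(w)}(\tau+)]$ from Proposition~\ref{proposition:L2-min-1}(b) to obtain the interval as a set of differences.

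You bypass the conjugate and the subdifferential entirely. The sandwich $F^{-1}_{Y(w)}(\tau-\delta)\le A_w\le F^{-1}_{Y(w)}(\tau+\delta)$, together with left-continuity of the generalized inverse, gives the same interval directly. Your sub-subsequence extraction also makes explicit the passage from limit points of each $A_w$ to limit points of $A_1-A_0$, which the paper leaves implicit. The two arguments rest on the same underlying fact, since Theorem 24.1 is essentially the monotonicity of the one-sided derivatives of a convex function.

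Your route is self-contained: it needs only the definition and monotonicity of the quantile function, not Proposition~\ref{proposition:L2-min-1}(b) or conjugacy. The paper's route instead states the result in the convex-duality language used later. The same set $\partial G^{\ast}_{Y(w)}(\tau)$, as the limit of the $\eta$-subdifferentials, is where the supremum defining $\mathbb{Z}_w(\tau)$ in Theorem~\ref{theorem:AQTE-asymp} lives.

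For part (a), your quantile-transform identity $F^{-1}_{Y(w)}(U)\overset{d}{=}Y(w)$ is the probabilistic counterpart of the paper's computation. The paper uses Fubini and the tail-integral formula, built on the equivalence $u>F_{Y(w)}(t)\iff F^{-1}_{Y(w)}(u)>t$.
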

\vspace{0.3cm}

Proposition \ref{proposition:AQTE} shows that the AQTE nests two canonical treatment effect parameters. 
The AQTE recovers the average
treatment effect by setting $(\tau_{\ell}, \tau_{u}) = (0, 1)$.
It also recovers the $\tau$-th quantile treatment effect in the
limit as $(\tau_{\ell}, \tau_{u}) \to (\tau, \tau)$, provided the quantile functions of both potential outcomes are continuous at $\tau$.

\subsection{Identification}

We consider the identification of the AQTE through the integrated
distribution and quantile functions of the potential outcomes,
defined for each $w \in \{0,1\}$ as
\begin{equation*}
  G_{Y(w)}(y)
  := \int_{-\infty}^{y} F_{Y(w)}(s)\, ds
  \quad 
  \mathrm{and}
  \quad 
  G_{Y(w)}^{\ast}(\tau)
  :=
  \int_{0}^{\tau}
  F_{Y(w)}^{-1}(u)\, du . 
\end{equation*}
By Proposition \ref{proposition:L2-min-1}(b),
the integrated quantile function 
$G_{Y(w)}^{\ast} (\cdot)$ 
is the convex conjugate of  
$G_{Y(w)} (\cdot)$.
The endpoint values are $G_{Y(w)}^{\ast}(0) = 0$ and $G_{Y(w)}^{\ast}(1) = \E[Y(w)]$ for each $w \in \{0,1\}$.
It therefore suffices to identify
$G_{Y(w)}$ for the identification of the AQTE.

We consider identification under random assignment, as in
randomized experiments. Identification under unconfoundedness
conditional on pre-treatment covariates is developed in Appendix B.

\vspace{0.3cm}
\begin{assumption}
  \label{assump:random-assignment}
  The data-generating process satisfies:
  \begin{enumerate}[label=(\alph*), noitemsep, topsep=0pt]
    \item[\textnormal{(a)}] $Y = Y(W)$ almost surely.
    \item[\textnormal{(b)}] $\E[|Y(w)|] < \infty$ for each $w \in
      \{0, 1\}$.
    \item[\textnormal{(c)}] 
    $(Y(0), Y(1)) \independent W$ and
      $0 < \Pr(W = 1) < 1$.
  \end{enumerate}
\end{assumption}
\vspace{0.3cm}

Assumption \ref{assump:random-assignment} collects the standard
conditions of the potential outcomes framework. Condition (a) is
the consistency condition that relates the observed outcome to the
realized potential outcome. Condition (b) ensures that the
integrated distribution function $G_{Y(w)}$ is finite for each
$w \in \{0, 1\}$. Condition (c) combines random assignment of $W$
with overlap.

Under the present setup, the ReLU regression on a constant and $W$ is
saturated and therefore correctly specified, so that
$\E[(y-Y)_{+}| W=w]=G_{Y| W}(y | w)$. Accordingly, in the
absence of misspecification, this section does not distinguish the ReLU projection from the corresponding integrated distribution function. We now state the identification result for the AQTE under
random assignment.

\vspace{0.3cm}
\begin{theorem}
  \label{theorem:AQTE}
  Suppose Assumption \ref{assump:random-assignment} holds. Then,
  $G_{Y(w)}(y) = \E[(y - Y)_{+} | W = w]$
  for each $w \in \{0, 1\}$ and $y \in \R$,
  and 
  for any $\tau_{\ell}, \tau_{u} \in [0, 1]$ with $\tau_{\ell} < \tau_{u}$, 
  \begin{equation*}
    \theta(\tau_{\ell}, \tau_{u})
    = 
    \frac{1}{\tau_{u} - \tau_{\ell}}
    \big \{
      \big ( G_{Y(1)}^{\ast}(\tau_u) - G_{Y(1)}^{\ast}(\tau_\ell)\big) 
      - 
      \big ( G_{Y(0)}^{\ast}(\tau_u) - G_{Y(0)}^{\ast}(\tau_\ell)\big) 
    \big \}. 
  \end{equation*}
 
\end{theorem}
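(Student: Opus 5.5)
The proof has two parts: identify $G_{Y(w)}$ from the observed data, then rewrite the AQTE through the integrated quantile function $G^{\ast}_{Y(w)}$.

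\emph{Step 1: identification of $G_{Y(w)}$.} Fix $w\in\{0,1\}$ and $y\in\R$. Apply Proposition \ref{proposition:L2-min-1}(a) with no covariates (equivalently, a degenerate $X$) to the random variable $Y(w)$; this requires only $\E[|Y(w)|]<\infty$, which is Assumption \ref{assump:random-assignment}(b). It gives
\[
G_{Y(w)}(y)=\E[(y-Y(w))_{+}].
\]
Since $0<\Pr(W=w)<1$, the conditional expectation given $W=w$ is well defined as an elementary conditional expectation. Because $(y-\cdot)_{+}$ is a measurable function and $Y(w)\independent W$ by Assumption \ref{assump:random-assignment}(c),
\[
\E[(y-Y(w))_{+}]=\E[(y-Y(w))_{+}\mid W=w].
\]
By consistency (a), $Y(w)=Y$ almost surely on $\{W=w\}$, so this equals $\E[(y-Y)_{+}\mid W=w]$. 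This proves the first claim.

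If one prefers not to invoke Proposition \ref{proposition:L2-min-1} in its unconditional form, the identity $\E[(y-Z)_{+}]=\int_{-\infty}^{y}F_Z(s)\,ds$ follows directly. Write $(y-Z)_{+}=\int_{-\infty}^{y}\1\{Z\le s\}\,ds$ and apply Tonelli's theorem, which is justified because the integrand is nonnegative.

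\emph{Step 2: the AQTE formula.} By Proposition \ref{proposition:L2-min-1}(b), again in its unconditional form, $G^{\ast}_{Y(w)}(\tau)=\int_0^{\tau}F^{-1}_{Y(w)}(u)\,du$ for $\tau\in[0,1]$, and this integral is finite. Then split
\[
\int_{\tau_\ell}^{\tau_u}F^{-1}_{Y(w)}(u)\,du=G^{\ast}_{Y(w)}(\tau_u)-G^{\ast}_{Y(w)}(\tau_\ell).
\]
This is legitimate because $F^{-1}_{Y(w)}$ is integrable on $(0,1)$, since $\int_0^1|F^{-1}_{Y(w)}(u)|\,du=\E|Y(w)|<\infty$ by the quantile transform. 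Substituting this for each $w$ into the definition of $\theta(\tau_\ell,\tau_u)$ gives the displayed formula.

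Finally, $G^{\ast}_{Y(w)}$ is the Legendre--Fenchel conjugate of $G_{Y(w)}$, and Step 1 identified $G_{Y(w)}$ from the distribution of $(Y,W)$. Hence $\theta(\tau_\ell,\tau_u)$ is identified.

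\emph{Main obstacle.} The argument is essentially bookkeeping. The only delicate point is integrability at the endpoints $\tau_\ell=0$ and $\tau_u=1$, where $F^{-1}_{Y(w)}$ may be unbounded. I handle this through the $L^1$ bound $\int_0^1|F^{-1}_{Y(w)}|\,du=\E|Y(w)|$ from Step 2, which guarantees that $G^{\ast}_{Y(w)}(0)=0$ and $G^{\ast}_{Y(w)}(1)=\E[Y(w)]$ are finite, so the subtraction is well defined.
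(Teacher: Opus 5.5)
Your proof is correct and follows essentially the same route as the paper: random assignment plus consistency equate the law of $Y$ given $W=w$ with that of $Y(w)$, and then Proposition~\ref{proposition:L2-min-1} in its unconditional form gives $G^{\ast}_{Y(w)}(\tau)=\int_0^{\tau}F^{-1}_{Y(w)}(u)\,du$, which you split over $[\tau_\ell,\tau_u]$. The only difference is cosmetic: you apply independence directly to $(y-Y(w))_{+}$, which makes the first display explicit, whereas the paper shows $\Pr\{Y\le y\mid W=w\}=F_{Y(w)}(y)$ and leaves the ReLU identity implicit through Proposition~\ref{proposition:L2-min-1}(a).
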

\vspace{0.3cm}

The theorem establishes identification of the AQTE through the
Legendre-Fenchel transform of the integrated distribution
function $G_{Y(w)}(\cdot)$, which is itself identified from the
conditional distribution of $Y$ given $W = w$ under random
assignment, for each $w \in \{0, 1\}$. The AQTE is point-identified
without imposing continuity of the potential outcome distributions 
and captures distributional treatment effects across quantile ranges through the choice of the quantile interval $[\tau_{\ell}, \tau_{u}]$.

\subsection{Estimation and Asymptotic Properties}

Suppose that we observe an i.i.d.\ sample
$\{(W_{i}, Y_{i})\}_{i=1}^{n}$ drawn from the joint distribution of
$(W, Y)$. For each $w \in \{0,1\}$, we estimate the integrated
distribution function by the treatment-group average
\begin{equation*}
  \widehat{G}_{Y(w)}(y)
  :=
  \frac{\sum_{i=1}^{n}\1\{W_i=w\}(y-Y_i)_{+}}
  {\sum_{i=1}^{n}\1\{W_i=w\}}.
\end{equation*}
This is the fitted value from the ReLU regression on a constant and
$W$. Its convex conjugate
$\widehat{G}_{Y(w)}^{\ast}$ is obtained by the Legendre-Fenchel
transformation. For quantile indices $\tau_{\ell}, \tau_{u} \in
[0, 1]$ with $\tau_{\ell} < \tau_{u}$, the AQTE estimator is
\begin{equation}
  \label{eq:aqte-hat}
  \widehat{\theta}(\tau_{\ell}, \tau_{u})
  =
  \frac{1}{\tau_{u} - \tau_{\ell}}
  \big\{
    \big( \widehat{G}_{Y(1)}^{\ast}(\tau_{u})
      - \widehat{G}_{Y(1)}^{\ast}(\tau_{\ell}) \big)
    - \big( \widehat{G}_{Y(0)}^{\ast}(\tau_{u})
      - \widehat{G}_{Y(0)}^{\ast}(\tau_{\ell}) \big)
  \big\}.
\end{equation}

We derive the asymptotic distribution of $\widehat{\theta}(\tau_{\ell},
\tau_{u})$ under the following regularity condition.

\vspace{0.3cm}
\begin{assumption}
  \label{assump:aqte-regularity}
  The following conditions hold:
  \begin{enumerate}[label=(\alph*), noitemsep, topsep=0pt]
    \item[\textnormal{(a)}]
    $\{(W_{i}, Y_{i})\}_{i=1}^{n}$ is i.i.d.\ from the joint
    distribution of $(W, Y)$.
    \item[\textnormal{(b)}]
    $\E[Y(w)^{2}] < \infty$ for each $w \in \{0, 1\}$.
  \end{enumerate}
\end{assumption}
\vspace{0.3cm}

Assumption \ref{assump:aqte-regularity}(a) imposes independent
random sampling. 
Because $W$ is bounded, a square-integrable envelope holds under
the second-moment condition in Assumption
\ref{assump:aqte-regularity}(b). The empirical-process argument of
Theorem \ref{theorem:aym} therefore applies to the treatment-group
averages.
The resulting
weak limit is a zero-mean Gaussian process $\mathbb{B}_{w}(\cdot)$
in $\ell^{\infty}(\Rex)$ for each $w \in \{0,1\}$,
with covariance kernel
\begin{equation*}
\Sigma_{w}(y_{1}, y_{2})
    :=
\frac{1}{\Pr(W = w)}
    \big[
\E[(y_{1} - Y(w))_{+}(y_{2} - Y(w))_{+}]
- 
G_{Y(w)}(y_{1}) 
G_{Y(w)}(y_{2})
    \big],
\end{equation*}
and $(\mathbb{B}_{0}, \mathbb{B}_{1})$ are independent.

\vspace{0.3cm}
\begin{theorem}
  \label{theorem:AQTE-asymp}
  Suppose Assumptions 
  \ref{assump:random-assignment}
  and 
  \ref{assump:aqte-regularity} hold. 
  Then,  for fixed $\tau_{\ell}, \tau_{u} \in [0, 1]$ 
  with $\tau_{\ell} < \tau_{u}$, we have 
  \begin{equation*}
    \sqrt{n}
    \big(
      \widehat{\theta}(\tau_{\ell}, \tau_{u}) 
      -
      \theta(\tau_{\ell}, \tau_{u})\big) \rightsquigarrow
      \dfrac{1}{\tau_{u} - \tau_{\ell}}
      \big\{
        \big( \mathbb{Z}_{1}(\tau_{u}) - \mathbb{Z}_{1}(\tau_{\ell}) \big)
        - \big( \mathbb{Z}_{0}(\tau_{u}) - \mathbb{Z}_{0}(\tau_{\ell}) \big)
      \big\},     
  \end{equation*}
 where
  $
    \mathbb{Z}_{w}(\tau)
    :=
    \lim_{\eta \downarrow 0}
    \sup_{y \in \partial_{\eta} 
    G_{Y(w)}^{\ast}(\tau)}
    \{ -\mathbb{B}_{w}(y) \}
  $
  for each $w \in \{0, 1\}$ and $\tau \in \{\tau_{\ell}, \tau_{u}\}$.
  For $\eta>0$,
  the $\eta$-subdifferential of
  $G_{Y(w)}^{\ast}$ at $\tau\in[0,1]$ is defined by 
  \begin{equation*}
  \partial_{\eta}G_{Y(w)}^{\ast}(\tau)
  :=
  \big \{
  y\in\R:
  G_{Y(w)}^{\ast}(\tau')
  \ge
  G_{Y(w)}^{\ast}(\tau)
  +y(\tau'-\tau)-\eta
  \quad
  \text{for every }\tau'\in[0,1]
  \big \}.
  \end{equation*}
  Also, the delta-method bootstrap 
  based on a consistent estimator of the $\eta$-subdifferential  
  $\partial_{\eta} G_{Y(w)}^{\ast}(\tau)$ for $w \in \{0,1\}$
  consistently estimates the limit law above.

\end{theorem}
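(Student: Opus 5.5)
We prove Theorem \ref{theorem:AQTE-asymp} in three steps: joint weak convergence of the two treatment-group processes, the Hadamard directionally differentiable delta method for the conjugate map, and the bootstrap argument of \cite{fang2019inference}.

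\textbf{Step 1: joint weak convergence.} The estimator $\widehat{G}_{Y(w)}(\cdot)$ is the fitted value of the ReLU regression of $(y-Y)_{+}$ on $(1,W)$ evaluated at $W=w$. This specification is saturated, and Theorem \ref{theorem:AQTE} gives $G_{Y(w)}(y)=\E[(y-Y)_{+}\mid W=w]$, so Theorem \ref{theorem:aym} applies with $X=(1,W)^{\top}$ and $x=(1,w)^{\top}$. Assumption \ref{assump:aqte-regularity}(b) together with boundedness of $W$ gives Assumption \ref{assump:regularity}(b). Hence
\begin{equation*}
\sqrt{n}\big(\widehat{G}_{Y(w)}-G_{Y(w)}\big)\rightsquigarrow \mathbb{B}_{w}
\quad\text{in } \ell^{\infty}(\Rex).
\end{equation*}
To get the pair jointly, write each process as the ratio of $n^{-1}\sum_i \1\{W_i=w\}(y-Y_i)_{+}$ to $n^{-1}\sum_i\1\{W_i=w\}$ and linearize. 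The influence function is then $\Pr(W=w)^{-1}\1\{W=w\}\{(y-Y)_{+}-G_{Y(w)}(y)\}$. These influence functions are supported on the disjoint events $\{W=1\}$ and $\{W=0\}$, so their cross-covariance is zero. A joint Gaussian limit with zero cross-covariance gives independence of $\mathbb{B}_{0}$ and $\mathbb{B}_{1}$. Computing the variance and using $Y(w)\independent W$ yields the kernel $\Sigma_{w}$.

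\textbf{Step 2: delta method.} Apply Theorem \ref{theorem:L-weak}(a) at the points $\tau\in\{\tau_{\ell},\tau_{u}\}$, separately for each $w$. Because the sample-splitting in Step 1 makes the two limits independent, the four coordinates $\mathbb{Z}_{w}(\tau)$ converge jointly.

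The one genuinely new ingredient is identifying the approximate-maximizer set $S_{\eta}(\tau)$ with the $\eta$-subdifferential $\partial_{\eta}G^{\ast}_{Y(w)}(\tau)$. The claim is that $y\in S_{\eta}(\tau)$ if and only if $G^{\ast}(\tau)+G(y)-\tau y\le\eta$. To show this, use that $G$ is closed, proper and convex, so Fenchel--Moreau gives $G=G^{\ast\ast}$, that is,
\begin{equation*}
G(y)=\sup_{\tau'\in[0,1]}\{\tau' y-G^{\ast}(\tau')\}.
\end{equation*}
Substituting this, the inequality becomes $G^{\ast}(\tau')\ge G^{\ast}(\tau)+y(\tau'-\tau)-\eta$ for all $\tau'$, which is exactly membership in $\partial_{\eta}G^{\ast}(\tau)$.

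Two side facts are needed. First, $G^{\ast}$ is finite on $[0,1]$: it equals the integrated quantile function by Proposition \ref{proposition:L2-min-1}(b), and $\E|Y(w)|<\infty$. Second, $G^{\ast}=+\infty$ outside $[0,1]$, because $G(y)\to0$ as $y\to-\infty$ and $G(y)-y\to-\E[Y(w)]$ as $y\to+\infty$. This second fact justifies restricting the supremum to $\tau'\in[0,1]$.

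The AQTE estimator is a fixed linear map of these four coordinates. The continuous mapping theorem, together with the representation in Theorem \ref{theorem:AQTE}, then delivers the stated limit.

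\textbf{Step 3: bootstrap.} For validity of the delta-method bootstrap we use \cite{fang2019inference}, Theorem 3.2. Two ingredients are needed.

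The first is consistency of the exchangeable bootstrap for $(\mathbb{B}_{0},\mathbb{B}_{1})$, which holds by \cite{praestgaard1993exchangeably} and the Donsker property used in Step 1.

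The second is a consistent estimator of the directional derivative $\phi'_{G}(h)=\lim_{\eta\downarrow0}\sup_{y\in S_{\eta}}\{-h(y)\}$. The candidate is $\hat\phi'(h)=\sup_{y\in\widehat{S}_{\eta_n}(\tau)}\{-h(y)\}$, which must satisfy Lipschitz continuity in $h$ and pointwise consistency. Lipschitz continuity is immediate, since it is a supremum of $-h$ over a set, with constant one in the sup norm.

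\textbf{Main obstacle.} The hard part is pointwise consistency, which amounts to sandwiching the estimated set $\widehat{S}_{\eta_n}$ between population sets, as in \cite{chernozhukov2013intersection}. Let $r_n=O_p(n^{-1/2})$ denote the uniform error of $\tau y-\widehat{G}(y)$. By Theorem \ref{theorem:aym} this error is uniform over $\Rex$, and by the Lipschitz property of the supremum the same rate holds for $\widehat{G}^{\ast}(\tau)$. Since $\sqrt{n}\eta_n\to\infty$, with probability approaching one
\begin{equation*}
S_{\eta_n/2}(\tau)\subseteq\widehat{S}_{\eta_n}(\tau)\subseteq S_{2\eta_n}(\tau).
\end{equation*}
Taking the supremum of $-h$ over these sets and using monotonicity in $\eta$ sandwiches $\hat\phi'(h)$ between quantities that both converge to $\phi'_{G}(h)$ as $\eta_n\downarrow0$.

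Extra care is needed at $\tau\in\{0,1\}$, where the sets $S_{\eta}$ are unbounded. The plan there is to argue in the compactified metric $\rho$: because $h$ is uniformly continuous on $(\Rex,\rho)$, its suprema over nested sets converge even when those sets escape to $\pm\infty$.
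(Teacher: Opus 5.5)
Your proposal is correct and follows essentially the same route as the paper: Donsker convergence of the two treatment-group processes, independence from the disjoint influence-function supports, the fixed-$\tau$ directional delta method of Theorem~\ref{theorem:L-weak}(a), and the continuous mapping theorem for the linear AQTE combination; your Fenchel--Moreau identification of $S_{\eta}(\tau)$ with $\partial_{\eta}G^{\ast}_{Y(w)}(\tau)$ and your set-sandwich argument for the bootstrap supply details that the paper only asserts. One small correction: the four conjugate coordinates converge jointly because $\bigl(\sqrt{n}(\widehat{G}_{Y(0)}-G_{Y(0)}),\sqrt{n}(\widehat{G}_{Y(1)}-G_{Y(1)})\bigr)$ converges jointly in $\ell^{\infty}(\Rex)^{2}$, which you already show in Step 1 and which also follows directly from the vector convergence in Theorem~\ref{theorem:aym}; independence of the limits alone would not give joint convergence.
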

\vspace{0.3cm}

Theorem \ref{theorem:AQTE-asymp} characterizes the
limit law as a linear combination of the Hadamard directional
derivatives of the Legendre-Fenchel transform evaluated at
$G_{Y(0)}$ and $G_{Y(1)}$. 
At points $\tau$ where $F_{Y(w)}^{-1}$
is continuous, the subdifferential
$\partial G_{Y(w)}^{\ast}(\tau)$ collapses to a singleton and
$\mathbb{Z}_{w}(\tau) = -\mathbb{B}_{w}(F_{Y(w)}^{-1}(\tau))$, and the limit
law is a centered Gaussian. 
At points where $F_{Y(w)}^{-1}$ is discontinuous at $\tau$,
$\partial G_{Y(w)}^{\ast}(\tau)$ has positive length, and
$\mathbb{Z}_{w}(\tau)$ is the supremum of a Gaussian process over that
subdifferential.\footnote{
  Notice that since the ReLU regression is saturated, 
  $G_{Y(w)}(\cdot)$ is
  convex and thus 
  the $\eta$-subdifferential 
  $\partial_{\eta} G_{Y(w)}^{\ast}(\tau)$ 
  coincides with the unconditional counterpart of the approximate-maximizer sets
  $S_{\eta}(\tau, x)$ of Section~\ref{sec:estimation}.
}
At the endpoints, 
$\mathbb{Z}_{w}(0) = 0$ and 
$\mathbb{Z}_{w}(1) = -\mathbb{B}_{w}(+\infty)$,
so for $(\tau_{\ell}, \tau_{u}) = (0, 1)$ the limit reduces to the
standard difference-in-means asymptotics for the average treatment
effect.
To ensure valid bootstrap inference, the delta-method bootstrap is required.
Following Section \ref{sec:estimation}, the delta-method bootstrap based on the sample
$\eta_n$-subdifferential
$\partial_{\eta_n}\widehat{G}_{Y(w)}^{\ast}(\tau)$ consistently
estimates the limit law above 
with $\eta_{n} \to 0$ and $\sqrt{n}\, \eta_{n} \to \infty$.

\subsection{AQTE Estimation with Covariates}
\label{sec:aqte-covariates}

The preceding subsection considers the specification without control
variables. With a slight abuse of notation, we now use $X$ to denote a
vector of pre-treatment covariates and suppose that
$(Y(0),Y(1))\independent W \mid X$ with overlap.
Let $b(X)$ be a finite-dimensional vector of basis functions of $X$
that includes a constant.

For each $y\in\R$, we consider the fully interacted ReLU regression
\begin{equation*}
(y-Y)_{+}
=
b(X)^{\top}\gamma_{0}(y)
+
Wb(X)^{\top}\delta_{0}(y)
+
\epsilon(y),
\end{equation*}
where 
$\gamma_{0}(y)$
and 
$\delta_{0}(y)$
are unknown coefficient vectors 
and 
$\epsilon(y)$
is an error term. 
Under the correct specification, 
$\E[\epsilon(y) | W,X]=0$.
This conditional mean
restriction is satisfied when $X$ has finite support and $b(X)$ is
chosen as a saturated set of stratum indicators.
Moreover, under the selection-on-observables assumption, 
$
  \E[(y-Y)_{+}| W=w,X=x]
  =
  G_{Y(w)| X}(y | x),
$
for each $w\in\{0,1\}$,
where 
$G_{Y(w)|X}(y|x)$
denotes the integrated conditional distribution function 
of $Y(w)$ given $X$.

Let
$\hat{\gamma}(y)$
and 
$\hat{\delta}(y)$ denote the least-squares estimators. 
For each treatment status $w \in \{0,1\}$, 
we estimate $G_{Y(w) | X}(y| x)$ 
by fitted value $\widehat{G}_{Y(w)|X}(y|x)$ from the regression above.
Then, the marginal integrated distribution function is recovered by
integrating the conditional function over the distribution of $X$.
Accordingly, we use 
\begin{equation*}
\widehat{G}_{Y(w)}(y)
=
\frac{1}{n}\sum_{i=1}^{n}
\widehat{G}_{Y(w)| X}(y | X_i),
\end{equation*}
to construct its conjugate 
$\widehat{G}_{Y(w)}^{\ast}(\cdot)$
and the marginal AQTE estimator as defined above.
This fully interacted specification is used for the empirical application in Section \ref{sec:application}.

In Appendix B, we provide the assumptions and identification results
for the conditional AQTE, together with detailed asymptotic results.
In particular, Theorem \ref{theorem:marginalized-asymp} establishes the
weak convergence of the marginalized estimator and the validity of the
corresponding delta-method bootstrap.

\section{Application}
\label{sec:application}

In this empirical application, we examine the effect of winning the Medicaid lottery on health-care utilization. The 2008 expansion
of the Oregon Medicaid program offered enrollment to a randomly
selected subset of low-income uninsured adults from a waitlist. The resulting Oregon Health Insurance Experiment is a
randomized evaluation of the distributional effects of this lottery offer \citep{finkelstein2012oregon}. The original analysis of
\citet{finkelstein2012oregon} reports intention-to-treat estimates
from a linear regression of the outcome on the lottery offer.
\citet{chernozhukov2020generic} revisit the experiment and report
quantile treatment effects for the count outcome through
quantile bands.

We consider $W = 1$ for individuals whose household was selected in the lottery and $W = 0$ otherwise.
The outcome $Y$ is the number of outpatient visits in the
six-month period preceding the survey. 
The lottery offer was randomly assigned, 
conditional on listed household size.
We collect the design features in
the vector $X$, which records six survey-wave dummies, two
household-size dummies, and their interactions,
and thus 
the regression model is fully saturated.
Conditional on
$X$, the assignment is independent of the potential outcomes,
$(Y(0), Y(1)) \independent W \mid X$. The marginal distribution of
$Y(w)$ is therefore identified by averaging the conditional
distribution of $Y$ given $(W = w, X)$ over the marginal
distribution of $X$. The estimand is the average quantile
treatment effect $\theta(\tau_{\ell}, \tau_{u})$ of
Section \ref{sec:treatment}. It has the intention-to-treat
interpretation as the design-averaged quantile difference between
the offered and not-offered populations on the subinterval
$[\tau_{\ell}, \tau_{u}]$.

Because $Y$ is a count variable, its distribution function is a step function, and the quantile function inherits the jumps and flat regions.
For inference in this setting, \citet{chernozhukov2020generic} 
invert distribution-regression bands to construct confidence bands for the quantile functions and then obtain confidence bands for quantile effects via the Minkowski difference of these bands.
We instead estimate the integrated distribution function for each treatment group by ordinary least squares, and then integrate out the design controls through their empirical distribution. 
The integrated quantile function is recovered as its convex conjugate. This yields the average quantile treatment effect as a difference of conjugates evaluated at two probability levels.
Inference uses the delta-method bootstrap of Section \ref{sec:estimation} based on the empirical subdifferential interval and requires no
numerical differentiation.

Our data come from the public release of the Oregon Health
Insurance Experiment \citep{finkelstein2012oregon}. 
The sample
consists of 23,441 individuals across 20,747 households on the lottery list, comprising 11,651 individuals with $W=1$ and 11,790 individuals with $W=0$.
The outcome $Y$ is right-skewed and concentrated on small
integers. It has a substantial mass at zero and a sparse upper
tail extending to several dozen visits. The empirical
distributions of $Y$ for the treatment and control groups are reported in
Figure \ref{fig:application-histogram}.

We implement the fully interacted ReLU regression and marginalization
procedure described in Section \ref{sec:aqte-covariates}. For each
$y \in \R$, we regress $(y-Y_i)_{+}$ on
$X_i$ and $W_iX_i$, where $X_i$ includes a constant. 
We then average the fitted conditional
integrated distribution functions over the empirical distribution of
$X$.
Finally, we apply the Legendre-Fenchel transform to obtain
$\widehat{G}_{Y(w)}^{\ast}$.

We report the average quantile treatment effect
$\widehat{\theta}(\tau, \tau + 0.10)$ on the grid
$\tau \in \{0, 0.1, \ldots, 0.9\}$. Inference uses the
delta-method bootstrap described in
Section \ref{sec:estimation}. We draw the bootstrap weights at
the household level following Appendix C. Sampling weights from
the public release enter the ReLU regression as observation weights.

\begin{figure}[htbp]
  \centering
  \caption{Empirical distribution of outpatient visits by treatment status.}
  \includegraphics[width=0.9\textwidth]{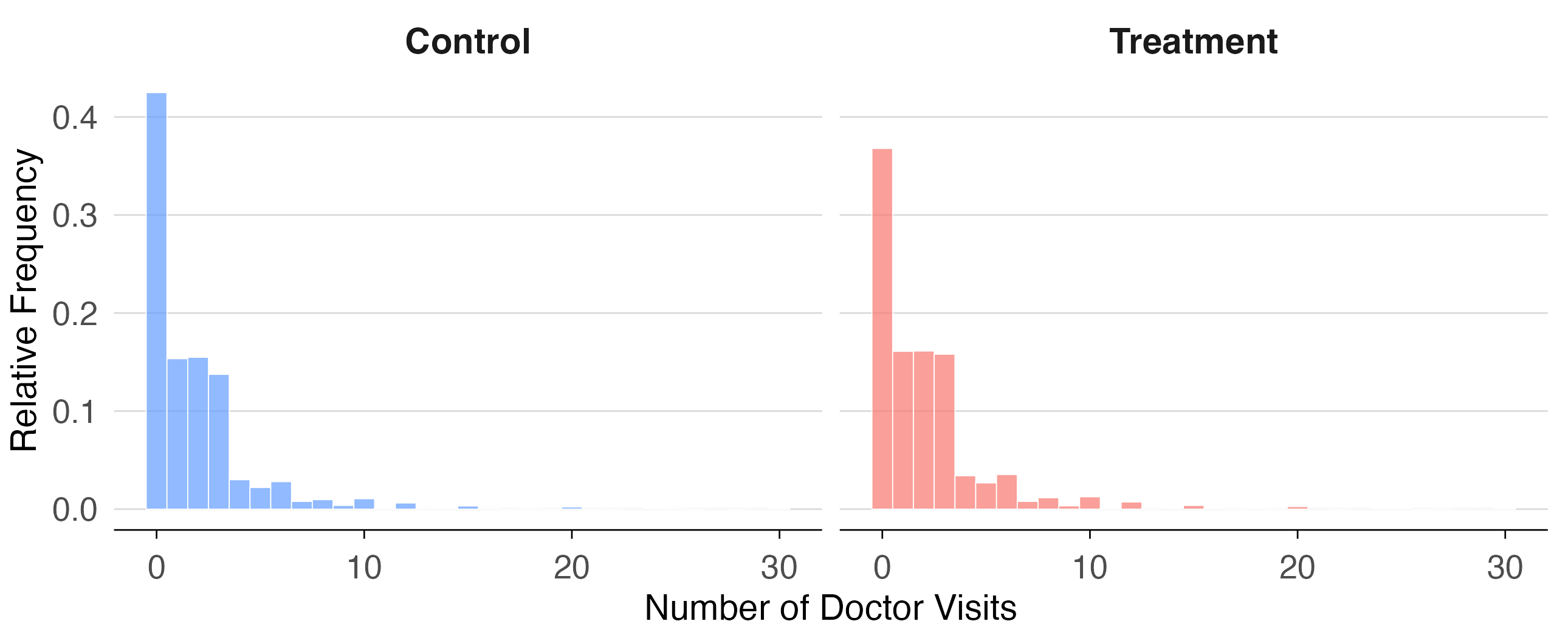}
  \label{fig:application-histogram}
  \begin{minipage}{0.92\textwidth}
    \small
    \begin{spacing}{1.0}
    \textit{Notes:} The figure reports the empirical histogram of
    the number of outpatient visits in the six months preceding
    the survey. The left panel pools the lottery losers ($W = 0$)
    and the right panel the lottery winners ($W = 1$). The
    horizontal axis is truncated at thirty visits.
    \end{spacing}
  \end{minipage}
\end{figure}

Figure \ref{fig:application-histogram} shows that the outcome
distribution is concentrated on a small number of integer values
and shares the same support across the two groups. The
point mass at zero is visible in both panels and is smaller in
the lottery-winner panel than in the lottery-loser panel. The
mass at zero and the sparse upper tail motivate working with the
integrated distribution function rather than with the quantile
function directly.

\begin{figure}[H]
  \centering
  \caption{Estimated integrated quantile functions.}
  \includegraphics[width=0.45\textwidth]{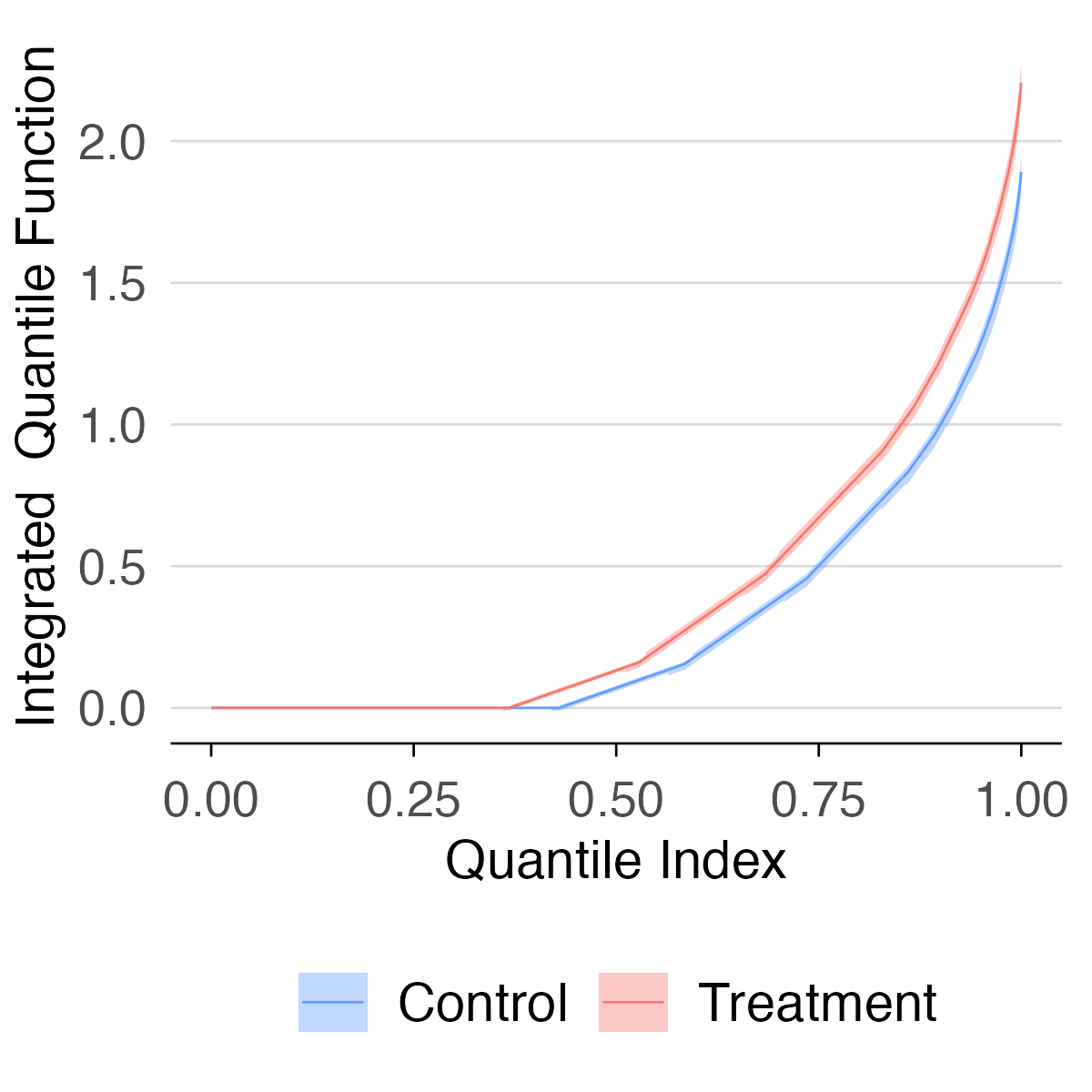}
  \label{fig:application-conjugate}
  \begin{minipage}{0.9\textwidth}
    \small
    \begin{spacing}{1.0}
    \textit{Notes:} 
    The figure reports the estimated integrated
    quantile function  for
    $w \in \{0, 1\}$ on $\tau \in (0, 1)$. The shaded bands
    are pointwise $95\%$ confidence intervals based on the
    delta-method bootstrap with 
    bootstrap weights drawn at the household level.
    \end{spacing}
  \end{minipage}
\end{figure}

Figure \ref{fig:application-conjugate} reports the estimated
integrated quantile functions $\widehat{G}_{Y(w)}^{\ast}(\tau)$ for the
two treatment groups. Both curves are nondecreasing and convex by
construction. At points of differentiability, the slope at a
probability level $\tau$ identifies the marginal quantile
$F_{Y(w)}^{-1}(\tau)$. Both curves are flat over the lower
portion of the unit interval, reflecting the point mass at zero
in the marginal distribution of $Y(w)$, and start to rise around
$\tau \approx 0.35$. The treatment-group curve sits uniformly above the control-group curve over the upper portion. 
The change in vertical separation between probability levels
$\tau_{\ell}$ and $\tau_{u}$, divided by $\tau_{u} - \tau_{\ell}$,
equals the average quantile treatment effect over
$[\tau_{\ell}, \tau_{u}]$.

Figure \ref{fig:application-aqte} reports the estimated average
quantile treatment effect across the probability range. The
estimates are near zero on the three lowest subintervals,
mirroring the flat region of Figure
\ref{fig:application-conjugate} below $\tau \approx 0.35$. They are positive on every subinterval above $\tau = 0.3$ and are
larger at higher quantiles on average. The lottery offer 
therefore affects the upper tail of utilization more strongly than
the lower tail. The confidence intervals widens with $\tau$ and is widest on the topmost subinterval, reflecting the sparse upper tail of the outcome distribution.

\begin{figure}[H]
  \centering
  \caption{Average quantile treatment effect.}
  \includegraphics[width=0.7\textwidth]{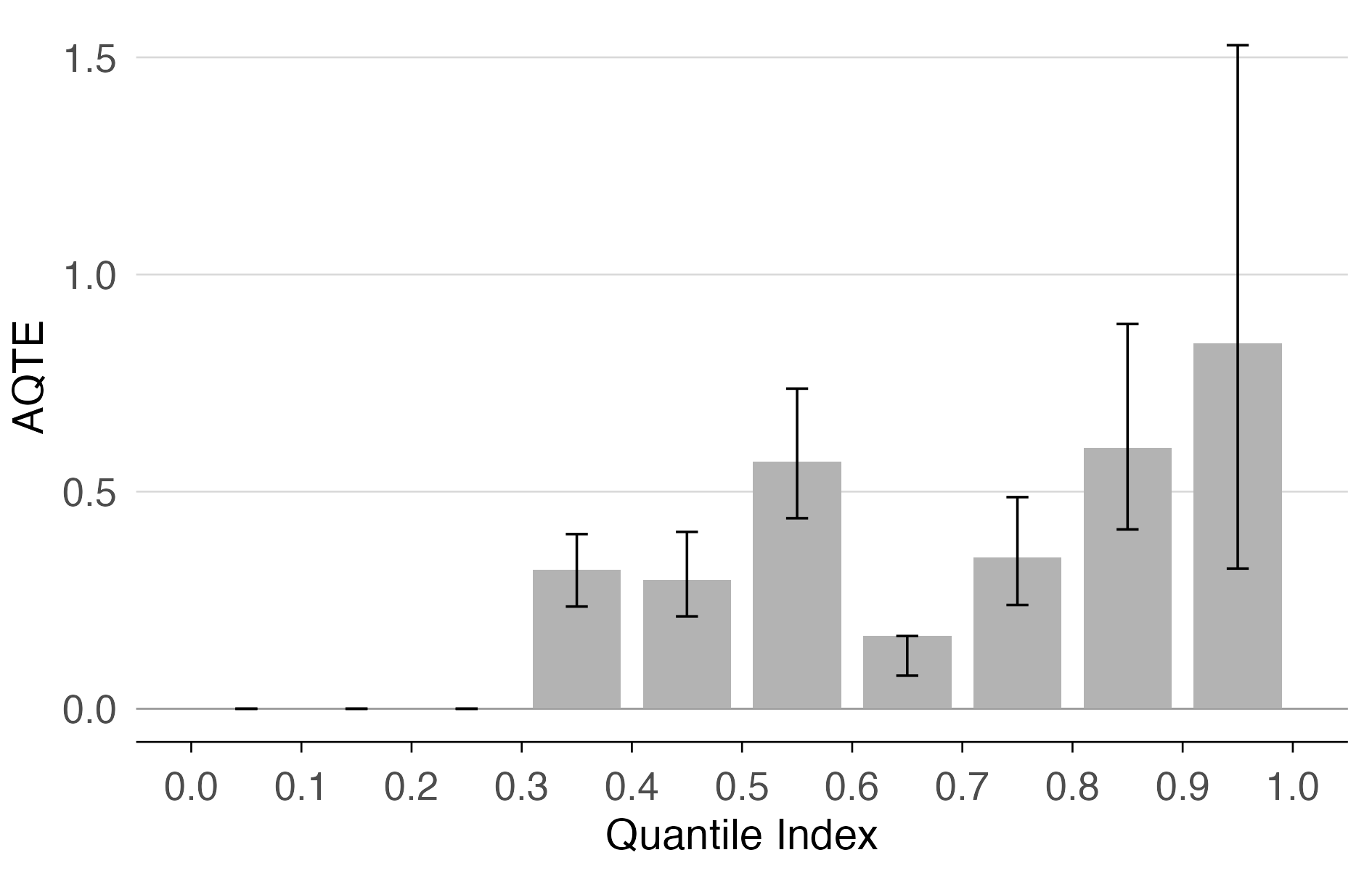}
  \label{fig:application-aqte}
  \begin{minipage}{0.9\textwidth}
    \small
    \begin{spacing}{1.0}
    \textit{Notes:} The figure reports the estimated average
    quantile treatment effect
    $\{\widehat{\theta}(\tau, \tau + 0.10) : \tau \in
    \{0.0, 0.1, \ldots, 0.9\}\}$ together with the pointwise
    $95\%$ confidence intervals based on the delta-method bootstrap with bootstrap weights drawn at the
    household level. Each bar is plotted at the midpoint of the
    corresponding subinterval.
    \end{spacing}
  \end{minipage}
\end{figure}

The estimated pattern has a clear distributional interpretation. 
The flat lower portion of both estimated integrated quantile functions reflects the large share of individuals with zero outpatient visits across the treatment and control groups over the six-month window.
Thus, the AQTE estimates are close to zero over the three lowest quantile subintervals.
Over higher subintervals, the estimated AQTEs are positive and tend to be
 larger toward the upper tail.
These range-specific distributional contrasts document distributional effects of the lottery offer across quantile ranges.

\section{Conclusion}
\label{sec:conclusion}

This paper develops a regression model with the rectified linear
unit transformation applied to the outcome variable. 
The ReLU regression estimates the best linear approximation to the integrated conditional distribution function in closed form.
Through the Legendre-Fenchel transform, the corresponding convex conjugate approximates the integrated conditional quantile function. 
Both approximations are exact under correct specification, 
which holds for saturated specifications.
The framework requires only finite moments and standard rank conditions,
and it accommodates outcome distributions that are discrete, mixed,
or otherwise non-continuous without modification.

We establish the uniform asymptotic distribution of the estimator
as a process indexed by the outcome location, and we develop
inference for the conjugate functional through the delta method
for Hadamard directionally differentiable maps. As an application,
the framework identifies average quantile treatment effects over
arbitrary quantile subintervals under random assignment of a
binary treatment, with bootstrap inference based on the same
delta-method procedure.

\clearpage 
\setstretch{0.1}
\bibliographystyle{chicago}
\bibliography{ReLU}

\clearpage

\setstretch{1.2}
\setcounter{section}{0} 
\setcounter{figure}{0} 
\setcounter{table}{0} 
\setcounter{equation}{0} 
\setcounter{lemma}{0}\setcounter{page}{1}
\setcounter{theorem}{0}\setcounter{page}{1}
\setcounter{proposition}{0}\setcounter{page}{1}

\renewcommand{\thepage}{A-\arabic{page}}

\renewcommand{\theequation}{A.\arabic{equation}}
\renewcommand{\theHequation}{A.\arabic{equation}}
\renewcommand{\thelemma}{A.\arabic{lemma}}
\renewcommand{\theHlemma}{A.\arabic{lemma}}
\renewcommand{\thetheorem}{A.\arabic{theorem}}
\renewcommand{\theHtheorem}{A.\arabic{theorem}}
\renewcommand{\theproposition}{A.\arabic{proposition}}
\renewcommand{\theHproposition}{A.\arabic{proposition}}
\renewcommand{\thedefinition}{A.\arabic{definition}}
\renewcommand{\theHdefinition}{A.\arabic{definition}}

\renewcommand\thesection{\Alph{section}} 
\renewcommand\theHsection{A}
\renewcommand\thesubsection{A.\arabic{subsection}}
\renewcommand\theHsubsection{A.\arabic{subsection}}
\renewcommand\thefigure{A.\arabic{figure}}
\renewcommand\theHfigure{A.\arabic{figure}}
\renewcommand\thetable{A.\arabic{table}}
\renewcommand\theHtable{A.\arabic{table}}

\begin{center}
  \section*{Appendix}
\end{center}

\section*{Appendix A: Theoretical Results}

In what follows, let $\lesssim$ denote inequality up to a constant
independent of $n$.
For a symmetric matrix $A$, $\lambda_{\min}(A)$ denotes its smallest
eigenvalue.
Let $e_{k}$ be a vector with its $k$th entry being 1 and the other entries being 0.

We equip the extended real line $\Rex$ with the metric
$\rho: \Rex^2 \to [0, \pi]$, defined by, 
for $y, y' \in \Rex$, 
\begin{equation*}
  \rho(y,y')
  :=
  |\arctan(y)-\arctan(y')|,    
\end{equation*}
with the convention  
$\arctan(\pm \infty) = \pm \pi /2$. 
The metric space 
$(\Rex, \rho)$ 
is shown to be compact 
and 
allows us to consider asymptotic properties of the process on the whole extended real line 
while $\rho$ induces the usual topology on $\R$
\citep[see e.g., p.~26,][]{van1996weak}.

\subsection{Technical Lemmas}

The technical lemmas are available upon request.

\subsection{Proofs of the Main Results}

\vspace{0.5cm}
\begin{proof}[\normalfont \textbf{Proof of Proposition \ref{proposition:L2-min-1}}]

  \textbf{Part (a).}
  Fix $x \in \X$ and $ y \in \R$. 
  Since 
  $\E[ |Y| ] < \infty$
  and thus 
  $\E[|Y| | X = x] < \infty$ almost surely,
  we have that 
  $G_{Y | X}(y|x) \equiv \int_{-\infty}^{y} F_{Y | X}(s | x) \, ds$ is finite. 
  For any $s \le y$, we can show that  
  $y - s = \int_{-\infty}^{y}
  \1\{s \le v\} \, dv$,  and applying Fubini's
  theorem, we have 
  \begin{align*}
    \E[(y - Y)_{+} | X = x]
    &=
    \int_{-\infty}^{y} (y - s) \, dF_{Y | X}(s | x)\\
    &=
    \int_{-\infty}^{y}
      \bigg( \int_{-\infty}^{y} \1\{s \le v\} \, dv \bigg)
      dF_{Y | X}(s | x)\\
    &=
    \int_{-\infty}^{y} F_{Y | X}(v | x) \, dv . 
  \end{align*}
  Here, the integrand $F_{Y | X}(\cdot | x)$ is nondecreasing,
  and thus  
  $G_{Y | X}(\cdot | x)$ is convex. 
  The right derivative of
  $G_{Y | X}(\cdot | x)$ at $y$ equals $F_{Y | X}(y | x)$ by the right
  continuity of the conditional distribution function, 
  whereas its left
  derivative equals $F_{Y | X}(y{\scalebox{0.7}{$-$}} | x)$. For a convex function
  on $\R$, the subdifferential at any point coincides with the closed
  interval bounded by its one-sided derivatives \citep[Theorem
  24.1]{rockafellar1997convex}. Thus, 
  $\partial G_{Y | X}(y | x) =
  [F_{Y | X}(y{\scalebox{0.7}{$-$}} | x),\, F_{Y | X}(y | x)]$.

  \textbf{Part (b).}
  The conjugate function $G_{Y | X}^{\ast}(\cdot | x)$ is a pointwise supremum of
  affine functions and thus convex. 
  The map $G_{Y | X}(\cdot | x)$ is
  proper closed convex with $\lim_{y \to -\infty} G_{Y | X}(y | x) = 0$,
  so by Theorem 23.5 of \citet{rockafellar1997convex} the
  subdifferential inversion
  \begin{equation*}
    \tau \in \partial G_{Y | X}(y | x)
    \iff
    y \in \partial G_{Y | X}^{\ast}(\tau | x)
  \end{equation*}
  holds for every $y \in \R$ and $\tau \in (0,1)$. Combining this with
  Part (a) shows that $y \in \partial G_{Y | X}^{\ast}(\tau | x)$ if and
  only if $\tau \in [F_{Y | X}(y{\scalebox{0.7}{$-$}} | x), F_{Y | X}(y | x)]$. The set
  of such $y$ is the closed interval
  $[F_{Y | X}^{-1}(\tau | x),\, 
  F_{Y | X}^{-1}(\tau \scalebox{0.7}{$+$} | x)]$
  by the definition of the generalized inverse, hence
  \begin{equation*}
    \partial G_{Y | X}^{\ast}(\tau | x)
    =
    [F_{Y | X}^{-1}(\tau | x),\,
    F_{Y | X}^{-1}(\tau {\scalebox{0.7}{$+$}} | x)].
  \end{equation*}
  Since $G_{Y | X}^{\ast}(\cdot | x)$ is convex and its subdifferential
  contains the nondecreasing function $F_{Y | X}^{-1}(\cdot | x)$, the
  function $G_{Y | X}^{\ast}(\cdot | x)$ is the primitive of
  $F_{Y | X}^{-1}(\cdot | x)$ on $(0, 1)$. Evaluating
  $G_{Y | X}^{\ast}(0 | x) = \sup_{y \in \R}\{-G_{Y | X}(y | x)\} = 0$, since
  $G_{Y | X}(\cdot | x) \ge 0$ and $\lim_{y \to -\infty} G_{Y | X}(y | x) =
  0$. 
  At $\tau=1$, 
  using the identity
  $y-(y-Y)_{+}=\min\{y,Y\}$, we can write 
  $
  G_{Y|X}^{\ast}(1|x)
  =
  \sup_{y\in\R}
  \int_{0}^{1}
  \min\{y,F_{Y|X}^{-1}(u|x)\} du.
  $
  The integral 
  $  \int_{0}^{1}
  \min\{y,F_{Y|X}^{-1}(u|x)\} du
  $
  is nondecreasing in $y$, and its integrand
  is dominated in absolute value by
  $|F_{Y|X}^{-1}(u|x)|$
  for $y\geq0$.
  Dominated convergence therefore yields
  $
  G_{Y|X}^{\ast}(1|x)
  =
  \int_{0}^{1}F_{Y|X}^{-1}(u|x)du
  $.
  Therefore,
  \begin{equation*}
    G_{Y | X}^{\ast}(\tau | x)
    =
    \int_{0}^{\tau} F_{Y | X}^{-1}(u | x) \, du,
  \end{equation*}
  for all $\tau \in [0, 1]$.
\end{proof}
\vspace{0.5cm}

Recall the covariance matrix 
$\Sigma(y_{1}, y_{2}) := \mathrm{Cov} \big( X \epsilon(y_{1}),\, X \epsilon(y_{2}) \big)$
as defined in Theorem \ref{theorem:aym}. 
We extend the error to the endpoints of $\Rex$ by
considering 
$\epsilon(-\infty) := 0$
and
$\epsilon(+\infty) := -\big( Y - X^{\top} Q_{X}^{-1} \E[XY] \big)$,
which are the $L^{2}(P)$ limits of $\epsilon(y)$ as $y \to -\infty$ and as
$y \to +\infty$, respectively.

\vspace{0.5cm}
\begin{proof}[\textnormal{\textbf{Proof of Theorem \ref{theorem:aym}}}]

  Under Assumptions \ref{assump:dgp} and \ref{assump:regularity},
  $n^{-1} \sum_{i=1}^{n} X_{i} X_{i}^{\top}$ is invertible with
  probability approaching one. On this event, for every $y \in \R$,
  \begin{equation*}
    \sqrt{n}\big( \hat{\beta}(y) - \beta_{0}(y) \big)
    =
    \bigg ( 
      \frac{1}{n}
      \sum_{i=1}^{n} X_{i} X_{i}^{\top}
    \bigg )^{-1}
    \frac{1}{\sqrt{n}}
    \sum_{i=1}^{n}
    f_{y}(X_{i}, Y_{i}), 
  \end{equation*}
  where 
  $ f_{y} \in \mathcal{F}$
  with 
  $ \mathcal{F}
  $ 
  denoting the class of $\R^p$-valued functions defined in
  Lemma A2.
  Since $\E[f_{y}(X, Y)] = 0$ for every $y \in \Rex$, we have
  $\E \big\| \mathbb{G}_{n} f_{y} - \mathbb{G}_{n} f_{\pm\infty} \big\|^{2}
  \le \| f_{y} - f_{\pm\infty} \|_{P,2}^{2}$,
  which vanishes as $y \to \pm\infty$ by Lemma
  A2(b). 
  On the event above, the inverse sample
  second-moment matrix is finite and does not depend on $y$.
  Thus, the right-hand side converges in probability to its endpoint
  value as $y\to\pm\infty$, extending the preceding representation to
  every $y\in\Rex$.

  By Lemma A3,
  \begin{equation*}
    \{\mathbb G_n f_y:y\in\Rex\}
    \rightsquigarrow
    \{\mathbb G f_y:y\in\Rex\}
    \quad\text{in }\ell^\infty(\Rex)^p,
  \end{equation*}
  where $\{\mathbb G f_y:y\in\Rex\}$ is a tight mean-zero Gaussian
  process with $\rho$-uniformly continuous sample paths.
   Since $\E[f_y(X,Y)]=0$ for every $y\in\Rex$, the covariance function
  of the limiting process is, for every 
  $y_{1}, y_{2} \in \Rex$,
  \begin{equation*}
    \E[f_{y_1}(X,Y)f_{y_2}(X,Y)^\top]
    =
    \E[XX^\top\epsilon(y_1)\epsilon(y_2)]
    =
    \Sigma(y_1,y_2).
  \end{equation*}
  Moreover, Lemma A2(a) implies that
  $\sup_{y_1,y_2\in\Rex}\|\Sigma(y_1,y_2)\|
  \le \E[F(X,Y)^2]<\infty$.

  Under Assumptions \ref{assump:dgp}(a) and \ref{assump:regularity}(a),
  the law of large numbers shows that 
  $n^{-1}\sum_{i=1}^{n} X_{i} X_{i}^{\top} = Q_{X} + o_{p}(1)$ with
  $Q_{X} := \E[X X^{\top}]$.
  By the weak
  convergence established above,
  $\sup_{y \in \Rex}\|\mathbb{G}_n f_y\| = O_p(1)$.  
  Thus, 
  an application of continuous mapping theorem and Slutsky's theorem yields 
  \begin{equation*}
    \sqrt{n}\big(\hat\beta(\cdot) - \beta_0(\cdot)\big)
    \rightsquigarrow
    \mathbb{B}(\cdot)
  \quad\text{in } \ell^\infty(\Rex)^{p}, 
  \end{equation*}
  where 
  $\mathbb{B}$ is the tight mean-zero $\R^p$-valued Gaussian process   
  defined by $\mathbb{B}(y) := Q_X^{-1}\,\mathbb{G}f_y$ 
  with
  $f_y \in \mathcal{F}$.

  For the second conclusion, fix $x \in \X$. For any $y \in \Rex$,
  $\widehat{G}_{Y | X}(y | x) - \widetilde{G}_{Y | X}(y | x) = x^{\top}\big(
  \hat{\beta}(y) - \beta_{0}(y) \big)$. The continuous mapping theorem
  applied to $\sqrt{n}\big(\hat{\beta}(\cdot) - \beta_{0}(\cdot)\big)
  \rightsquigarrow \mathbb{B}(\cdot)$ yields
  $\sqrt{n}\big( \widehat{G}_{Y | X}(\cdot | x) 
  - 
  \widetilde{G}_{Y | X}(\cdot | x) \big)
  \rightsquigarrow x^{\top}\mathbb{B}(\cdot)$ in
  $\ell^{\infty}(\Rex)$.
\end{proof}
\vspace{0.5cm}

\vspace{0.5cm}
\begin{proof}[\textnormal{\textbf{Proof of Theorem
\ref{theorem:L-weak}}}]

  \textbf{(a)}
  Fix $x \in \X$ and $\tau \in [0,1]$. 
  The limit calculations in the proof of
  Lemma A2(b)
  with 
  the continuity of
  $\widetilde{G}_{Y|X}(\cdot|x)$ imply that
  $y\mapsto\tau y-\widetilde{G}_{Y|X}(y|x)$ is bounded from above on $\R$,
  and thus $\widetilde{G}_{Y|X}^{*}(\tau|x)$ is finite.
  Analogously, 
  $\widehat{G}_{Y|X}^{*}(\tau|x)$ is finite
  on the event that the sample second-moment matrix is invertible.
  Thus, both conjugates are well-defined.
  By the definition of the Legendre-Fenchel transform, we can write 
  \begin{equation*}
      \widehat{G}_{Y | X}^{\ast}(\tau | x)
      - 
      \widetilde{G}_{Y | X}^{\ast}(\tau | x)
      =
      \sup_{y \in \R}
      \big\{
        \tau y 
        - 
        \widehat{G}_{Y | X}(y | x) 
        - 
        \widetilde{G}_{Y | X}^{\ast}(\tau | x)
      \big\} . 
  \end{equation*} 
  Adding and subtracting $\widetilde{G}_{Y | X}(y | x)$, we have 
  \begin{equation*}
          \tau y 
        - 
        \widehat{G}_{Y | X}(y | x) 
        - 
        \widetilde{G}_{Y | X}^{\ast}(\tau | x)
        = 
      -  \big (  
        \widehat{G}_{Y | X}(y | x) 
        - 
        \widetilde{G}_{Y | X}(y | x)
        \big ) 
        - 
        \Delta(y, \tau),      
  \end{equation*}
  where 
  $\Delta(y, \tau):=
        \widetilde{G}_{Y | X}(y | x) 
        +
        \widetilde{G}_{Y | X}^{\ast}(\tau | x)
        - \tau y 
  $
  is called the Fenchel-Young gap and is nonnegative for every $y \in \R$.
  It follows that
  \begin{equation*}
      \widehat{G}_{Y | X}^{\ast}(\tau | x)
      - 
      \widetilde{G}_{Y | X}^{\ast}(\tau | x)
      =
      \sup_{y \in \R}
      \sup_{ \eta >0: \eta \ge \Delta(y, \tau)}
      \big\{
      -  \big (  
        \widehat{G}_{Y | X}(y | x) 
        - 
        \widetilde{G}_{Y | X}(y | x)
        \big ) 
        - 
        \eta 
      \big \} . 
  \end{equation*} 
  Here, the two iterated suprema range over the same admissible pairs
  $(y, \eta)$, so their order can be exchanged. 
  \begin{equation*}
  \eta 
  \ge 
  \Delta(y,\tau)
  \quad\Longleftrightarrow\quad
  \tau y-\widetilde{G}_{Y| X}(y|x)
  \ge
  \widetilde{G}_{Y| X}^{\ast}(\tau|x)-\eta.
  \end{equation*}
  Thus, multiplying both sides by $\sqrt{n}$, we obtain 
  \begin{equation*}
    \sqrt{n}
    \big (  
      \widehat{G}_{Y | X}^{\ast}(\tau | x)
      - 
      \widetilde{G}_{Y | X}^{\ast}(\tau | x)
      \big)
      =
      \sup_{\eta > 0}
      \sup_{y \in S_{\eta}(\tau, x)}
      \big\{
      -  
      \sqrt{n}
      \big (  
        \widehat{G}_{Y | X}(y | x) 
        - 
        \widetilde{G}_{Y | X}(y | x)
        \big ) 
        - 
        \sqrt{n} \, \eta 
      \big \} . 
  \end{equation*} 
  Let 
  $s_{n} := \sqrt{n}$ 
  and 
  $h_{n} 
  := 
  s_{n}
  \big(
    \widehat{G}_{Y | X}(\cdot | x)
    -
    \widetilde{G}_{Y | X}(\cdot | x)
  \big)$.
  We rewrite the equation above, 
  using the notation of Lemma A4:
  \begin{equation*}
    \sqrt{n}
    \big(
      \widehat{G}_{Y | X}^{\ast}(\tau | x)
      -
      \widetilde{G}_{Y | X}^{\ast}(\tau | x)
    \big)
    =
    \sup_{\eta > 0}
    \bigl\{
      \Psi_{\eta}(h_{n}) - s_{n}\,\eta
    \bigr\}.
  \end{equation*}
    By Theorem \ref{theorem:aym},
    $
      \sqrt{n}
      \big(
        \widehat{G}_{Y| X}(\cdot|x)
        -
        \widetilde{G}_{Y| X}(\cdot|x)
      \big)
      \rightsquigarrow
      x^{\top}\mathbb{B}(\cdot)
    $
    in $\ell^{\infty}(\Rex)$,
    where $x^{\top}\mathbb{B}(\cdot)$ has sample paths in $C(\Rex)$.
  Lemma A4
  together with 
  the extended continuous-mapping theorem
  \citep[Theorem~1.11.1]{van1996weak} 
  yields  
  \begin{equation*}
    \sup_{\eta > 0}
    \bigl\{
      \Psi_{\eta}
      (
      h_{n}
      )
      - \sqrt{n}\, \eta
    \bigr\}
    \rightsquigarrow
    \lim_{\eta \downarrow 0}
    \Psi_{\eta}
    ( x^{\top} \mathbb{B} ).    
  \end{equation*}
  Since 
  $
    \lim_{\eta \downarrow 0}
    \Psi_{\eta}
    ( x^{\top} \mathbb{B} )
    = \mathbb{Z}(\tau|x)   
  $,
  we obtain 
  \begin{equation*}
    \sqrt{n}
    \big(
      \widehat{G}_{Y| X}^{\ast}(\tau|x)
      -
      \widetilde{G}_{Y| X}^{\ast}(\tau|x)
    \big)
    \rightsquigarrow
    \mathbb{Z}(\tau|x).
  \end{equation*}

  Because $J$ is finite, the same argument applies jointly to
  $\tau_1,\ldots,\tau_J$, and hence
  \begin{equation*}
    \big\{
      \sqrt{n}
      \big(
        \widehat{G}_{Y| X}^{\ast}(\tau_j|x)
        -
        \widetilde{G}_{Y| X}^{\ast}(\tau_j|x)
      \big)
    \big\}_{j=1}^{J}
    \rightsquigarrow
    \big\{
      \mathbb{Z}(\tau_j|x)
    \big\}_{j=1}^{J}.
  \end{equation*}
  Therefore, we obtain the desired result.

  \textbf{(b)}
  Fix a closed set $T \subseteq [0,1]$
  on which the stated differentiability condition holds. 
  Let $h\in C(\Rex)$. 
  For every $\tau\in T$ and $\eta>0$,
  \begin{equation*}
    \Big|
      \sup_{y\in S_{\eta}(\tau,x)}\{-h(y)\}
      +h\big(\nabla\widetilde{G}_{Y| X}^{\ast}(\tau| x)\big)
    \Big|
    \le
    \sup_{y\in S_{\eta}(\tau,x)}
    \big|
      h(y)-h\big(\nabla\widetilde{G}_{Y| X}^{\ast}(\tau| x)\big)
    \big| .
  \end{equation*}
  Since $(\Rex,\rho)$ is compact, $h$ is uniformly continuous with respect
  to $\rho$, and the right-hand side thus vanishes uniformly in
  $\tau\in T$ by the assumed condition. 
Thus, the argument in part (a) applies uniformly over
$\tau\in T$, and the Legendre-Fenchel transform, viewed as a map
taking values in $\ell^\infty(T)$, is Hadamard directionally
differentiable tangentially to $C(\Rex)$, with derivative
$
h \mapsto 
-h\bigl(
\nabla\widetilde{G}_{Y| X}^{\ast}(\cdot| x)
\bigr).
$
Since this derivative is linear and continuous, the transformation is
Hadamard differentiable. Theorem \ref{theorem:aym} and the functional
delta method then yield
\begin{equation*}
\sqrt{n}\bigl(
\widehat{G}_{Y| X}^{\ast}(\cdot| x)
-
\widetilde{G}_{Y| X}^{\ast}(\cdot| x)
\bigr)
\rightsquigarrow
-x^{\top}\mathbb{B}\bigl(
\nabla\widetilde{G}_{Y| X}^{\ast}(\cdot| x)
\bigr)
\quad\text{in }\ell^\infty(T).
\end{equation*}

\end{proof}
\vspace{0.5cm}

\vspace{0.5cm}
\begin{proof}[\textnormal{\textbf{Proof of Proposition \ref{proposition:AQTE}}}]

\textbf{(a)} 
By the definition of the AQTE, 
we have 
\begin{equation*}
  \theta(0, 1)
  =
  \int_{0}^{1} F_{Y(1)}^{-1}(u)\, du
  -
  \int_{0}^{1} F_{Y(0)}^{-1}(u)\, du.   
\end{equation*}
The quantile function satisfies that  
$u > F_{Y(w)}(t)$
if and only if 
$F_{Y(w)}^{-1}(u) > t$
for every $u \in (0, 1)$ and $t \in \R$.
By Fubini's theorem, we have that for each $w \in \{0,1\}$,
\begin{equation*}
\int_{0}^{1} F_{Y(w)}^{-1}(u)\, du
= \int_{0}^{\infty} \big (1 - F_{Y(w)}(t) \big)\, dt
- \int_{-\infty}^{0} F_{Y(w)}(t)\, dt
= \E[Y(w)],
\end{equation*}
where the last equality uses the tail integral formula for the
expectation.
Thus, the desired result follows. 

\textbf{(b)}
For 
$\tau_{\ell} , \tau_{u} \in (0,1)$ with
$\tau_{\ell}  < \tau_{u}$,  
the AQTE can be written 
as the difference of two difference quotients: 
\begin{equation*}
\theta(\tau_{\ell}, \tau_{u})
= \frac{G_{Y(1)}^{\ast}(\tau_{u}) - G_{Y(1)}^{\ast}(\tau_{\ell})}{\tau_{u} - \tau_{\ell}}
- \frac{G_{Y(0)}^{\ast}(\tau_{u}) - G_{Y(0)}^{\ast}(\tau_{\ell})}{\tau_{u} - \tau_{\ell}}.
\end{equation*}
By Theorem 24.1 of \cite{rockafellar1997convex},
as $(\tau_{\ell}, \tau_{u}) \to (\tau, \tau)$ with
$\tau_{\ell} < \tau_{u}$, every limit point of the difference
quotient of $G_{Y(w)}^{\ast}$ lies in $\partial G_{Y(w)}^{\ast}(\tau)$
for each $w \in \{0,1\}$.
It follows that every limit point of $\theta(\tau_{\ell}, \tau_{u})$
belongs to the Minkowski difference
\begin{equation*}
\partial G_{Y(1)}^{\ast}(\tau) \ominus \partial G_{Y(0)}^{\ast}(\tau)
= [F_{Y(1)}^{-1}(\tau) - F_{Y(0)}^{-1}(\tau{\scalebox{0.7}{$+$}}),\,
F_{Y(1)}^{-1}(\tau{\scalebox{0.7}{$+$}}) - F_{Y(0)}^{-1}(\tau)],
\end{equation*}
where $A \ominus B := \{a - b : a \in A,\ b \in B\}$ for
$A, B \subset \R$. When $F_{Y(w)}^{-1}$ is continuous at $\tau$ for
each $w \in \{0, 1\}$, both subdifferentials reduce to singletons,
and this interval collapses to the single point
$F_{Y(1)}^{-1}(\tau) - F_{Y(0)}^{-1}(\tau)$.
\end{proof}
\vspace{0.5cm}

\vspace{0.5cm}
\begin{proof}[\textnormal{\textbf{Proof of Theorem \ref{theorem:AQTE}}}]

Under Assumption \ref{assump:random-assignment}(c), 
$(Y(0), Y(1)) \independent W$. 
Thus, for each $w \in \{0, 1\}$ with $\Pr(W = w) >
0$,
\begin{equation*}
  \Pr\{Y \le y | W = w\}
  =
  \Pr\{Y(W) \le y | W = w\}
  =
  \Pr\{Y(w) \le y\}
  =
  F_{Y(w)}(y),
\end{equation*}
where the first equality uses $Y = Y(W)$ from Assumption
\ref{assump:random-assignment}(a). The marginal distribution
function $F_{Y(w)}$ is therefore identified from the joint
distribution of $(W, Y)$. Since $\E[|Y(w)|] < \infty$ under
Assumption \ref{assump:random-assignment}(b), Proposition
\ref{proposition:L2-min-1}(b) gives
\begin{equation*}
  G_{Y(w)}^{\ast}(\tau)
  =
  \int_{0}^{\tau} F_{Y(w)}^{-1}(u)\, du
  \quad \text{for } \tau \in [0, 1],
\end{equation*}
with 
$G_{Y(w)}^{\ast}(0) = 0$ and $G_{Y(w)}^{\ast}(1) = \E[Y(w)]$, both
of which are identified under Assumption 
\ref{assump:random-assignment}(b)-(c).
Thus,  $G_{Y(w)}^{\ast}(\cdot)$ is identified. For any $\tau_{\ell}, \tau_{u}
\in [0, 1]$ with $\tau_{\ell} < \tau_{u}$, we can show
\begin{equation*}
  \theta(\tau_{\ell}, \tau_{u})
  =
  \frac{1}{\tau_{u} - \tau_{\ell}}
  \big \{
    \big (      
    G_{Y(1)}^{\ast}(\tau_{u}) - G_{Y(1)}^{\ast}(\tau_{\ell})
    \big ) 
    -
    \big (  
    G_{Y(0)}^{\ast}(\tau_{u}) - G_{Y(0)}^{\ast}(\tau_{\ell})
    \big)
  \big \},
\end{equation*}
which yields the identification of $\theta(\tau_{\ell}, \tau_{u})$.
\end{proof}
\vspace{0.5cm}

\vspace{0.5cm}
\begin{proof}[\textnormal{\textbf{Proof of Theorem
\ref{theorem:AQTE-asymp}}}]

For each $w \in \{0,1\}$, the estimator
$\widehat{G}_{Y(w)}(y)$ is the treatment-group average defined in
Section \ref{sec:treatment}. Under random assignment,
$\E[(y-Y)_{+}| W=w]=G_{Y(w)}(y)$. The second-moment condition in
Assumption \ref{assump:aqte-regularity}(b), together with the same
Donsker argument as in Theorem \ref{theorem:aym}, therefore gives
\begin{equation*}
  \sqrt{n}\big( \widehat{G}_{Y(w)}(\cdot) - G_{Y(w)}(\cdot) \big)
  \rightsquigarrow
  \mathbb{B}_{w}(\cdot)
  \quad \text{in } \ell^{\infty}(\Rex), 
\end{equation*}
where $\mathbb{B}_{w}(\cdot)$
is a zero-mean Gaussian process with covariance kernel
$\Sigma_{w}(y_{1}, y_{2})$. The processes
$\mathbb{B}_{0}$ and $\mathbb{B}_{1}$ are independent because their
influence functions $\1\{W = w\}[(y - Y)_{+} - G_{Y(w)}(y)] /
\Pr(W = w)$ for $w = 0$ and $w = 1$ are supported on disjoint
events.

The fixed-index conjugate argument in Theorem
\ref{theorem:L-weak}, applied directly to
$\widehat{G}_{Y(w)}$ and the finite collection
$\{\tau_{\ell},\tau_{u}\}\subset[0,1]$, yields
\begin{equation*}
  \big\{
  \sqrt{n}\big( \widehat{G}_{Y(w)}^{\ast}(\tau) -
  G_{Y(w)}^{\ast}(\tau) \big)
  \big\}_{\tau \in \{\tau_{\ell}, \tau_{u}\}}
  \rightsquigarrow
  \big\{ \mathbb{Z}_{w}(\tau) \big\}_{\tau \in \{\tau_{\ell}, \tau_{u}\}},
\end{equation*}
where 
$
  \mathbb{Z}_{w}(\tau)
  =
  \lim_{\eta \downarrow 0}
  \sup_{y \in \partial_{\eta} G_{Y(w)}^{\ast}(\tau)}
  \big\{ -\mathbb{B}_{w}(y) \big\}
$
for each $w \in \{0,1\}$.

For $\tau \in (0, 1)$, the sets $\partial_{\eta}
G_{Y(w)}^{\ast}(\tau)$ are nested intervals decreasing to
$\partial G_{Y(w)}^{\ast}(\tau)$ as $\eta \downarrow 0$, and the
$\rho$-uniform continuity of $\mathbb{B}_{w}$ on $(\Rex, \rho)$
yields 
$\mathbb{Z}_{w}(\tau) = \sup_{y \in \partial G_{Y(w)}^{\ast}(\tau)}
\{-\mathbb{B}_{w}(y)\}$. 
At $\tau = 0$, 
$\partial_{\eta}G_{Y(w)}^{\ast}(0)$ 
is a half-line whose upper endpoint decreases to
$\inf \mathrm{supp}(Y(w))$. If the support is unbounded below, the
sets shrink to $\{-\infty\}$ in $(\Rex, \rho)$ and
$\mathbb{B}_{w}(-\infty) = 0$, while if bounded below, $\Sigma_{w}$
vanishes on the limiting half-line. It follows that $\mathbb{Z}_{w}(0) = 0$ in either case.
At $\tau = 1$, $\partial_{\eta} G_{Y(w)}^{\ast}(1)$ is a
half-line whose lower endpoint increases to
$\sup \mathrm{supp}(Y(w))$. 
If the support is unbounded above, the
sets shrink to $\{+\infty\}$, while if bounded above,
$\mathbb{B}_{w}$ is almost surely constant on the limiting
half-line. In either case 
$\mathbb{Z}_{w}(1) = -\mathbb{B}_{w}(+\infty)$,
a centered Gaussian with variance $\var(Y(w)) / \Pr(W = w)$.

The AQTE estimator $\widehat{\theta}(\tau_{\ell}, \tau_{u})$ in
\eqref{eq:aqte-hat} is a linear combination of the four conjugate
estimators $\widehat{G}_{Y(w)}^{\ast}(\tau)$ for $(w, \tau) \in
\{0, 1\} \times \{\tau_{\ell}, \tau_{u}\}$. 
Because the processes for the treatment and control groups converge
jointly in $\ell^{\infty}(\Rex)^{2}$,
the argument in the proof of
Theorem \ref{theorem:L-weak} yields joint weak convergence
of these four estimators.
The continuous mapping theorem gives
\begin{equation*}
  \sqrt{n}\big(
    \widehat{\theta}(\tau_{\ell}, \tau_{u})
    - \theta(\tau_{\ell}, \tau_{u})
  \big)
  \rightsquigarrow
  \frac{1}{\tau_{u} - \tau_{\ell}}
  \big\{
    \big( \mathbb{Z}_{1}(\tau_{u}) - \mathbb{Z}_{1}(\tau_{\ell}) \big)
    - \big( \mathbb{Z}_{0}(\tau_{u}) - \mathbb{Z}_{0}(\tau_{\ell}) \big)
  \big\},
\end{equation*}
which establishes the limit distribution.

  The delta-method bootstrap based on the sample
  $\eta_n$-subdifferentials
  $\partial_{\eta_n}\widehat{G}_{Y(w)}^{\ast}(\tau)$ for
  $(w,\tau)\in\{0,1\}\times\{\tau_{\ell},\tau_{u}\}$, followed by the
  linear functional in \eqref{eq:aqte-hat}, consistently estimates the
  limit law, provided that $\eta_n\to0$ and
  $\sqrt{n}\eta_n\to\infty$.
\end{proof}
\vspace{0.5cm} 


\newpage
\section*{Appendix B: Conditional Treatment Effect}
\label{appendix:conditional}

\setcounter{equation}{0}
\setcounter{theorem}{0}
\setcounter{proposition}{0}
 \setcounter{assumption}{0}

\renewcommand{\theequation}{B.\arabic{equation}}
\renewcommand{\theHequation}{B.\arabic{equation}}
\renewcommand{\thetheorem}{B.\arabic{theorem}}
\renewcommand{\theproposition}{B.\arabic{proposition}}
\renewcommand{\theHproposition}{B.\arabic{proposition}}
\renewcommand{\theassumption}{B\arabic{assumption}}
\renewcommand{\theHassumption}{B\arabic{assumption}}
\renewcommand{\theHtheorem}{B.\arabic{theorem}}

This appendix develops the conditional version of the average
quantile treatment effect. The conditional AQTE measures
distributional treatment effects within the subpopulation defined
by a covariate value. We identify the conditional AQTE under
unconfoundedness of the treatment assignment given those
covariates, combined with a linear specification of the integrated
potential-outcome distribution function via the ReLU regression of
Section \ref{sec:relu}. The notation and framework of Section
\ref{sec:treatment} are maintained throughout, with covariates $X$
augmenting the observed data.

\subsection*{B.1 Setup and the Conditional AQTE}

Let $X \in \X \subseteq \R^{q}$ be a vector of pre-treatment
covariates realized before the assignment of $W$. We observe an
i.i.d.\ sample $\{(W_{i}, X_{i}, Y_{i})\}_{i=1}^{n}$ drawn from the
joint distribution of $(W, X, Y)$.

For each treatment status $w \in \{0, 1\}$, the conditional
distribution function and conditional quantile function of the
potential outcome $Y(w)$ given $X$ are
\begin{equation*}
  F_{Y(w) | X}(y | X)
  := \Pr\{Y(w) \le y | X\},
  \qquad
  F_{Y(w) | X}^{-1}(u | X)
  := \inf\{y \in \R : F_{Y(w) | X}(y | X) \ge u\},
\end{equation*}
for $y \in \R$ and $u \in (0, 1)$. The integrated conditional
distribution function and its convex conjugate are
\begin{equation*}
  G_{Y(w) | X}(y | X)
  := \int_{-\infty}^{y} F_{Y(w) | X}(s | X)\, ds,
  \qquad
  G_{Y(w) | X}^{\ast}(\tau | X)
  := \sup_{y \in \R}\big\{ \tau y - G_{Y(w) | X}(y | X) \big\}.
\end{equation*}
The first object is convex in $y$ for each fixed value of $X$, since
the integrand $F_{Y(w) | X}(\cdot | X)$ is nondecreasing.

For probability levels $\tau_{\ell}, \tau_{u} \in [0, 1]$ with
$\tau_{\ell} < \tau_{u}$, the conditional AQTE given $X = x$ is
defined as
\begin{equation}
  \label{eq:caqte}
  \theta(\tau_{\ell}, \tau_{u}, x)
  :=
  \frac{1}{\tau_{u} - \tau_{\ell}}
  \bigg\{
  \int_{\tau_{\ell}}^{\tau_{u}} F_{Y(1) | X}^{-1}(u | x)\, du
  -
  \int_{\tau_{\ell}}^{\tau_{u}} F_{Y(0) | X}^{-1}(u | x)\, du
  \bigg\}.
\end{equation}

The conditional AQTE encompasses standard conditional
treatment-effect parameters as special cases. The next proposition
records the analogue of Proposition \ref{proposition:AQTE} for the
conditional case.

\vspace{0.3cm}
\begin{proposition}
  \label{proposition:AQTE-conditional}
  Suppose $\E[|Y(w)|] < \infty$ for each $w \in \{0, 1\}$. Then for
  almost every $x \in \X$,
  \begin{enumerate}[label=\textnormal{(\roman*)}, topsep=3pt, itemsep=3pt, align=left, leftmargin=1em, labelsep=0.5em]
    \item $\theta(0, 1, x) = \E[Y(1) - Y(0) | X = x]$;
    \item for any $\tau \in (0, 1)$, every limit point of
      $\theta(\tau_{\ell}, \tau_{u}, x)$ as $(\tau_{\ell},
      \tau_{u}) \to (\tau, \tau)$ with $\tau_{\ell} < \tau_{u}$
      belongs to $[F_{Y(1) | X}^{-1}(\tau | x) -
      F_{Y(0) | X}^{-1}(\tau{\scalebox{0.7}{$+$}} | x),\,
      F_{Y(1) | X}^{-1}(\tau{\scalebox{0.7}{$+$}} | x) -
      F_{Y(0) | X}^{-1}(\tau | x)]$. If $F_{Y(w) | X}^{-1}(\cdot | x)$ is
      continuous at $\tau$ for each $w \in \{0, 1\}$, the limit is
      unique and equals $F_{Y(1) | X}^{-1}(\tau | x) -
      F_{Y(0) | X}^{-1}(\tau | x)$.
  \end{enumerate}
\end{proposition}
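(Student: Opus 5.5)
The plan is to reduce the conditional statement to the unconditional argument of Proposition \ref{proposition:AQTE}, applied pointwise in $x$ to the conditional laws of $Y(w)$ given $X=x$. First, since $\E[|Y(w)|]<\infty$, iterated expectations give $\E[|Y(w)|\mid X=x]<\infty$ for almost every $x\in\X$. Let $N\subset\X$ be the union over $w\in\{0,1\}$ of the exceptional null sets. We also require that $F_{Y(w)|X}(\cdot|x)$ be a regular conditional distribution function, which we may assume by taking a regular version. Fix $x\notin N$; everything below is then a statement about two fixed probability distributions on $\R$ with finite first moments.

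For part (i), I will apply the tail-integral/Fubini argument from the proof of Proposition \ref{proposition:AQTE}(a) to the distribution $F_{Y(w)|X}(\cdot|x)$. The equivalence $u>F_{Y(w)|X}(t|x)\iff F^{-1}_{Y(w)|X}(u|x)>t$ gives
\[
\int_0^1 F^{-1}_{Y(w)|X}(u|x)\,du=\int_0^\infty\big(1-F_{Y(w)|X}(t|x)\big)dt-\int_{-\infty}^0F_{Y(w)|X}(t|x)\,dt=\E[Y(w)\mid X=x].
\]
Both tail integrals are finite because the conditional first moment is finite, so the difference is well defined. Subtracting the $w=0$ term from the $w=1$ term yields $\theta(0,1,x)=\E[Y(1)-Y(0)\mid X=x]$ by linearity of conditional expectation.

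For part (ii), I will use Proposition \ref{proposition:L2-min-1}(b), applied with the conditioning on $X=x$ for the potential outcome $Y(w)$; its proof uses only the finite conditional first moment. This gives $G^{\ast}_{Y(w)|X}(\tau|x)=\int_0^\tau F^{-1}_{Y(w)|X}(u|x)\,du$, convex in $\tau$, with $\partial G^{\ast}_{Y(w)|X}(\tau|x)=[F^{-1}_{Y(w)|X}(\tau|x),F^{-1}_{Y(w)|X}(\tau+|x)]$ for $\tau\in(0,1)$. It follows that $\theta(\tau_\ell,\tau_u,x)$ is the difference of the two difference quotients $\{G^{\ast}_{Y(w)|X}(\tau_u|x)-G^{\ast}_{Y(w)|X}(\tau_\ell|x)\}/(\tau_u-\tau_\ell)$.

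The main step is to show that every limit point of such a quotient lies in the subdifferential at $\tau$. This must hold even when $\tau_\ell<\tau<\tau_u$ is not imposed, which is why I will argue directly rather than cite one-sided derivatives. By monotonicity of $F^{-1}$, the quotient equals an average of $F^{-1}_{Y(w)|X}(u|x)$ over $[\tau_\ell,\tau_u]$. It is therefore bounded between $F^{-1}_{Y(w)|X}(\tau_\ell|x)$ and $F^{-1}_{Y(w)|X}(\tau_u-|x)$. As $\tau_\ell,\tau_u\to\tau$, the lower bound has $\liminf\ge F^{-1}_{Y(w)|X}(\tau-|x)=F^{-1}_{Y(w)|X}(\tau|x)$ by left continuity, and the upper bound has $\limsup\le F^{-1}_{Y(w)|X}(\tau+|x)$. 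These bounds are finite since $\tau\in(0,1)$.

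Taking a subsequence along which both quotients converge, the limit of $\theta$ is $a_1-a_0$ with $a_w\in\partial G^{\ast}_{Y(w)|X}(\tau|x)$. This places it in the Minkowski difference $[F^{-1}_{Y(1)|X}(\tau|x)-F^{-1}_{Y(0)|X}(\tau+|x),\,F^{-1}_{Y(1)|X}(\tau+|x)-F^{-1}_{Y(0)|X}(\tau|x)]$. Boundedness of the quotients guarantees that limit points exist, so the subsequence is always available. Under continuity of $F^{-1}_{Y(w)|X}(\cdot|x)$ at $\tau$ for both $w$, each interval is a singleton and the interval collapses, giving the unique limit $F^{-1}_{Y(1)|X}(\tau|x)-F^{-1}_{Y(0)|X}(\tau|x)$. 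The only delicate point overall is the measure-theoretic bookkeeping that produces a single null set $N$ valid for all $\tau$ simultaneously. This holds because every statement after fixing $x\notin N$ is deterministic.
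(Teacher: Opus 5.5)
Your proposal is correct and follows essentially the route the paper intends. The paper omits this proof as the conditional analogue of Proposition~\ref{proposition:AQTE}: the Fubini/tail-integral identity pointwise in $x$ for part (i), and for part (ii) $\theta(\tau_\ell,\tau_u,x)$ written as a difference of difference quotients of $G^{\ast}_{Y(w)|X}(\cdot|x)$ whose limit points lie in $\partial G^{\ast}_{Y(w)|X}(\tau|x)$, followed by the Minkowski difference. The only variation is that you get the limit-point inclusion from the monotone bounds $F^{-1}_{Y(w)|X}(\tau_\ell|x)\le (\tau_u-\tau_\ell)^{-1}\int_{\tau_\ell}^{\tau_u}F^{-1}_{Y(w)|X}(u|x)\,du\le F^{-1}_{Y(w)|X}(\tau_u|x)$ and left-continuity, rather than citing Rockafellar's Theorem 24.1; this is a harmless and slightly more self-contained substitution.
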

\vspace{0.3cm}

As in the unconditional case, the endpoints are admissible because 
$G_{Y(w) | X}^{\ast}(0 | x) = 0$ and $G_{Y(w) | X}^{\ast}(1 | x) = \E[Y(w) | X = x]$. 
The proof is the conditional analogue of the proof of Proposition
\ref{proposition:AQTE}  and is omitted.

\subsection*{B.2 Identification}

For each fixed $y \in \R$ and $w \in \{0, 1\}$, we specify the
ReLU regression for the potential outcome as
\begin{equation}
  \label{eq:relu-potential}
  (y - Y(w))_{+}
  =
  b(X)^{\top} \gamma_{0}(y)
  + w \cdot b(X)^{\top} \delta_{0}(y)
  + \epsilon_{w}(y),
\end{equation}
where $b: \X \to \R^{r}$ is a known basis function with the first
element of $b(X)$ a constant, $\gamma_{0}(y), \delta_{0}(y) \in
\R^{r}$ are unknown coefficient vectors, and $\epsilon_{w}(y)$ is a
mean-zero error term. Evaluating \eqref{eq:relu-potential} at $w=W$
and using $Y=Y(W)$ yields the observed ReLU regression on $b(X)$ and
$W b(X)$. We define the estimators of $\gamma_{0}(y)$ and
$\delta_{0}(y)$ directly below.

Identification of the conditional AQTE requires unconfoundedness of
the treatment assignment given the covariates and a linear
specification of the integrated potential-outcome distribution
function.

\vspace{0.3cm}
\begin{assumption}
  \label{assump:unconfoundedness}
  The data-generating process satisfies:
  \begin{enumerate}[label=(\alph*), noitemsep, topsep=0pt]
    \item[\textnormal{(a)}] $Y = Y(W)$ almost surely.
    \item[\textnormal{(b)}] $\E[|Y(w)|] < \infty$ for each $w \in
      \{0, 1\}$.
    \item[\textnormal{(c)}] $W \independent (Y(0), Y(1)) \mid X$ and
      $0 < \Pr(W = 1 | X) < 1$ almost surely.
    \item[\textnormal{(d)}] $\E[\epsilon_{w}(y) | X] = 0$ for each
      $(y, w) \in \R \times \{0, 1\}$.
    \item[\textnormal{(e)}] $\E[Y^{2}] < \infty$,
      $\E[\|b(X)\|^{2}] < \infty$, and the population Gram matrix
      for $[b(X)^{\top}, W b(X)^{\top}]^{\top}$ is positive definite.
  \end{enumerate}
\end{assumption}
\vspace{0.3cm}

Conditions (a) and (b) of Assumption \ref{assump:unconfoundedness}
match conditions (a) and (b) of Assumption
\ref{assump:random-assignment}. Condition (c) is the standard
unconfoundedness assumption \citep{rubin1980randomization}, which
renders the treatment ignorable given $X$ and imposes overlap.
Condition (d) places the conditional mean-zero restriction on the
error term in \eqref{eq:relu-potential}. Under (d), the linear
specification \eqref{eq:relu-potential} is correctly specified for
$\E[(y - Y(w))_{+} | X]$. 
Condition (e) imposes the moment and rank conditions for the
regression on $b(X)$ and $W b(X)$. Under overlap, its rank condition
reduces to positive-definiteness of $\E[b(X)b(X)^{\top}]$.

\vspace{0.3cm}
\begin{theorem}
  \label{theorem:AQTE-conditional}
  Suppose Assumption \ref{assump:unconfoundedness} holds. 
  Then,
  for any $\tau_{\ell}, \tau_{u} \in [0, 1]$
  with $\tau_{\ell} < \tau_{u}$ and almost every $x \in \X$, the conditional AQTE
  $\theta(\tau_{\ell}, \tau_{u}, x)$ defined in \eqref{eq:caqte} is
  identified from the joint distribution of $(W, X, Y)$.
\end{theorem}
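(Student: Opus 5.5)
The plan is to follow the template of Theorem \ref{theorem:AQTE}, with conditioning on $X$ added, in three steps.

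\textbf{Step 1: identify the conditional distribution of $Y(w)$.} Fix $w\in\{0,1\}$. By Assumption \ref{assump:unconfoundedness}(a) and (c), for almost every $x$ and every $y\in\R$,
\[
\Pr\{Y\le y\mid W=w,X=x\}=\Pr\{Y(w)\le y\mid W=w,X=x\}=F_{Y(w)|X}(y|x).
\]
Overlap guarantees that conditioning on $\{W=w,X=x\}$ is well defined for almost every $x$. Assumption \ref{assump:unconfoundedness}(b) gives $\E[|Y(w)|\mid X=x]<\infty$ almost surely. Proposition \ref{proposition:L2-min-1}(a), applied conditionally on $X=x$ within the subpopulation $W=w$, then yields
\[
G_{Y(w)|X}(y|x)=\E[(y-Y(w))_{+}\mid X=x]=\E[(y-Y)_{+}\mid W=w,X=x].
\]

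\textbf{Step 2: link to the regression coefficients.} Assumption \ref{assump:unconfoundedness}(d) says the error in \eqref{eq:relu-potential} has conditional mean zero given $X$, so $\E[(y-Y(w))_{+}\mid X]=b(X)^{\top}\gamma_0(y)+w\,b(X)^{\top}\delta_0(y)$. By unconfoundedness, this conditional mean equals $\E[(y-Y)_+\mid W=w,X]$, so the observed ReLU regression of $(y-Y)_+$ on $(b(X),Wb(X))$ is correctly specified in the $L^2$ sense. Its population least-squares coefficients therefore coincide with $(\gamma_0(y),\delta_0(y))$. These coefficients are unique and given in closed form by the positive-definite Gram matrix in Assumption \ref{assump:unconfoundedness}(e); square integrability of $(y-Y)_+$ comes from $\E[Y^2]<\infty$. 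It follows that
\[
G_{Y(w)|X}(y|x)=b(x)^{\top}\gamma_0(y)+w\,b(x)^{\top}\delta_0(y)
\]
is a functional of the joint law of $(W,X,Y)$ for every $y$ and almost every $x$. Step 1 alone already gives identification through conditional expectations; the regression representation supplies the constructive version used for estimation.

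\textbf{Step 3: pass to the conjugate and the AQTE.} Applying Proposition \ref{proposition:L2-min-1}(b) conditionally gives
\[
G^{\ast}_{Y(w)|X}(\tau|x)=\sup_y\{\tau y-G_{Y(w)|X}(y|x)\}=\int_0^\tau F^{-1}_{Y(w)|X}(u|x)\,du \quad\text{for all }\tau\in[0,1].
\]
This is identified because it is a deterministic map of the identified function $G_{Y(w)|X}(\cdot|x)$. Its endpoint values are $0$ and $\E[Y(w)\mid X=x]$, both finite. Then \eqref{eq:caqte} equals
\[
\frac{1}{\tau_u-\tau_\ell}\Big\{\big[G^{\ast}_{Y(1)|X}(\tau_u|x)-G^{\ast}_{Y(1)|X}(\tau_\ell|x)\big]-\big[G^{\ast}_{Y(0)|X}(\tau_u|x)-G^{\ast}_{Y(0)|X}(\tau_\ell|x)\big]\Big\},
\]
which establishes identification.

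\textbf{Main obstacle.} The delicate point is the measure-theoretic bookkeeping behind ``almost every $x$''. The null sets depend on $y$, and there are uncountably many $y$. I would handle this by first obtaining the identities on a common full-measure set for all rational $y$. I would then extend them to all real $y$ using two continuity facts: $y\mapsto G_{Y(w)|X}(y|x)$ is continuous (it is convex and finite), and $y\mapsto b(x)^{\top}\gamma_0(y)$ is continuous (it is Lipschitz, since $(y-Y)_+$ is $1$-Lipschitz in $y$). Once the identity holds for all $y$ on a single full-measure set of $x$, the supremum defining the conjugate is unaffected.
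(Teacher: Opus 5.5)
Your proposal is correct and follows essentially the same route as the paper's proof. You identify $F_{Y(w)|X}(\cdot|x)$ through consistency and unconfoundedness, represent $G_{Y(w)|X}(y|x)$ as $b(x)^{\top}\gamma_0(y)+w\,b(x)^{\top}\delta_0(y)$ using conditions (d)--(e) and Proposition~\ref{proposition:L2-min-1}(a), and then use Proposition~\ref{proposition:L2-min-1}(b) to write $\theta(\tau_\ell,\tau_u,x)$ as a difference quotient of conjugates; your additional arguments (that the observed regression on $(b(X),Wb(X))$ is correctly specified, so its unique least-squares minimizer is $(\gamma_0(y),\delta_0(y))$, and that the $y$-dependent null sets can be handled via rational $y$ and continuity) fill in details the paper leaves implicit.
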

\begin{proof}

Fix $x \in \X$ and $w \in \{0, 1\}$. Under
Assumption \ref{assump:unconfoundedness}(c), the conditional
distribution of $Y(w)$ given $X$ coincides with the conditional
distribution of $Y$ given $W = w$ and $X$:
\begin{equation*}
  \Pr\{Y \le y | W = w, X = x\}
  =
  \Pr\{Y(W) \le y | W = w, X = x\}
  =
  \Pr\{Y(w) \le y | X = x\},
\end{equation*}
where the first equality uses Assumption
\ref{assump:unconfoundedness}(a), the second substitutes $W = w$
on the conditioning event and uses the conditional
independence in (c). The conditional distribution function
$F_{Y(w) | X}(\cdot | x)$ is therefore identified from the joint
distribution of $(W, X, Y)$.

Taking the conditional expectation given $X = x$ in
\eqref{eq:relu-potential} and using Assumption
\ref{assump:unconfoundedness}(d), we have
\begin{equation*}
  \E[(y - Y(w))_{+} | X = x]
  =
  b(x)^{\top} \gamma_{0}(y) + w \cdot b(x)^{\top} \delta_{0}(y).
\end{equation*}
By Proposition \ref{proposition:L2-min-1}(a), the left-hand side
equals $G_{Y(w) | X}(y | x)$. Under condition (e), the population
least-squares criterion for the regression on $b(X)$ and $W b(X)$
has the unique minimizer $(\gamma_{0}(y),\delta_{0}(y))$. Hence
$G_{Y(w) | X}(y | x)$ is identified for
each $w \in \{0, 1\}$ and each $y \in \R$.

By Proposition \ref{proposition:L2-min-1}(b),
\begin{equation*}
  G_{Y(w) | X}^{\ast}(\tau | x)
  =
  \int_{0}^{\tau} F_{Y(w) | X}^{-1}(u | x)\, du
\end{equation*}
for each $\tau \in (0, 1)$, and $G_{Y(w) | X}^{\ast}(\cdot | x)$ is
therefore identified. For any $\tau_{\ell}, \tau_{u} \in [0, 1]$
with $\tau_{\ell} < \tau_{u}$, the conditional AQTE in
\eqref{eq:caqte} satisfies
\begin{equation*}
  \theta(\tau_{\ell}, \tau_{u}, x)
  =
  \frac{1}{\tau_{u} - \tau_{\ell}}
  \big(
    G_{Y(1) | X}^{\ast}(\tau_{u} | x) - G_{Y(1) | X}^{\ast}(\tau_{\ell} | x)
    -
    G_{Y(0) | X}^{\ast}(\tau_{u} | x) + G_{Y(0) | X}^{\ast}(\tau_{\ell} | x)
  \big),
\end{equation*}
which yields the identification of $\theta(\tau_{\ell}, \tau_{u},
x)$.
\end{proof}
\vspace{0.3cm}

\subsection*{B.3 Estimation and Asymptotic Properties}

For each $y \in \R$, we define
$(\hat{\gamma}(y),\hat{\delta}(y))$ as the minimizer over
$\R^{r}\times\R^{r}$ of
\begin{equation*}
  \frac{1}{n}\sum_{i=1}^{n}
  \big\{(y-Y_i)_{+}
  -b(X_i)^{\top}\gamma
  -W_i b(X_i)^{\top}\delta\big\}^{2}.
\end{equation*}
We estimate the conditional integrated distribution function as
\begin{equation*}
  \widehat{G}_{Y(w) | X}(y | x)
  :=
  b(x)^{\top} \hat{\gamma}(y) + w \cdot b(x)^{\top}
  \hat{\delta}(y).
\end{equation*}
Its convex conjugate is
\begin{equation*}
  \widehat{G}_{Y(w) | X}^{\ast}(\tau | x)
  :=
  \sup_{y \in \R}
  \big\{ \tau y - \widehat{G}_{Y(w) | X}(y | x) \big\},
\end{equation*}
and the conditional AQTE estimator is
\begin{equation}
  \label{eq:caqte-hat}
  \widehat{\theta}(\tau_{\ell}, \tau_{u}, x)
  :=
  \frac{1}{\tau_{u} - \tau_{\ell}}
  \big[
    \widehat{G}_{Y(1) | X}^{\ast}(\tau_{u} | x)
    - \widehat{G}_{Y(1) | X}^{\ast}(\tau_{\ell} | x)
    - \widehat{G}_{Y(0) | X}^{\ast}(\tau_{u} | x)
    + \widehat{G}_{Y(0) | X}^{\ast}(\tau_{\ell} | x)
  \big].
\end{equation}

The asymptotic distribution of $\widehat{\theta}(\tau_{\ell},
\tau_{u}, x)$ follows from the asymptotic theory of Section
\ref{sec:estimation} specialized to the fully interacted
specification and combined with the delta method for Hadamard directionally
differentiable maps.
We define,
for each 
  $x \in \X$,
  $w \in \{0,1\}$
  and  
  $\tau \in [0,1]$,
  \begin{equation*}
    \mathbb{Z}_{w}(\tau | x)
    :=
    \lim_{\eta \downarrow 0}
    \sup_{y \in \partial_{\eta} G_{Y(w) | X}^{\ast}(\tau | x)}
    \big\{ -\mathbb{B}_{w}(y | x) \big\}, 
  \end{equation*}
  where the conditional Gaussian process is
  \begin{equation*}
    \mathbb{B}_{w}(y | x)
    :=
    b(x)^{\top} \mathbb{B}_{\gamma}(y)
    + w \cdot b(x)^{\top} \mathbb{B}_{\delta}(y).
  \end{equation*}
Here, $\mathbb{B}_{\gamma}$ and $\mathbb{B}_{\delta}$ are the joint
Gaussian limits of the coefficient estimators for $b(X)$ and
$W b(X)$, respectively.

\vspace{0.3cm}
\begin{theorem}
  \label{theorem:caqte-asymp}
  Suppose that Assumption \ref{assump:unconfoundedness} holds,
  the sample is i.i.d., and
  $\E[\|b(X)Y\|^{2}] < \infty$ and
  $\E[\|b(X)\|^{4}] < \infty$.
  Then, the following statements hold.
  \begin{enumerate}[label=\textnormal{(\alph*)}, topsep=3pt, itemsep=3pt, align=left, leftmargin=1em, labelsep=0.5em]
    \item 
    For any $\tau_{\ell}, \tau_{u} \in [0, 1]$ 
    with
  $\tau_{\ell} < \tau_{u}$ and almost every 
    $x \in \X$, 
  \begin{equation*}
    \sqrt{n}\big( \widehat{\theta}(\tau_{\ell}, \tau_{u}, x)
      - \theta(\tau_{\ell}, \tau_{u}, x) \big) \rightsquigarrow
      \dfrac{1}{\tau_{u} - \tau_{\ell}}
      \big \{  
        \big ( \mathbb{Z}_{1}(\tau_{u} | x) 
          - 
          \mathbb{Z}_{1}(\tau_{\ell} | x)
        \big ) 
      - \big ( \mathbb{Z}_{0}(\tau_{u} | x) - \mathbb{Z}_{0}(\tau_{\ell} | x)
      \big)  
      \big \}   .
  \end{equation*}
    \item The delta-method bootstrap 
      based on the empirical conditional subdifferential
      $\partial_{\eta_{n}} 
      \widehat{G}_{Y(w) | X }^{\ast}(\tau|x)$ 
      consistently
      estimates the limit law in part (a), 
      provided that $\eta_n\to0$ and $\sqrt{n}\eta_n\to\infty$.
  \end{enumerate}
\end{theorem}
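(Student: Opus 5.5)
The plan is to reduce Theorem \ref{theorem:caqte-asymp} to Theorems \ref{theorem:aym} and \ref{theorem:L-weak}. We treat the fully interacted regression as a ReLU regression with regressor vector $Z:=(b(X)^{\top},Wb(X)^{\top})^{\top}\in\R^{2r}$.

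\textbf{Step 1: coefficient process.} Assumption \ref{assump:unconfoundedness}(e) supplies the analogue of Assumption \ref{assump:dgp}. Since $W\in\{0,1\}$, we have $\|Z\|\le\sqrt{2}\|b(X)\|$. Hence $\E[\|b(X)Y\|^{2}]<\infty$ and $\E[\|b(X)\|^{4}]<\infty$ imply Assumption \ref{assump:regularity}(b) for $Z$. Theorem \ref{theorem:aym} then gives joint weak convergence
\[
\sqrt{n}\big((\hat\gamma,\hat\delta)-(\gamma_0,\delta_0)\big)\rightsquigarrow(\mathbb{B}_\gamma,\mathbb{B}_\delta)
\quad\text{in }\ell^\infty(\Rex)^{2r}.
\]
The limit has $\rho$-uniformly continuous paths. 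For fixed $x$ and $w$, the map $(g,d)\mapsto b(x)^{\top}g+w\,b(x)^{\top}d$ is linear and continuous. By the continuous mapping theorem, $\sqrt{n}(\widehat G_{Y(w)|X}(\cdot|x)-\widetilde G_{Y(w)|X}(\cdot|x))\rightsquigarrow\mathbb{B}_w(\cdot|x)$, jointly over $w\in\{0,1\}$ in $\ell^\infty(\Rex)^2$.

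\textbf{Step 2: correct specification.} By Assumption \ref{assump:unconfoundedness}(c),(d) and the proof of Theorem \ref{theorem:AQTE-conditional}, $\widetilde G_{Y(w)|X}(\cdot|x)=G_{Y(w)|X}(\cdot|x)$ for almost every $x$. This target is convex, so its conjugate is $G^{\ast}_{Y(w)|X}(\cdot|x)$. The approximate-maximizer set $S_\eta$ is then exactly the $\eta$-subdifferential $\partial_\eta G^{\ast}_{Y(w)|X}(\tau|x)$. To see this, note that $y\in\partial_\eta G^\ast(\tau)$ iff $G^\ast(\tau)+G^{\ast\ast}(y)-\tau y\le\eta$ (Fenchel--Young with $\eta$-slack), and $G^{\ast\ast}=G$ because $G$ is closed convex.

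\textbf{Step 3: conjugates and the AQTE.} Apply the representation in the proof of Theorem \ref{theorem:L-weak}(a) to each pair $(w,\tau)\in\{0,1\}\times\{\tau_\ell,\tau_u\}$:
\[
\sqrt{n}\big(\widehat G^\ast-G^\ast\big)=\sup_{\eta>0}\{\Psi_\eta(h_n)-\sqrt n\,\eta\}.
\]
Lemma A4 and the extended continuous mapping theorem then handle the four coordinates simultaneously, because they are continuous functionals of the single jointly converging pair $(h_{n,0},h_{n,1})$. This gives joint convergence to $(\mathbb{Z}_w(\tau|x))$. The AQTE estimator \eqref{eq:caqte-hat} is a fixed linear combination of these four quantities, and Theorem \ref{theorem:AQTE-conditional} writes $\theta(\tau_\ell,\tau_u,x)$ as the same combination of the population conjugates. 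Part (a) follows by the continuous mapping theorem. Endpoints $\tau\in\{0,1\}$ need no separate treatment, since the $\Rex$ extension is built into Step 1 and Lemma A4.

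\textbf{Step 4: bootstrap, the main obstacle.} For part (b), I would follow \cite{fang2019inference}, which requires two ingredients.
\begin{itemize}
\item First, the exchangeable bootstrap must be consistent for $(\mathbb{B}_\gamma,\mathbb{B}_\delta)$. This follows from the Donsker class in Lemma A3 and \cite{praestgaard1993exchangeably}.
\item Second, the estimated derivative
\[
\hat\phi'(h):=\sup_{y\in\partial_{\eta_n}\widehat G^\ast_{Y(w)|X}(\tau|x)}\{-h(y)\}
\]
must satisfy their consistency condition. That condition is $\sup_{h\in K}|\hat\phi'(h)-\phi'(h)|\to0$ in probability for compact sets $K\subset C(\Rex)$.
\end{itemize}
The second ingredient is the hard part, and I would establish it by a sandwich argument. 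Because $\sup_y|\widehat G-G|=O_p(n^{-1/2})=o_p(\eta_n)$, with probability approaching one
\[
\partial_{\eta_n/2}G^\ast\subseteq\partial_{\eta_n}\widehat G^\ast\subseteq\partial_{2\eta_n}G^\ast.
\]
Both outer sets shrink to the limiting set as $\eta_n\downarrow0$. Equicontinuity of $K$ on the compact space $(\Rex,\rho)$ then gives the uniform convergence, which carries over to the linear AQTE combination. The delicate point is that $\widehat G(\cdot|x)$ need not be convex, so the first inclusion has to be checked directly from the level-set definition rather than from subdifferential calculus. It also requires the conjugate-level error $|\widehat G^\ast-G^\ast|$ to be $o_p(\eta_n)$, which follows from Step 3.
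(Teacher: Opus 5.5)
Your proposal is correct and follows essentially the same route as the paper: joint weak convergence of the fully interacted coefficient process by the argument of Theorem \ref{theorem:aym} (your moment check via $\|Z\|\le\sqrt{2}\|b(X)\|$ is exactly what that step requires), the continuous mapping theorem to obtain $\mathbb{B}_w(\cdot|x)$, the fixed-index conjugate argument of Theorem \ref{theorem:L-weak} applied jointly over the four pairs $(w,\tau)$, linearity of \eqref{eq:caqte-hat}, and \cite{fang2019inference} for part (b). Your sandwich $\partial_{\eta_n/2}G^{\ast}_{Y(w)|X}(\tau|x)\subseteq\partial_{\eta_n}\widehat{G}^{\ast}_{Y(w)|X}(\tau|x)\subseteq\partial_{2\eta_n}G^{\ast}_{Y(w)|X}(\tau|x)$ only spells out the derivative-estimator consistency that the paper takes from \cite{fang2019inference}, and your decision not to treat $\tau\in\{0,1\}$ separately is sound, since $\mathbb{Z}_w(\tau|x)$ is defined through the $\eta$-subdifferential limit at every $\tau\in[0,1]$ (this is safer than the paper's extra remark that $\mathbb{Z}_w(0|x)=0$ and $\mathbb{Z}_w(1|x)=-\mathbb{B}_w(+\infty|x)$, which relies on the saturated cell-mean structure and is not needed for the statement).
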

\begin{proof}
 
  Applying an argument similar to that in the proof of Theorem \ref{theorem:aym}, we can show the following  joint convergence:
  \begin{equation*}
    \sqrt{n}\big(
      \hat{\gamma}(\cdot)-\gamma_{0}(\cdot),
      \hat{\delta}(\cdot)-\delta_{0}(\cdot)
    \big)
    \rightsquigarrow
    \big(
      \mathbb{B}_{\gamma}(\cdot),
      \mathbb{B}_{\delta}(\cdot)
    \big)
    \quad \text{in } \ell^{\infty}(\Rex)^{2r}.
  \end{equation*}
  This result with the continuous mapping theorem
  yields, for almost every $x \in \X$, 
  \begin{equation*}
    \sqrt{n}\big( \widehat{G}_{Y(w) | X}(\cdot | x)
    - G_{Y(w) | X}(\cdot | x) \big)
    \rightsquigarrow
    \mathbb{B}_{w}(\cdot | x)
    \quad \text{in } \ell^{\infty}(\Rex),
  \end{equation*}
where 
$\mathbb{B}_{w}(\cdot | x) := b(x)^{\top} \mathbb{B}_{\gamma}(\cdot) +
w \cdot b(x)^{\top} \mathbb{B}_{\delta}(\cdot)$.
Applying 
Theorem \ref{theorem:L-weak} 
with the conditional function $G_{Y(w) | X}(\cdot | x)$ and the finite collection
$\{\tau_{\ell}, \tau_{u}\} \subset [0, 1]$, 
we can show that, for each
$w \in \{0, 1\}$,
\begin{equation*}
  \big \{ 
  \sqrt{n}\big( \widehat{G}_{Y(w) | X}^{\ast}(\tau | x)
  - G_{Y(w) | X}^{\ast}(\tau | x) \big)
  \big \}_{\tau \in \{\tau_\ell, \tau_u\}}
  \rightsquigarrow
  \big \{ 
  \mathbb{Z}_{w}(\tau | x)
  \big \}_{\tau \in \{\tau_\ell, \tau_u\}} ,
\end{equation*}
where 
$ \mathbb{Z}_{w}(\tau | x)
  =
  \lim_{\eta \downarrow 0}
  \sup_{y \in \partial_{\eta} G_{Y(w) | X}^{\ast}(\tau | x)}
  \{ -\mathbb{B}_{w}(y | x) \}.
$
The endpoint cases $\tau \in \{0, 1\}$ carry over from the proof of
Theorem \ref{theorem:AQTE-asymp} conditionally on $X = x$, so that
$\mathbb{Z}_{w}(0 | x) = 0$ 
and 
$\mathbb{Z}_{w}(1 | x) = -\mathbb{B}_{w}(+\infty | x)$.
The conditional AQTE estimator $\widehat{\theta}(\tau_{\ell},
\tau_{u}, x)$ in \eqref{eq:caqte-hat} is a linear combination of
the four conjugate values $\widehat{G}_{Y(w) | X}^{\ast}(\tau | x)$.
Joint weak convergence of the four objects, combined with the
continuous mapping theorem, yields part (a). Part (b) follows from
the delta-method bootstrap consistency result of
\cite{fang2019inference} applied to the linear functional in
\eqref{eq:caqte-hat}.
\end{proof}
\vspace{0.3cm}

In the above theorem, 
when $b(X)$ is bounded, as with the saturated stratum dummies in
Section \ref{sec:application}, the moment condition reduces to
$\E[Y(w)^{2}] < \infty$ for each $w \in \{0, 1\}$.
The covariance kernel of the conditional Gaussian process
$\mathbb{B}_{w}(\cdot | x)$ is determined by Theorem
\ref{theorem:aym} applied to the regression on $b(X)$ and $W b(X)$. 
  For each $w \in \{0,1\}$ and $\tau \in (0,1)$, the component limit
  $\mathbb{Z}_w(\tau | x)$ is centered Gaussian when
  $
  F_{Y(w) | X}^{-1}(\tau | x)
  =
  F_{Y(w) | X}^{-1}(\tau{\scalebox{0.7}{$+$}} | x),
  $
  so that the conjugate problem has a unique maximizer. It may be
  non-Gaussian when the inequality is strict. This case corresponds to
  a flat segment of the conditional distribution function at level
  $\tau$, or equivalently a gap in the conditional outcome support; an
  atom alone does not generate non-Gaussianity. At such probability
  indices, the standard nonparametric bootstrap is generally
  inconsistent, and the delta-method procedure in part (b) is required.

\subsection*{B.4 Marginalization over the Conditioning Variables}

The marginal integrated distribution function is obtained by
integrating the conditional function over the distribution of $X$.
Let
\begin{equation*}
  \mu_b:=\E[b(X)],
  \qquad
  \bar{b}_n:=\frac{1}{n}\sum_{i=1}^n b(X_i).
\end{equation*}
Under Assumption \ref{assump:unconfoundedness}, iterated expectations
and \eqref{eq:relu-potential} give, for each $w\in\{0,1\}$,
\begin{equation*}
  G_{Y(w)}(y)
  =
  \E\big[G_{Y(w)| X}(y | X)\big]
  =
  \mu_b^{\top}\{\gamma_0(y)+w\delta_0(y)\}.
\end{equation*}
The corresponding marginalized estimator used in Section
\ref{sec:application} is
\begin{equation}
  \label{eq:marginalized-G}
  \widehat{G}_{Y(w)}(y)
  :=
  \frac{1}{n}\sum_{i=1}^n
  \widehat{G}_{Y(w)| X}(y | X_i)
  =
  \bar{b}_n^{\top}\{\hat{\gamma}(y)+w\hat{\delta}(y)\}.
\end{equation}
Its conjugate $\widehat{G}_{Y(w)}^{\ast}$ and the marginal AQTE estimator
$\widehat{\theta}(\tau_\ell,\tau_u)$ are defined as in
\eqref{eq:aqte-hat}.

To state the limit process, let $\mathbb A_w$ denote the centered
Gaussian process that arises jointly with
$(\mathbb B_\gamma,\mathbb B_\delta)$ as the weak limit of
\begin{equation*}
  \mathbb G_n h_{w,y},
  \qquad
  h_{w,y}(X)
  :=
  b(X)^{\top}\{\gamma_0(y)+w\delta_0(y)\}-(y)_+,
  \qquad y\in\R,
\end{equation*}
where $\mathbb G_n:=\sqrt{n}(P_n-P)$. We extend
$h_{w,\cdot}$ to $\Rex$ by its limits as $y\to\pm\infty$. Define
\begin{equation*}
  \bar{\mathbb B}_w(y)
  :=
  \mu_b^{\top}\{\mathbb B_\gamma(y)+w\mathbb B_\delta(y)\}
  +\mathbb A_w(y).
\end{equation*}
Because these processes are joint limits from the same empirical
process, the covariance kernel of $\bar{\mathbb B}_w$ includes the
cross-covariance between coefficient estimation and empirical
marginalization.

For $\tau\in[0,1]$, define
\begin{equation*}
  \bar{\mathbb{Z}}_w(\tau)
  :=
  \lim_{\eta \downarrow0}
  \sup_{y\in\partial_{\eta} G_{Y(w)}^{\ast}(\tau)}
  \{-\bar{\mathbb B}_w(y)\}.
\end{equation*}

\vspace{0.3cm}
\begin{theorem}
  \label{theorem:marginalized-asymp}
  Suppose that the conditions of Theorem
  \ref{theorem:caqte-asymp} hold. Then, the following statements hold.
  \begin{enumerate}[label=\textnormal{(\alph*)}, topsep=3pt, itemsep=3pt, align=left, leftmargin=1em, labelsep=0.5em]
    \item Jointly in $w\in\{0,1\}$,
      \begin{equation*}
        \big\{
        \sqrt{n}\big(\widehat{G}_{Y(w)}(\cdot)-G_{Y(w)}(\cdot)\big)
        \big\}_{w\in\{0,1\}}
        \rightsquigarrow
        \big\{\bar{\mathbb B}_w(\cdot)\big\}_{w\in\{0,1\}}
        \quad\text{in }\ell^\infty(\Rex)^2.
      \end{equation*}
    \item For fixed $\tau_\ell,\tau_u\in[0,1]$ with
      $\tau_\ell<\tau_u$,
      \begin{equation*}
        \sqrt{n}\big(
          \widehat{\theta}(\tau_\ell,\tau_u)
          -\theta(\tau_\ell,\tau_u)
        \big)
        \rightsquigarrow
        \frac{
          \{\bar{\mathbb{Z}}_1(\tau_u)
          -
          \bar{\mathbb{Z}}_1(\tau_\ell)\}
          -\{
            \bar{\mathbb{Z}}_0(\tau_u)
            -
            \bar{\mathbb{Z}}_0(\tau_\ell)\}
        }{\tau_u-\tau_\ell}.
      \end{equation*}

    \item The delta-method bootstrap based on the sample
    $\eta_n$-subdifferentials
    $\partial_{\eta_n}\widehat{G}_{Y(w)}^{\ast}(\tau)$ for
    $(w,\tau)\in\{0,1\}\times\{\tau_{\ell},\tau_{u}\}$
    consistently estimates the limit law in part (b), provided that
    $\eta_n\to0$, $\sqrt{n}\eta_n\to\infty$, and both the coefficient
    estimators and the empirical marginalization in
    \eqref{eq:marginalized-G} are recomputed using the same bootstrap
    weights.
  \end{enumerate}
\end{theorem}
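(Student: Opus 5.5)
Part (a) follows from a linearization of the marginalized estimator. By \eqref{eq:marginalized-G} and the identity $G_{Y(w)}(y)=\mu_b^{\top}\{\gamma_0(y)+w\delta_0(y)\}$, add and subtract $\bar b_n^{\top}\{\gamma_0(y)+w\delta_0(y)\}$:
\begin{equation*}
  \sqrt{n}\big(\widehat G_{Y(w)}(y)-G_{Y(w)}(y)\big)
  =
  \bar b_n^{\top}\sqrt{n}\{(\hat\gamma-\gamma_0)(y)+w(\hat\delta-\delta_0)(y)\}
  +\sqrt{n}(\bar b_n-\mu_b)^{\top}\{\gamma_0(y)+w\delta_0(y)\}.
\end{equation*}
The first term is handled by the joint convergence of $\sqrt{n}(\hat\gamma-\gamma_0,\hat\delta-\delta_0)$ established in the proof of Theorem \ref{theorem:caqte-asymp}, together with $\bar b_n\to\mu_b$ in probability; since $\bar b_n$ does not depend on $y$, Slutsky applies uniformly in $y$. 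The second term equals $\mathbb G_n h_{w,y}$: the shift $(y)_+$ is deterministic, so it cancels under centering and only keeps the class bounded as $y\to+\infty$. The main task is then to show that the stacked class
\begin{equation*}
\{f_y \text{ (coefficient influence functions)},\ h_{0,y},\ h_{1,y}: y\in\Rex\}
\end{equation*}
is $P$-Donsker, so that all pieces converge jointly to $(\mathbb B_\gamma,\mathbb B_\delta,\mathbb A_0,\mathbb A_1)$. Continuous mapping then yields $\bar{\mathbb B}_w$ jointly in $w$, with $\rho$-continuous paths.

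For the $h_{w,y}$ component, I use the bound $|\gamma_0(y)+w\delta_0(y)|\lesssim 1+|y|$. Writing $b(X)^{\top}\{\gamma_0(y)+w\delta_0(y)\}-(y)_+$ through $G_{Y(w)|X}(y|X)-(y)_+$, this quantity is monotone-type in $y$ and bounded by $|Y(w)|$-type conditional moments. The plan is to show that, under correct specification, $h_{w,y}(X)=\E[(y-Y(w))_+-(y)_+\mid X]$. Since $y\mapsto (y-Y(w))_+-(y)_+$ is a difference of monotone functions with envelope $|Y(w)|$, a bracketing argument applies: take conditional expectations of brackets for the monotone pieces, which preserves bracket $L_2$-size by Jensen. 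This gives finite bracketing entropy over $\Rex$ and hence the Donsker property, with the endpoint limits existing by dominated convergence. I expect this step to be the main technical obstacle, since it must be done jointly with the class in Lemma A2 and requires verifying the $L_2$ continuity at $\pm\infty$.

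For part (b), I apply the argument in the proof of Theorem \ref{theorem:L-weak}(a) to the unconditional function $G_{Y(w)}$, which is convex, so $S_\eta$ coincides with $\partial_\eta G^*_{Y(w)}$. This gives joint directional-delta convergence at $\{\tau_\ell,\tau_u\}\times\{0,1\}$ from the joint limit in (a). The limit follows by continuous mapping through the linear combination in \eqref{eq:aqte-hat} and the identification in Theorem \ref{theorem:AQTE}, which applies marginally under unconfoundedness via iterated expectations.

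For part (c), I follow \cite{fang2019inference}. First, the exchangeable bootstrap is consistent for the limit in (a), conditionally on the data, provided the coefficients and $\bar b_n$ are recomputed with the same weights; this is needed because then the bootstrap linearization reproduces both terms and their cross-covariance. Second, the estimated derivative $h\mapsto\sup_{y\in\partial_{\eta_n}\widehat G^*_{Y(w)}(\tau)}\{-h(y)\}$ must satisfy their consistency condition. That condition follows from Hausdorff-type convergence of $\partial_{\eta_n}\widehat G^*$ to the limit sets, which holds because $\sup_y|\widehat G_{Y(w)}-G_{Y(w)}|=O_p(n^{-1/2})=o_p(\eta_n)$. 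Combined with uniform $\rho$-continuity of the limit paths, this gives the result, and the final linear map preserves consistency.
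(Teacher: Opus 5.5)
Your architecture matches the paper's. Your exact split, with $\bar b_n^{\top}$ multiplying the coefficient error, is equivalent to the paper's $\mu_b^{\top}$ term plus the remainder $R_{n,w}(y)$. Parts (b) and (c) also go as in the paper: Theorem~\ref{theorem:L-weak}(a) is applied jointly to the convex $G_{Y(w)}$, for which $S_\eta=\partial_\eta G^{\ast}_{Y(w)}$; then comes the linear map; then \cite{fang2019inference}, with the level sets controlled by $\sup_y|\widehat G_{Y(w)}-G_{Y(w)}|=o_p(\eta_n)$.

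You genuinely differ at the step you flag as the main obstacle, the Donsker property of $\{h_{w,y}\}$. The paper never passes to conditional expectations. Because the first element of $b(X)$ is one, $h_{w,y}(X)=b(X)^{\top}c_w(y)$ with $c_w(y):=\gamma_0(y)+w\delta_0(y)-(y)_+e_1$. The endpoint limits of the population coefficients make $c_w$ bounded and $\rho$-continuous on $\Rex$. The class therefore sits in a bounded finite-dimensional linear class with envelope $\|b(X)\|\sup_y\|c_w(y)\|$, and the Lipschitz-envelope argument of Theorem~\ref{theorem:aym} applies. More simply still, $\mathbb{G}_n h_{w,y}=\sqrt{n}(\bar b_n-\mu_b)^{\top}c_w(y)$ is a finite-dimensional CLT composed with a fixed bounded path. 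Joint convergence with the coefficient process is then immediate, and this step uses neither correct specification nor any moment of $Y(w)$.

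Your route through $h_{w,y}(X)=\E[(y-Y(w))_+-(y)_+\mid X]$ can be made to work, but it is heavier and, as written, has two soft spots.
\begin{enumerate}
\item The monotone pieces $(y-Y(w))_+$ and $(y)_+$ have no square-integrable envelope uniformly in $y\in\Rex$, so bracketing them separately gives no finite cover. Split $g_y(a):=(y-a)_+-(y)_+$ by the sign of $a$ instead: $y\mapsto g_y(a)$ is nondecreasing for $a<0$ and nonincreasing for $a\ge 0$, each piece bounded by $|a|$.
\item Transferring brackets through conditional expectations via Jensen needs $\E[Y(w)^2]<\infty$. The stated hypotheses give only $\E[Y^2]<\infty$ and $0<\Pr(W=1\mid X)<1$, which do not yield this unless the propensity score is bounded away from $0$ and $1$. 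Bracketing directly in $L_2(P_X)$ avoids the issue. However, the natural proof that $\E[|Y(w)|\mid X]$ is square integrable is again the boundedness of $c_w$, at which point the bracketing is superfluous.
\end{enumerate}

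What your approach would buy is extensibility to first stages that are not fixed finite-dimensional linear fits. For the theorem as stated, the paper's argument is shorter and needs fewer assumptions.
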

\begin{proof}
  The proof proceeds in two steps and uses the joint convergence in
the proof of Theorem \ref{theorem:caqte-asymp}.

\textbf{Step 1.}
For each $w\in\{0,1\}$ and $y\in\R$, adding and subtracting
$\mu_b^{\top}\{\hat{\gamma}(y)+w\hat{\delta}(y)\}$ in
\eqref{eq:marginalized-G} gives
\begin{align*}
  &\sqrt{n}\big\{\widehat{G}_{Y(w)}(y)-G_{Y(w)}(y)\big\}\\
  &\quad=
  \mu_b^{\top}\sqrt{n}
  \big[\{\hat{\gamma}(y)-\gamma_0(y)\}
  +w\{\hat{\delta}(y)-\delta_0(y)\}\big]\\
  &\qquad+
  \mathbb G_n\big[
    b(X)^{\top}\{\gamma_0(y)+w\delta_0(y)\}
  \big]
  +R_{n,w}(y),
\end{align*}
where
\begin{equation*}
  R_{n,w}(y)
  :=
  \sqrt{n}(\bar{b}_n-\mu_b)^{\top}
  \big[\{\hat{\gamma}(y)-\gamma_0(y)\}
  +w\{\hat{\delta}(y)-\delta_0(y)\}\big].
\end{equation*}
The coefficient-process convergence in the proof of
Theorem \ref{theorem:caqte-asymp} and
$\sqrt{n}(\bar{b}_n-\mu_b)=O_p(1)$ imply
\begin{equation*}
  \max_{w\in\{0,1\}}
  \sup_{y\in\Rex}|R_{n,w}(y)|=o_p(1).
\end{equation*}

Because $(y)_+$ is nonrandom,
\begin{equation*}
  \mathbb G_n\big[
    b(X)^{\top}\{\gamma_0(y)+w\delta_0(y)\}
  \big]
  =
  \mathbb G_n h_{w,y}.
\end{equation*}
The class $\{h_{w,y}:y\in\Rex,\ w\in\{0,1\}\}$ is Donsker under
the moment conditions of Theorem \ref{theorem:caqte-asymp}. Indeed,
because the first element of $b(X)$ equals one, the subtraction of
$(y)_+$ in the definition of $h_{w,y}$ removes the unbounded
intercept component. The
remaining coefficient path is bounded on $\Rex$ by the endpoint
argument in Appendix A. Thus, the same finite-dimensional
Lipschitz-envelope argument used in the proof of Theorem
\ref{theorem:aym} applies. Hence the empirical processes
$\mathbb G_n h_{w,\cdot}$ converge jointly with the coefficient
processes to $\mathbb A_w$, $\mathbb B_\gamma$, and
$\mathbb B_\delta$. The expansion above and the continuous mapping
theorem yield
\begin{equation*}
  \big\{
    \sqrt{n}(\widehat{G}_{Y(w)}-G_{Y(w)})
  \big\}_{w\in\{0,1\}}
  \rightsquigarrow
  \big\{\bar{\mathbb B}_w\big\}_{w\in\{0,1\}}
  \quad\text{in }\ell^\infty(\Rex)^2,
\end{equation*}
which proves part (a).

\textbf{Step 2.}
Apply Theorem \ref{theorem:L-weak} to
$\widehat{G}_{Y(w)}$ and $G_{Y(w)}$ for the finite collection
$\{\tau_\ell,\tau_u\}\subset[0,1]$. Part (a) gives, jointly in
$w\in\{0,1\}$,
\begin{equation*}
  \big\{
    \sqrt{n}\big(
      \widehat{G}_{Y(w)}^{\ast}(\tau)-G_{Y(w)}^{\ast}(\tau)
    \big)
  \big\}_{\tau\in\{\tau_\ell,\tau_u\}}
  \rightsquigarrow
  \big\{\bar{\mathbb{Z}}_w(\tau)\big\}_{\tau\in\{\tau_\ell,\tau_u\}}.
\end{equation*}
The endpoint argument is the same as in the proof of Theorem
\ref{theorem:AQTE-asymp}. Since
$\widehat{\theta}(\tau_\ell,\tau_u)$ is a linear combination of these
four conjugate estimators, the continuous mapping theorem proves
part (b).

  \textbf{(c)}
  The exchangeable-bootstrap empirical process gives the
  bootstrap analogue of the expansion in Step 1 when the same weights
  are used to recompute the coefficient estimators and $\bar{b}_n$.
  Applying the delta-method bootstrap of \cite{fang2019inference}
  using the sample $\eta_n$-subdifferentials, followed by the linear
  map in \eqref{eq:aqte-hat}, proves part (c) under the stated rate
  conditions.
\end{proof}
\vspace{0.3cm}

The same result applies when the ReLU regression uses known sampling
weights, under the corresponding weighted moment and rank conditions,
with $(\mathbb B_\gamma,\mathbb B_\delta)$ replaced by the Gaussian
limits of the weighted score processes.

\newpage
\section*{Appendix C: Cluster Sampling}

\setcounter{equation}{0}
\setcounter{theorem}{0}
\setcounter{assumption}{0}

\renewcommand{\theequation}{C.\arabic{equation}}
\renewcommand{\theHequation}{C.\arabic{equation}}
\renewcommand{\thetheorem}{C.\arabic{theorem}}
\renewcommand{\theHtheorem}{C.\arabic{theorem}}
\renewcommand{\theassumption}{C\arabic{assumption}}
\renewcommand{\theHassumption}{C\arabic{assumption}}

This appendix records the extension of the asymptotic theory of
Section \ref{sec:estimation} from i.i.d.\ sampling to cluster
sampling. The argument reduces to existing cluster central limit,
cluster Donsker, and cluster bootstrap results. We omit proofs and
give pointers to the literature.

Suppose the data are organized into $G$ clusters indexed by
$g = 1, \ldots, G$. Cluster $g$ contains $n_{g}$ observations
$\{(Y_{gi}, X_{gi})\}_{i=1}^{n_{g}}$, and the total sample size is
$N := \sum_{g=1}^{G} n_{g}$. Let $\bar{n} := \E[n_g]$ denote the
mean cluster size. Across clusters, the data are i.i.d.
Within a cluster, observations may be arbitrarily dependent. The
estimator of Section \ref{sec:estimation} is unchanged in
form,
\begin{equation*}
  \hat{\beta}(y)
  =
  \bigg( \sum_{g=1}^{G} \sum_{i=1}^{n_{g}} X_{gi} X_{gi}^{\top} \bigg)^{-1}
  \sum_{g=1}^{G} \sum_{i=1}^{n_{g}} X_{gi}\, (y - Y_{gi})_{+}.
\end{equation*}

\vspace{0.3cm}
\begin{assumption}
  \label{assump:cluster}
  The following conditions hold:
  \begin{enumerate}[label=(\alph*), noitemsep, topsep=0pt]
    \item[\textnormal{(a)}] The clusters
      $\{(Y_{g\cdot}, X_{g\cdot})\}_{g=1}^{G}$ are i.i.d.\ across
      $g$.
    \item[\textnormal{(b)}] Cluster sizes $n_{g}$ are uniformly
      bounded by a constant $n_{\max} < \infty$.
    \item[\textnormal{(c)}]
      $\E\big[\sum_{i=1}^{n_g}\{\|X_{gi}\|^4+Y_{gi}^4\}\big]<\infty$.
    \item[\textnormal{(d)}]
      $Q_{X}^{\mathrm{cl}} := \bar{n}^{-1}
      \E\big[\sum_{i=1}^{n_g}X_{gi}X_{gi}^{\top}\big]$ is positive
      definite.
    \item[\textnormal{(e)}] $G \to \infty$.
  \end{enumerate}
\end{assumption}
\vspace{0.3cm}

For each $y \in \R$, let $\beta_0(y)$ denote the pooled population
coefficient
\begin{equation*}
  \beta_0(y)
  :=
  \big(Q_X^{\mathrm{cl}}\big)^{-1}\bar n^{-1}
  \E\bigg[\sum_{i=1}^{n_g}X_{gi}(y-Y_{gi})_+\bigg].
\end{equation*}

For each $y \in \R$, define the cluster score
\begin{equation*}
  S_{g}(y) := \sum_{i=1}^{n_{g}} X_{gi}\, u_{gi}(y),
  \qquad
  u_{gi}(y) := (y - Y_{gi})_{+} - X_{gi}^{\top}\, \beta_{0}(y).
\end{equation*}
The asymptotic covariance kernel under cluster sampling is
\begin{equation*}
  \Sigma^{\mathrm{cl}}(y_{1}, y_{2})
  :=
  \frac{1}{\bar{n}}\,
  \E\big[ S_{g}(y_{1})\, S_{g}(y_{2})^{\top} \big],
\end{equation*}
which decomposes as
\begin{equation}
  \label{eq:cluster-decomp}
  \Sigma^{\mathrm{cl}}(y_{1}, y_{2})
  =
  \Sigma(y_{1}, y_{2})
  +
  \frac{1}{\bar{n}}\,
  \E\bigg[ \sum_{i \ne j} X_{gi} X_{gj}^{\top}\, u_{gi}(y_{1})\, u_{gj}(y_{2}) \bigg],
\end{equation}
where
\begin{equation*}
  \Sigma(y_1,y_2)
  :=
  \frac{1}{\bar n}\,
  \E\bigg[
    \sum_{i=1}^{n_g}
    X_{gi}X_{gi}^{\top}u_{gi}(y_1)u_{gi}(y_2)
  \bigg]
\end{equation*}
is the i.i.d.\ kernel under the size-weighted marginal distribution
of an observation. The second term in \eqref{eq:cluster-decomp}
collects the within-cluster cross-pair contributions. It vanishes
whenever $n_g=1$ almost surely, in which case
$\Sigma^{\mathrm{cl}}=\Sigma$.

\vspace{0.3cm}
\begin{theorem}
  \label{theorem:aym-cluster}
  Suppose Assumption \ref{assump:cluster} holds. Then
  \begin{equation*}
    \sqrt{N}\big( \hat{\beta}(\cdot) - \beta_{0}(\cdot) \big)
    \rightsquigarrow
    \mathbb{B}^{\mathrm{cl}}(\cdot)
    \quad \text{in } \ell^{\infty}(\Rex)^{p},
  \end{equation*}
  where $\mathbb{B}^{\mathrm{cl}}(\cdot)$ is a zero-mean Gaussian
  process with uniformly continuous sample paths and covariance
  function
  $\big(Q_{X}^{\mathrm{cl}}\big)^{-1}\,
  \Sigma^{\mathrm{cl}}(y_{1}, y_{2})\,
  \big(Q_{X}^{\mathrm{cl}}\big)^{-1}$.
\end{theorem}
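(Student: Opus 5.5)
Set $\mathcal{E}_G$ to be the event on which $\sum_{g}\sum_{i}X_{gi}X_{gi}^{\top}$ is invertible. On $\mathcal{E}_G$, for every $y\in\R$,
\begin{equation*}
  \sqrt{N}\big(\hat\beta(y)-\beta_0(y)\big)
  =\Big(\frac{1}{N}\sum_{g=1}^{G}\sum_{i=1}^{n_g}X_{gi}X_{gi}^{\top}\Big)^{-1}
  \sqrt{\frac{G}{N}}\;\frac{1}{\sqrt{G}}\sum_{g=1}^{G}S_g(y).
\end{equation*}
The identity follows from the definition of $u_{gi}(y)$. The definition of $\beta_0(y)$ gives $\E[S_g(y)]=0$ for every $y$. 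This is the cluster analogue of the representation used in the proof of Theorem \ref{theorem:aym}. The argument then proceeds exactly as there, with the cluster $g$ playing the role of the sampling unit.

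\textbf{Step 1 (design matrix and normalization).} Clusters are i.i.d.\ with bounded size and $\E[\sum_i\|X_{gi}\|^4]<\infty$ by Assumption \ref{assump:cluster}. The law of large numbers across $g$ therefore gives
$G^{-1}\sum_g\sum_iX_{gi}X_{gi}^{\top}\to \bar n\,Q_X^{\mathrm{cl}}$ and $N/G\to\bar n$ in probability. Hence $\Pr(\mathcal{E}_G)\to1$ by Assumption \ref{assump:cluster}(d). The random matrix factor converges in probability to $(Q_X^{\mathrm{cl}})^{-1}$, and $\sqrt{G/N}\to\bar n^{-1/2}$.

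\textbf{Step 2 (cluster Donsker).} View $S_g(\cdot)$ as the image of the i.i.d.\ cluster vector $Z_g=(n_g,\{(X_{gi},Y_{gi})\}_{i\le n_g})$ under the class
\begin{equation*}
  \mathcal{F}^{\mathrm{cl}}=\Big\{z\mapsto \sum_{i=1}^{n_g}X_{gi}\big((y-Y_{gi})_+-X_{gi}^{\top}\beta_0(y)\big):y\in\Rex\Big\}.
\end{equation*}
The endpoint values are defined by $L^2$ limits, as in Appendix A: $S_g(-\infty)=0$, and $S_g(+\infty)$ comes from $-(Y_{gi}-X_{gi}^\top\beta_0^{Y})$ with the pooled $\beta_0^{Y}$.

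Showing that $\mathcal{F}^{\mathrm{cl}}$ is $P$-Donsker on $(\Rex,\rho)$ is the main obstacle. It is handled by copying the argument of Lemma A2/A3 term by term. First, $y\mapsto (y-Y)_+ - y$ is $1$-Lipschitz and bounded by $|Y|$ on half-lines. Second, $\beta_0(y)$ minus its linear intercept drift is bounded and continuous on $\Rex$ by the same endpoint calculation. Third, because $n_g\le n_{\max}$, the sum over $i$ is a finite sum of at most $n_{\max}$ such Lipschitz-in-parameter terms.

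These facts give a square-integrable envelope. The envelope is of order $\sum_i(\|X_{gi}\|\,|Y_{gi}|+\|X_{gi}\|^2)$, which has finite second moment by Assumption \ref{assump:cluster}(c). After the reparametrization $y\mapsto\arctan y$ they also give bracketing numbers polynomial in $1/\varepsilon$. If the monotone-in-$y$ structure is not directly available after summation, I would split each $(y-Y_{gi})_+$ into its positive and linear parts. Each part is then a VC-type class, and finite sums of VC-type classes with a square-integrable envelope remain Donsker; this step is where care is needed. The Donsker property gives
\begin{equation*}
  G^{-1/2}\textstyle\sum_g S_g(\cdot)\rightsquigarrow\mathbb{G}^{\mathrm{cl}}
\end{equation*}
in $\ell^\infty(\Rex)^p$. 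The limit is tight and Gaussian with $\rho$-uniformly continuous paths and covariance $\E[S_g(y_1)S_g(y_2)^\top]$.

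\textbf{Step 3 (conclusion).} By Slutsky and the continuous mapping theorem, $\sqrt{N}(\hat\beta-\beta_0)\rightsquigarrow \mathbb{B}^{\mathrm{cl}}:=\bar n^{-1/2}(Q_X^{\mathrm{cl}})^{-1}\mathbb{G}^{\mathrm{cl}}$. This uses $\sup_y\|G^{-1/2}\sum_gS_g(y)\|=O_p(1)$, as in the proof of Theorem \ref{theorem:aym}. The covariance of $\mathbb{B}^{\mathrm{cl}}$ is
\begin{equation*}
  (Q_X^{\mathrm{cl}})^{-1}\,\bar n^{-1}\E[S_g(y_1)S_g(y_2)^\top]\,(Q_X^{\mathrm{cl}})^{-1}
  =(Q_X^{\mathrm{cl}})^{-1}\Sigma^{\mathrm{cl}}(y_1,y_2)(Q_X^{\mathrm{cl}})^{-1}.
\end{equation*}
The decomposition \eqref{eq:cluster-decomp} follows by expanding $S_gS_g^\top$ into diagonal and off-diagonal pairs. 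Extending the representation from $\R$ to $\Rex$ uses the same $L^2$-endpoint argument as in Theorem \ref{theorem:aym}.
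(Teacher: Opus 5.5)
Your proposal is correct and follows the paper's route. You write $\sqrt{N}(\hat\beta-\beta_0)=(N^{-1}\sum_g\sum_i X_{gi}X_{gi}^{\top})^{-1}\sqrt{G/N}\,G^{-1/2}\sum_g S_g(\cdot)$ with $\E[S_g(y)]=0$, prove a Donsker result with the whole cluster as the i.i.d.\ unit (envelope controlled by Assumption \ref{assump:cluster}(c) and $n_g\le n_{\max}$), conclude by Slutsky and continuous mapping, and obtain \eqref{eq:cluster-decomp} by expanding $\E[S_g(y_1)S_g(y_2)^{\top}]$ into diagonal and off-diagonal pairs. This is the same argument the paper points to, and your version is more explicit.

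The one step to tighten is the Donsker step. Centering by $y$ only works on $[0,\infty)$. Instead center by $(y)_+$, as the paper does for $h_{w,y}$ in Appendix B: then $(y-Y)_+-(y)_+$ is bounded by $|Y|$ on all of $\R$ and is monotone on each half-line. Obtain the polynomial bracketing numbers from that monotonicity rather than from the Lipschitz bound in $y$, because the Lipschitz bound does not transfer to the $\arctan$ metric $\rho$ near $\pm\infty$.
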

\vspace{0.3cm}

The proof of Theorem \ref{theorem:aym-cluster} replaces the
i.i.d.\ central limit theorem and the i.i.d.\ Donsker theorem in
the proof of Theorem \ref{theorem:aym} with the cluster
counterparts of \citet[\S 11.4]{kosorok2008introduction}. The
cluster Donsker condition is satisfied under the fourth-moment
condition in
Assumption \ref{assump:cluster}(c) by the same Lipschitz envelope
argument used in the i.i.d.\ case. The decomposition
\eqref{eq:cluster-decomp} follows by direct expansion of the
inner product
$\E[S_{g}(y_{1}) S_{g}(y_{2})^{\top}]$.

The cluster bootstrap implements the exchangeable bootstrap of
\citet{praestgaard1993exchangeably} with cluster-level weights. We
draw a single weight $\xi_{g}$ per cluster $g$ and assign it to every
observation within the cluster.
Equivalently, the resampling unit is the cluster rather than the
individual observation. Consistency of the cluster bootstrap for
the limit law of
$\sqrt{N}(\hat{\beta}(\cdot) - \beta_{0}(\cdot))$ follows from
\citet{sherman1997clustered} and \citet{cheng2013cluster}. The
delta-method bootstrap of Section \ref{sec:estimation} for the
conjugate functional and the bootstrap of
Section \ref{sec:treatment} for the average quantile treatment
effect both extend without modification, because the Hadamard
directional differentiability of the Legendre-Fenchel
transformation does not depend on the resampling unit. The
marginalization result in Theorem
\ref{theorem:marginalized-asymp} likewise extends when the empirical
average in \eqref{eq:marginalized-G} is recomputed using the same
cluster-level weights.

The empirical application of Section \ref{sec:application}
specializes Theorem \ref{theorem:aym-cluster} to the case
$n_{g} \in \{1, 2, 3\}$, with the cluster index $g$ tracking the
household. The bootstrap weights are drawn once per household and
assigned to all individuals in that household, in line with
\citet[\S 4.3]{chernozhukov2020generic}.

\end{document}